\newtheorem{theorem}{Theorem}[section]
\newtheorem{proposition}{Proposition}[section]
\newtheorem{lemma}{Lemma}[section]
\newtheorem{corollary}{Corollary}[section]
\newcommand{\beqa}{\begin{eqnarray}}
\newcommand{\eeqa}{\end{eqnarray}}
\newcommand{\rf}[1]{(\ref{#1})}
\newcommand{\al}{\alpha}
\newcommand{\be}{\beta}
\newcommand{\la}{\lambda}
\newcommand{\bra}[1]{\langle\,#1\,|}
\newcommand{\ket}[1]{|\,#1\,\rangle}
\def\det{\operatorname{det}}
\numberwithin{equation}{section}
\begin{document}

\begin{flushright}
YITP-SB 13--21
\end{flushright}

\bigskip \vspace{15pt}
\begin{center}
\begin{LARGE}
\vspace*{1cm}
Complete spectrum and scalar products for the open spin-1/2
XXZ quantum chains  with non-diagonal boundary terms 

\end{LARGE}

\vspace{50pt}

\begin{large}
{\bf S. Faldella}\footnote[1] {IMB, UMR 5584 du CNRS, Universit\'e
de Bourgogne, France, Simone.Faldella@u-bourgogne.fr},~~
{\bf N.~Kitanine}\footnote[2]{IMB, UMR 5584 du CNRS, Universit\'e
de Bourgogne, France, Nikolai.Kitanine@u-bourgogne.fr},~~
{\bf G. Niccoli}\footnote[3]
{%
YITP, Stony Brook University, New York 11794-3840, USA,
niccoli@max2.physics.sunysb.edu; etc}

\end{large}

\vspace{80pt}

\centerline{\bf Abstract} \vspace{1cm}
\parbox{12cm}{\small We use the quantum separation of variable (SOV) method to construct the eigenstates of the open XXZ chain with the most general boundary terms. The eigenstates in the inhomogeneous case are constructed in terms of solutions of a system of quadratic equations. This SOV representation permits us to compute scalar products and can be used to calculate form factors and correlation functions.}

\end{center}

\newpage

\section{Introduction}
Quantum integrable models and, in particular the spin chains, provide an extremely important tool for the non-perturbative analysis of quantum systems. The study of these models starting with the introduction of the Bethe ansatz \cite{Bet31} led to a number of important predictions for the equilibrium one-dimensional quantum systems. Recently it was shown that the quantum spin chains can be also used to describe several nontrivial physical phenomena in the out-of-equilibrium case. However the applications of the Bethe ansatz techniques to the systems out of equilibrium turned out to be much more complicated. 

One of the simplest examples of a quantum integrable model out of equilibrium is the XXZ spin chain with non-diagonal boundary terms. The study of this system  is necessary to tackle a wide range of open problems  from  the relaxation behavior of some classical stochastic processes, such as the ASEP \cite{EssD05,EssD06} (asymmetric simple exclusion processes), to the transport properties of the quantum spin systems \cite{SirPA09,Pro11}. In particular, it should lead to a better understanding of the diffusive spin transport in the spin chains. 

It turns out that most of these problems require computation of  correlation functions for the corresponding models. For the equilibrium case an efficient method of computation of  the correlation functions based on the quantum inverse scattering method (QISM) and the algebraic Bethe ansatz  \cite{FadST79} was first established for the periodic spin chains \cite{KitMT99,KitMT00} and then for the open spin chains with parallel boundary magnetic fields \cite{KitKMNST07,KitKMNST08}. Recent advances of this technique have been essential in order to study  the asymptotic behavior of the two-point dynamical correlation functions and structure factors \cite{KitKMST11,KitKMST12}. The most crucial steps of  this approach are the quantum inverse problem solution \cite{KitMT99, MaiT00} and the computation of the scalar products of the so-called on-shell and off-shell Bethe vectors \cite{Sla89}. For the quantum systems solvable by the algebraic Bethe ansatz these two steps lead to manageable expressions for the correlation functions. 

It is important to mention that the usual algebraic Bethe ansatz technique for open systems  \cite{Skl88} (based on the Cherednik reflection equation \cite{Che84}) cannot be applied directly to the spin chains with non-diagonal boundary terms  (which can be understood as non-parallel boundary magnetic fields). The first successful attempt to describe the spectrum of the XXZ spin chain with non-diagonal boundary terms was performed by Nepomechie  \cite{Nep02,Nep04} using the Baxter $T$-$Q$ equation. This method worked only for the roots of unity points and only if  the boundary terms satisfied a very particular constraint  relating the magnetic fields on the left and right boundaries. Similar constraint was obtained in \cite{CaoLSW03} within the framework of the generalized algebraic Bethe ansatz. Using a very particular gauge transformation (inspired by the Baxter \cite{Bax72,Bax72a} and Faddeev-Takhtadjan \cite{FadT79} approaches  for the XYZ spin chain) the authors constructed for the first time the eigenstates of the XXZ spin chain with non-parallel boundary magnetic fields.  It is important to mention that this approach provided a possibility to get rid of the requirement to consider only the spin chains at the roots of unity\footnote{In this case with boundary constrain alternative methods leading to Bethe equation formulations have been developed both in \cite{deGP04, NicRd05}, in the Temperley-Lieb algebraic framework, and in \cite{CraRS10, CraR12}, by combing coordinate Bethe ansatz and matrix ansatz.}.  A slightly different version of this technique, based on the vertex-IRF transformation, was proposed in \cite{YanZ07}. Even though  this method offered a more clear algebraic construction, it required one more additional constraint for the boundary parameters.  

 {\it A priori} this last method appeared to be  the most suitable for the study of correlation functions, as some crucial quantities could be computed explicitly  within this framework \cite{FilK11}. However it turns out that the computation of the scalar products of Bethe vectors remains an open  problem despite several unsuccessful attempts to solve it. Moreover, it seems that the boundary constraints remain essential to apply  the algebraic Bethe ansatz in the non-diagonal.
 
  Other approaches have been developed to deal with this spectral problem in this general setting. For the eigenvalue characterization in \cite{Gal08} a new functional method has been introduced leading to nested Bethe ansatz type equations similar to those presented previously in \cite{MurN05} in the generalized $T$-$Q$ formalism. In the so-called $q$-Onsager formalism the eigenstate construction has been addressed in \cite{BasK05a,Bas06} leading to a  characterization of the spectrum in terms of the roots of some characteristic polynomials.
  
An alternative way to construct eigenvalues and eigenvectors  for quantum integrable systems not solvable by the algebraic Bethe ansatz is the quantum separation of variables (SOV) introduced by Sklyanin \cite{Skl85,Skl92} for the quantum Toda chain. It  was  recently shown that besides the spectrum the scalar products of Bethe vectors can also be computed using the SOV technique leading to manageable expressions for the matrix elements of local operators for the cyclic sine-Gordon model, anti-periodic spin chains, SOS model and several other systems \cite{NicT10,GroMN12,GroN12,Nic13a,Nic13b,Nic13c}. 
 
 Recently this method was also applied to the open spin chains with non-diagonal boundary terms \cite{Nic12b}. This technique provided a possibility to construct the spectrum and to compute scalar products for a spin chain with non-diagonal boundaries under condition that one of boundary $K$ matrices is triangular. While this conditions  also represents a constraint its nature is quite different from  ones used for other approaches. In particular, it is a constraint only for one boundary (and not relating parameters for two boundaries). It is worth mentioning that for the XXX chain even the most general case can be reduced to this one due to the $SU(2)$ invariance of the bulk Hamiltonian. Therefore the SOV approach can be used to completely solve the XXX chain with the most general non-diagonal boundary fields. In fact, some results in this direction for this model already appear in \cite{FraSW08,FraGSW11}, where the functional version of the separation of variables of Sklyanin has been developed under general boundary conditions leading to the eigenvalues and wave-function characterizations. This functional approach, however, does not lead to the construction of the transfer matrix eigenstates in the original Hilbert space of the quantum chain, which instead can be obtained adapting to the rational 6-vertex case the SOV method developed in  \cite{Nic12b}. However, the $SU(2)$ symmetry is lost for the XXZ case and the remaining $U(1)$ symmetry is not sufficient (in general) to triangularize one of the $K$ matrices.

In this paper we study the quantum XXZ spin chains with the most general boundary terms. To construct the eigenstates and to compute the spectrum of the teansfer matrix we use two techniques: first we apply the  gauge transformation introduced in \cite{CaoLSW03} and then we apply the SOV approach. The gauge transformation provides a possibility to obtain a triangular  boundary $K$ matrix which leads to a SOV solution similar to that of \cite{Nic12b}.  These two steps  combined  lead to the description of the spectrum and the eigenstates for the most general open XXZ chain and to determinant representations for the scalar product of such SOV states. 

It is important to underline that this solution works if the  boundary constraint is not satisfied. 
More precisely,  there are two equivalent ways to construct the eigenstates. If the  constraint is satisfied one of them does not work. 

The main peculiarity of this case is the fact that the SOV analysis does not lead to a polynomial $Q$ operator and hence to the Bethe equations. Here the eigenstates are defined through a system of quadratic equations. These equations replace the Baxter $T$-$Q$ relation and permit us to completely characterize the eigenstates and the eigenvalues of the transfer matrix.

The main advantage of this approach is the simplicity of the final representation for the eigenstates in the SOV basis. On the other hand the main difficulty is the fact that this approach works only for the inhomogeneous case and the thermodynamic and homogeneous limits cannot be taken easily. In particular, it is not yet clear how to identify the ground state for the hermitian Hamiltonian and the stationary state for the model out of equilibrium. However we think that these problems can be solved within the framework of our approach.

The paper is organized as follows. In the Section 2 we describe the reflection algebra and construction of the commuting transfer matrices following \cite{Skl88}. In the Section 3 we introduce the gauge transformation and  establish the main properties of the gauged elements of the monodromy matrix. The SOV basis in terms of these operators is constructed in the Section 4. The main result of the paper is given in the Section 5 where we construct  the eigenstates of the  transfer matrix in the SOV framework. In the last section we give our expression for scalar products. The implications of these results  and some open problems are discussed in the conclusion. In the appendix we give the explicit form of the gauged transformed boundary matrices.   

\section{Reflection algebra and open spin-1/2 XXZ quantum chain}

We study in this paper the quantum open XXZ spin chain with the most general boundary terms: 
\begin{align}
H& =\sum_{i=1}^{\mathsf{N}-1}(\sigma _{i}^{x}\sigma _{i+1}^{x}+\sigma
_{i}^{y}\sigma _{i+1}^{y}+\cosh \eta \sigma _{i}^{z}\sigma _{i+1}^{z}) 
\notag \\
& +\frac{\sinh \eta }{\sinh \zeta _{-}}\left[ \sigma _{1}^{z}\cosh \zeta
_{-}+2\kappa _{-}(\sigma _{1}^{x}\cosh \tau _{-}+i\sigma _{1}^{y}\sinh \tau
_{-})\right]   \notag \\
& +\frac{\sinh \eta }{\sinh \zeta _{+}}[(\sigma _{\mathsf{N}}^{z}\cosh \zeta
_{+}+2\kappa _{+}(\sigma _{\mathsf{N}}^{x}\cosh \tau _{+}+i\sigma _{\mathsf{N%
}}^{y}\sinh \tau _{+}).  \label{H-XXZ-Non-D}
\end{align}%
This Hamiltonian acts in a tensor product ${\mathbb{C}^2}^{\otimes N}$, $\sigma_i^a$ are local spin $1/2$ operators (Pauli matrices), $\Delta =\cosh\eta$ is the anisotropy parameter and six complex boundary parameters $\zeta_\pm$, $\kappa_\pm$ and $\tau_\pm$ give the most general boundary interactions. 

 In the framework of
the quantum inverse scattering method the open XXZ spin chain is characterized by monodromy matrices $\mathcal{U}%
(\lambda )$ which are solutions of the following reflection equation:%
\begin{equation}
R_{12}(\lambda -\mu )\,\mathcal{U}_{1}(\lambda )\,R_{21}(\lambda +\mu -\eta
)\,\mathcal{U}_{2}(\mu )=\mathcal{U}_{2}(\mu )\,R_{12}(\lambda +\mu -\eta )\,%
\mathcal{U}_{1}(\lambda )\,R_{21}(\lambda -\mu ),  \label{bYB}
\end{equation}%
where the $R$-matrix is the 6-vertex
trigonometric solution of the Yang-Baxter equation:%
\begin{equation}
R_{12}(\lambda -\mu )R_{13}(\lambda )R_{23}(\mu )=R_{23}(\mu )R_{13}(\lambda
)R_{12}(\lambda -\mu ),
\end{equation}%
and
\begin{equation}
R_{12}(\lambda )= \left( 
\begin{array}{cccc}
\sinh (\lambda +\eta ) & 0 & 0 & 0 \\ 
0 & \sinh \lambda  & \sinh \eta  & 0 \\ 
0 & \sinh \eta  & \sinh \lambda  & 0 \\ 
0 & 0 & 0 & \sinh (\lambda +\eta )%
\end{array}%
\right) \in \text{End}(\mathcal{H}_{1}\otimes \mathcal{H}_{2}),
\end{equation}%
where $\mathcal{H}_{a}\simeq \mathbb{C}^{2}$ is a 2-dimensional linear space. The most general scalar solution $K(\lambda)\in\text{End}(\mathcal{H}_0\simeq \mathbb{C}^{2})$ of the reflection equation is the
following $2\times 2$ matrix:%
\begin{equation}
K(\lambda ;\zeta ,\kappa ,\tau )=\frac{1}{\sinh \zeta }\left( 
\begin{array}{cc}
\sinh (\lambda -\eta /2+\zeta ) & \kappa e^{\tau }\sinh (2\lambda -\eta ) \\ 
\kappa e^{-\tau }\sinh (2\lambda -\eta ) & \sinh (\zeta -\lambda +\eta /2)%
\end{array}%
\right) ,  \label{ADMFKK}
\end{equation}%
where $\zeta ,$ $\kappa $ and $\tau $ are arbitrary complex parameters.

 Starting from the scalar $K$-matrix  following Sklyanin \cite{Skl88} we can construct new solutions in 
the 2$^{\mathsf{N}}$-dimensional representation space:%
\begin{equation}
\mathcal{H}= \otimes _{n=1}^{\mathsf{N}}\mathcal{H}_{n}.
\end{equation}%
More precisely it is possible to construct two classes of solutions to the same reflection equation (\ref{bYB}). First we
define:%
\begin{equation}
K_{-}(\lambda )=K(\lambda ;\zeta _{-},\kappa _{-},\tau _{-}),\text{ \ \ \ \ }%
K_{+}(\lambda )=K(\lambda +\eta ;\zeta _{+},\kappa _{+},\tau _{+}),
\end{equation}%
where $\zeta _{\pm },\kappa _{\pm },\tau _{\pm }$ are arbitrary complex
parameters. Then we construct  the (bulk) inhomogeneous  ``left to right" monodromy matrix
\begin{equation}
M_{0}(\lambda ) =R_{0\mathsf{N}}(\lambda -\xi _{\mathsf{N}}-\eta /2)\dots
R_{01}(\lambda -\xi _{1}-\eta /2)=\left( 
\begin{array}{cc}
A(\lambda ) & B(\lambda ) \\ 
C(\lambda ) & D(\lambda )\end{array}\right).\label{T} \end{equation}
In a similar way we can construct the ``right to left" monodromy matrix
\begin{equation}
\hat{M}(\lambda ) =(-1)^{\mathsf{N}}\,\sigma _{0}^{y}\,M^{t_{0}}(-\lambda
)\,\sigma _{0}^{y}.%
 \label{Mhat}
\end{equation}%
The inhomogeneity parameters $\xi_j$ are arbitrary complex numbers, in this paper we need to keep them  generic, the physical case corresponds to the homogeneous limit\footnote{%
Here the homogeneous
limit corresponds to $\xi _{m}=0$ for $m=1,\ldots ,\mathsf{N}$.}. 
$M_{0}(\lambda )\in $ End$(\mathcal{H}_{0}\otimes \mathcal{H})$,
satisfies the Yang-Baxter relation%
\begin{equation}
R_{12}(\lambda -\mu )M_{1}(\lambda )M_{2}(\mu )=M_{2}(\mu )M_{1}(\lambda
)R_{12}(\lambda -\mu ).  \label{YB}
\end{equation}%
Then we can define the boundary monodromy matrices $\mathcal{U}_{\pm }(\lambda
)\in $ End$(\mathcal{H}_{0}\otimes \mathcal{H})$ as  follows:%
\begin{eqnarray}
\mathcal{U}_{-}(\lambda ) &=&M_{0}(\lambda )K_{-}(\lambda )\hat{M}%
_{0}(\lambda )=\left( 
\begin{array}{cc}
\mathcal{A}_{-}(\lambda ) & \mathcal{B}_{-}(\lambda ) \\ 
\mathcal{C}_{-}(\lambda ) & \mathcal{D}_{-}(\lambda )%
\end{array}%
\right) , \\
\mathcal{U}_{+}^{t_{0}}(\lambda ) &=&M_{0}^{t_{0}}(\lambda
)K_{+}^{t_{0}}(\lambda )\hat{M}_{0}^{t_{0}}(\lambda )=\left( 
\begin{array}{cc}
\mathcal{A}_{+}(\lambda ) & \mathcal{C}_{+}(\lambda ) \\ 
\mathcal{B}_{+}(\lambda ) & \mathcal{D}_{+}(\lambda )%
\end{array}%
\right) .
\end{eqnarray}%
$\mathcal{U}_{-}(\lambda )$ and $\mathcal{V}_{+}(\lambda )= \mathcal{U}%
_{+}^{t_{0}}(-\lambda )$ are two classes of solutions of the reflection
equation (\ref{bYB}). It is shown by Sklyanin \cite{Skl88} that
from this couple of monodromy matrices one can define the following
commuting family of transfer matrices:%
\begin{equation}
\mathcal{T}(\lambda )= \text{tr}_{0}\{K_{+}(\lambda )\,M(\lambda
)\,K_{-}(\lambda )\hat{M}(\lambda )\}=\text{tr}_{0}\{K_{+}(\lambda )\mathcal{%
U}_{-}(\lambda )\}=\text{tr}_{0}\{K_{-}(\lambda )\mathcal{U}_{+}(\lambda
)\}\in \text{\thinspace End}(\mathcal{H}).  \label{transfer}
\end{equation}%
In this paper we characterize the complete spectrum (eigenvalue $\&$
eigenstates) of this transfer matrix 
for the most general class of
non-diagonal boundary $K$-matrices in this way generalizing the results of 
\cite{Nic12b}. Our analysis applies also to the open spin-1/2 XXZ quantum chain
with the most general non-diagonal boundary terms (\ref{H-XXZ-Non-D}), as this
Hamiltonian is obtained in the homogeneous limit by the following derivative
of the transfer matrix (\ref{transfer}):%
\begin{equation}
H=\frac{2(\sinh \eta )^{1-2\mathsf{N}}}{\text{tr}\{K_{+}(\eta /2)\}\,\text{tr%
}\{K_{-}(\eta /2)\}}\frac{d}{d\lambda }\mathcal{T}(\lambda )_{\,\vrule %
height13ptdepth1pt\>{\lambda =\eta /2}\!}+\text{constant.}  \label{Ht}
\end{equation}
 If the boundary $K$ matrices are diagonal ($\kappa_\pm=0$) the eigenstates of the transfer matrix can be constructed using the algebraic Bethe ansatz \cite{Skl88}. In the non-diagonal case it turns out to be impossible as the ferromagnetic state (for example with all the spins up) is no more the highest weight vector for the reflection algebra. Several methods were applied to overcome this difficulty. In particular, in  \cite{CaoLSW03} the authors proposed a gauge transformation to diagonalize one of the $K$ matrices and to make the second one triangular. Such gauge transformation exists only if the boundary parameters satisfy a {\it boundary constraint} \cite{Nep02}.  Our goal here is to construct the eigenstates for the most general values of the boundary parameters. 

\subsection{First fundamental properties}


 Here we establish some properties of the generators $\mathcal{A}_{-}(\lambda ),$ $%
\mathcal{B}_{-}(\lambda ),$ $\mathcal{C}_{-}(\lambda )$ and $\mathcal{D}_{-}(\lambda )$ of the reflection algebra %
 which play a
fundamental role in the solution of the transfer matrix $\mathcal{T}(\lambda
)$ spectral problem. 

First of all their commutation relations follow from the reflection equation (\ref{bYB}). Using these relations it was shown by Sklyanin that the {\it quantum determinant} 
\begin{align}
\frac{\det_{q}\mathcal{U}_{-}(\lambda )}{\sinh (2\lambda -2\eta )}& = 
\mathcal{A}_{-}(\epsilon \lambda +\eta /2)\mathcal{A}_{-}(\eta /2-\epsilon
\lambda )+\mathcal{B}_{-}(\epsilon \lambda +\eta /2)\mathcal{C}_{-}(\eta
/2-\epsilon \lambda )  \label{q-detU_1} \\
& =\mathcal{D}_{-}(\epsilon \lambda +\eta /2)\mathcal{D}_{-}(\eta
/2-\epsilon \lambda )+\mathcal{C}_{-}(\epsilon \lambda +\eta /2)\mathcal{B}%
_{-}(\eta /2-\epsilon \lambda ),  \label{q-detU_2}
\end{align}%
where $\epsilon =\pm 1$,
is a central element of the reflection algebra
\begin{equation}
\lbrack \det_{q}\mathcal{U}_{-}(\lambda ),\mathcal{U}_{-}(\mu )]=0.
\end{equation}
The quantum determinant  admits the following explicit expressions:%
\begin{eqnarray}
\det_{q}\mathcal{U}_{-}(\lambda ) &=&\det_{q}K_{-}(\lambda
)\det_{q}M_{0}(\lambda )\det_{q}M_{0}(-\lambda )  \label{q-detU_-exp} \\
&=&\sinh (2\lambda -2\eta )\mathsf{A}_{-}(\lambda +\eta /2)\mathsf{A}%
_{-}(-\lambda +\eta /2),
\end{eqnarray}%
where \begin{equation}
\det_{q}M(\lambda )=a(\lambda +\eta /2)d(\lambda -\eta /2),
\label{bulk-q-det}\end{equation} is the bulk quantum determinant and%
\begin{equation}
\det_{q}K_{\pm}(\lambda )=\mp\sinh (2\lambda \pm2\eta )g_{\pm}(\lambda +\eta
/2)g_{\pm}(-\lambda +\eta /2).
\end{equation}%
 We have used here the following notations:%
\begin{equation}\label{eigenA}
\mathsf{A}_{-}(\lambda )= g_{-}(\lambda )a(\lambda )d(-\lambda ),\text{
\ }d(\lambda )= a(\lambda -\eta ),\text{ \ \ }a(\lambda )=
\prod_{n=1}^{\mathsf{N}}\sinh (\lambda -\xi _{n}+\eta /2),
\end{equation}%
\begin{equation}
g_{\pm }(\lambda )= \frac{\sinh (\lambda +\alpha _{\pm }- \eta
/2)\cosh (\lambda +\beta _{\pm }- \eta /2)}{\sinh \alpha _{\pm }\cosh
\beta _{\pm }},  \label{g_PM}
\end{equation}%
where $\alpha _{\pm }$\ and $\beta _{\pm }$ are defined in terms of the
boundary parameters by:%
\begin{equation}
\sinh \alpha _{\pm }\cosh \beta _{\pm }= \frac{\sinh \zeta _{\pm }}{%
2\kappa _{\pm }},\text{ \ \ \ \ \ }\cosh \alpha _{\pm }\sinh \beta _{\pm
}= \frac{\cosh \zeta _{\pm }}{2\kappa _{\pm }}.  \label{alfa-beta}
\end{equation}

\begin{proposition}
The inverse monodromy matrix can be expressed in terms of the quantum determinant as  follows
\begin{equation}
\mathcal{U}_{-}^{-1}(\lambda +\eta /2)=\frac{\sinh (2\lambda -2\eta )}{%
\det_{q}\mathcal{U}_{-}(\lambda )}\mathcal{U}_{-}(\eta /2-\lambda ).
\label{Inverse}
\end{equation}
\end{proposition}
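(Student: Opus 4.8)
The plan is to recognize that (\ref{Inverse}) is equivalent to the single operator–matrix identity $\mathcal{U}_-(\lambda+\eta/2)\,\mathcal{U}_-(\eta/2-\lambda)=\mathcal{Q}(\lambda)\,\mathbb{I}_0$, where $\mathbb{I}_0$ is the identity in the auxiliary space and $\mathcal{Q}(\lambda):=\det_q\mathcal{U}_-(\lambda)/\sinh(2\lambda-2\eta)$. Since the quantum determinant is central and, by (\ref{q-detU_-exp}), is in fact a scalar function of $\lambda$, the factor $1/\mathcal{Q}(\lambda)$ commutes with every entry of the monodromy matrix; hence left and right inverses coincide and it suffices to verify this one product. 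I would first expand the $2\times2$ product in the auxiliary space into its four operator entries.

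The two diagonal entries are then immediate. The $(1,1)$ entry $\mathcal{A}_-(\lambda+\eta/2)\mathcal{A}_-(\eta/2-\lambda)+\mathcal{B}_-(\lambda+\eta/2)\mathcal{C}_-(\eta/2-\lambda)$ is exactly the right-hand side of (\ref{q-detU_1}) at $\epsilon=+1$, and the $(2,2)$ entry matches (\ref{q-detU_2}) at $\epsilon=+1$, so both equal $\mathcal{Q}(\lambda)$. The same relations at $\epsilon=-1$ show that the reversed product $\mathcal{U}_-(\eta/2-\lambda)\mathcal{U}_-(\lambda+\eta/2)$ likewise has both diagonal entries equal to $\mathcal{Q}(\lambda)$; I will use this second fact below.

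The real content is the vanishing of the off-diagonal entries, and here I would specialize the reflection equation (\ref{bYB}) to $\lambda\mapsto\lambda+\eta/2$, $\mu\mapsto\eta/2-\lambda$, so that the inner spectral parameter $\lambda+\mu-\eta$ becomes $0$ and $R(0)=\sinh\eta\,P$ collapses to the permutation. Using $P\,\mathcal{U}_1(\nu)=\mathcal{U}_2(\nu)\,P$, $P\,\mathcal{U}_2(\nu)=\mathcal{U}_1(\nu)\,P$ and $P\,R_{21}=R_{12}\,P$, and cancelling the overall $P$, equation (\ref{bYB}) reduces to the intertwining relation $R_{12}(2\lambda)\,W_1=\tilde W_2\,R_{12}(2\lambda)$, where $W:=\mathcal{U}_-(\lambda+\eta/2)\mathcal{U}_-(\eta/2-\lambda)$ acts in the first auxiliary space and $\tilde W:=\mathcal{U}_-(\eta/2-\lambda)\mathcal{U}_-(\lambda+\eta/2)$ in the second.

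Finally, writing $W=\mathcal{Q}(\lambda)\,\mathbb{I}+N$ and $\tilde W=\mathcal{Q}(\lambda)\,\mathbb{I}+\tilde N$ with $N,\tilde N$ purely off-diagonal (legitimate by the diagonal computation above), and using that $\mathcal{Q}(\lambda)$ is scalar and so commutes through $R_{12}(2\lambda)$, the intertwiner collapses to $R_{12}(2\lambda)\,N_1=\tilde N_2\,R_{12}(2\lambda)$. Reading this $4\times4$ identity entrywise and using that the Boltzmann weights $\sinh 2\lambda$ and $\sinh(2\lambda+\eta)$ are nonzero for generic $\lambda$ forces each off-diagonal operator to vanish, whence $W=\mathcal{Q}(\lambda)\,\mathbb{I}_0$, which is (\ref{Inverse}) (and $\tilde W=\mathcal{Q}(\lambda)\,\mathbb{I}_0$ confirms the other-sided inverse). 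I expect the main obstacle to be the third step: carrying out the degeneration of (\ref{bYB}) at the crossing point with enough care in the permutation bookkeeping that the resulting purely off-diagonal intertwiner admits only the trivial solution.
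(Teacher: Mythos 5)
Your proof is correct, and it takes a genuinely different route from the paper's. The paper never leaves the concrete realization of $\mathcal{U}_{-}$: it writes $\mathcal{U}_{-}=M_{0}K_{-}\hat{M}_{0}$, inverts the scalar boundary matrix via \rf{K-inverse}, and uses \rf{M-inverse} so that the product $\mathcal{U}_{-}(\eta/2+\lambda)\,\mathcal{U}_{-}(\eta/2-\lambda)$ telescopes into the product of scalar factors appearing in \rf{q-detU_-exp}. You instead argue intrinsically in the reflection algebra: the diagonal entries of $W=\mathcal{U}_{-}(\lambda+\eta/2)\,\mathcal{U}_{-}(\eta/2-\lambda)$ and of $\tilde{W}=\mathcal{U}_{-}(\eta/2-\lambda)\,\mathcal{U}_{-}(\lambda+\eta/2)$ are identified with $\det_{q}\mathcal{U}_{-}(\lambda)/\sinh(2\lambda-2\eta)$ through Sklyanin's formulas \rf{q-detU_1}--\rf{q-detU_2} at $\epsilon=+1$ and $\epsilon=-1$ respectively, and the off-diagonal entries are killed by specializing \rf{bYB} at the crossing point $\lambda+\mu=\eta$, where both internal $R$-matrices degenerate to $\sinh\eta\,P$. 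I have checked the two steps you flagged as delicate: the permutation bookkeeping does yield $R_{12}(2\lambda)\,W_{1}=\tilde{W}_{2}\,R_{12}(2\lambda)$, and the entrywise reading of $R_{12}(2\lambda)\,N_{1}=\tilde{N}_{2}\,R_{12}(2\lambda)$ closes as you expect: the $(1,2)$ and $(3,1)$ entries force the upper-right entry of $\tilde{N}$ and the lower-left entry of $N$ to vanish (using $\sinh 2\lambda\neq 0$), after which the $(1,3)$ and $(2,1)$ entries kill the remaining two operators (using $\sinh(2\lambda+\eta)\neq 0$); this holds for generic $\lambda$ and hence identically, since all matrix elements are finite sums of exponentials in $\lambda$. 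As for what each approach buys: the paper's computation is shorter given the machinery already in place, but its ingredients \rf{K-inverse}, \rf{M-inverse} and \rf{q-detU_-exp} are tied to this particular dressed scalar solution, whereas your argument uses only the exchange relations \rf{bYB} and Sklyanin's determinant formulas, so it proves $\mathcal{U}_{-}(\lambda+\eta/2)\,\mathcal{U}_{-}(\eta/2-\lambda)=\mathcal{U}_{-}(\eta/2-\lambda)\,\mathcal{U}_{-}(\lambda+\eta/2)=\big(\det_{q}\mathcal{U}_{-}(\lambda)/\sinh(2\lambda-2\eta)\big)\,\mathbb{I}_{0}$ for an arbitrary operator solution of this reflection equation, independently of any bulk--boundary factorization, and it delivers the two-sided inverse in one stroke; the explicit formula \rf{q-detU_-exp} is then only needed to know that the quantum determinant is an invertible c-number so that one may divide by it. One minor simplification of your write-up: $\mathcal{Q}(\lambda)$ commutes with $R_{12}(2\lambda)$ for the trivial reason that they act on different tensor factors (the $R$-matrix is a c-number matrix on the auxiliary spaces), so the scalarity of the quantum determinant is really only invoked at the very last step.
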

\begin{proof}
We  first observe   that the following
identity holds
\begin{equation}
K_{-}^{-1}(\lambda +\eta /2)=\frac{\sinh(2\lambda -2\eta )}{\det_{q}K_{-}(%
\lambda )}K_{-}(\eta /2-\lambda ),  \label{K-inverse}
\end{equation}%
then the identity (\ref{Inverse}) follows by computing the matrix products $\mathcal{U}_{-}(\eta /2+\lambda )\mathcal{U}_{-}(\eta /2-\lambda )$ using \rf{q-detU_-exp}, \rf{K-inverse} and the following identities:
\begin{equation}
\hat{M}(\pm \lambda +\eta /2)=(-1)^{\mathsf{N}}\det_{q}M_{0}(\mp \lambda )M^{-1}(\mp \lambda +\eta /2).
\label{M-inverse}
\end{equation}\end{proof}We will also use the following properties of the generators 

\begin{proposition}[Prop. 2.1 of \protect\cite{Nic12b}]\label{la-la}
The generator families $\mathcal{A}_{-}(\lambda )$ and $\mathcal{D}%
_{-}(\lambda )$ are related by the following parity relation:%
\begin{eqnarray}
\mathcal{A}_{-}(\lambda ) &=&\frac{\sinh (2\lambda -\eta )}{\sinh 2\lambda }%
\mathcal{D}_{-}(-\lambda )+\frac{\sinh \eta }{\sinh 2\lambda }\mathcal{D}%
_{-}(\lambda ),  \label{Sym-A-D-} \\
\mathcal{D}_{-}(\lambda ) &=&\frac{\sinh (2\lambda -\eta )}{\sinh 2\lambda }%
\mathcal{A}_{-}(-\lambda )+\frac{\sinh \eta }{\sinh 2\lambda }\mathcal{A}%
_{-}(\lambda ),
\end{eqnarray}%
while for the other two families the following parity relations hold:%
\begin{equation}
\mathcal{B}_{-}(-\lambda )=-\frac{\sinh (2\lambda +\eta )}{\sinh (2\lambda
-\eta )}\mathcal{B}_{-}(\lambda )\text{ },\text{ \ }\mathcal{C}_{-}(-\lambda
)=-\frac{\sinh (2\lambda +\eta )}{\sinh (2\lambda -\eta )}\mathcal{C}%
_{-}(\lambda ).  \label{Sym-B-C-}
\end{equation}%
\end{proposition}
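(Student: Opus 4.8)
The plan is to prove the two families of parity relations simultaneously, by transporting the reflection $\lambda\to-\lambda$ from the scalar boundary matrix through the bulk dressing. First I would settle the base case $\mathsf N=0$, where $\mathcal U_-(\lambda)=K_-(\lambda)$: using \rf{ADMFKK} and the $\sinh$ addition formulas one checks directly that the entries of $K_-$ already obey \rf{Sym-A-D-} and \rf{Sym-B-C-}. The diagonal relation reduces to the telescoping identity $\sinh 2\lambda\,\sinh(\zeta+\lambda-\eta/2)=\sinh(2\lambda-\eta)\sinh(\zeta+\lambda+\eta/2)+\sinh\eta\,\sinh(\zeta-\lambda+\eta/2)$, and the off-diagonal one is immediate from $\sinh(-2\lambda-\eta)=-\sinh(2\lambda+\eta)$. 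Compactly, these are encoded in the crossing identity $\sigma^y K_-^{t}(\lambda)\sigma^y=K_-(\eta-\lambda)=\operatorname{tr}[K_-(\lambda)]\,I-K_-(\lambda)$.

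Next I would reduce the general boundary generators to bulk ones. From \rf{Mhat} together with $(\sigma^y_0)^{t_0}=-\sigma^y_0$ one gets the purely algebraic identity $\hat M(\lambda)=(-1)^{\mathsf N}\operatorname{adj}M(-\lambda)$, that is
\begin{equation}
\hat A(\lambda)=(-1)^{\mathsf N}D(-\lambda),\quad \hat D(\lambda)=(-1)^{\mathsf N}A(-\lambda),\quad \hat B(\lambda)=-(-1)^{\mathsf N}B(-\lambda),\quad \hat C(\lambda)=-(-1)^{\mathsf N}C(-\lambda).
\end{equation}
Inserting this into $\mathcal U_-(\lambda)=M(\lambda)K_-(\lambda)\hat M(\lambda)$ writes each of $\mathcal A_-,\mathcal B_-,\mathcal C_-,\mathcal D_-$ at argument $\lambda$ as a fixed linear combination, with coefficients the scalar entries $a,b,c,d$ of $K_-(\lambda)$, of the bulk bilinears $X(\lambda)Y(-\lambda)$, $X,Y\in\{A,B,C,D\}$; for example $(-1)^{\mathsf N}\mathcal B_-(\lambda)=b\,A(\lambda)A(-\lambda)-a\,A(\lambda)B(-\lambda)+d\,B(\lambda)A(-\lambda)-c\,B(\lambda)B(-\lambda)$.

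The key step is to compare this with the same expansion of $\mathcal U_-(-\lambda)$, where the identical bilinears occur in the opposite order and carry $K_-(-\lambda)$ coefficients. I would bring everything to a common ordering with the bulk Yang--Baxter relations \rf{YB} specialised to $(\lambda,-\lambda)$; since $\lambda-\mu=2\lambda$ there, the structure constants produced are precisely $\sinh(2\lambda\pm\eta)/\sinh 2\lambda$ and $\sinh\eta/\sinh 2\lambda$, the relevant relation being $A(\lambda)B(-\lambda)=\tfrac{\sinh(2\lambda+\eta)}{\sinh 2\lambda}B(-\lambda)A(\lambda)-\tfrac{\sinh\eta}{\sinh 2\lambda}B(\lambda)A(-\lambda)$, while $A(\lambda)A(-\lambda)$ and $B(\lambda)B(-\lambda)$ are already order-symmetric. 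Feeding in the scalar $K_-$ parity relations of the base case to turn the $K_-(-\lambda)$ coefficients into $K_-(\lambda)$ ones, the combinations $\sinh 2\lambda\,\mathcal A_-(\lambda)-\sinh(2\lambda-\eta)\mathcal D_-(-\lambda)-\sinh\eta\,\mathcal D_-(\lambda)$ and $\sinh(2\lambda-\eta)\mathcal B_-(-\lambda)+\sinh(2\lambda+\eta)\mathcal B_-(\lambda)$ should then cancel bilinear by bilinear.

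The hard part is exactly this final matching: one must verify that the $R$-matrix structure constants from \rf{YB} conspire with the scalar $K_-$ parity relations so that every bulk bilinear drops out, and it is here that the $\sinh(2\lambda\pm\eta)$ prefactors of \rf{Sym-A-D-} and \rf{Sym-B-C-} are generated. This is also the only place where the full Yang--Baxter structure is needed rather than mere crossing: crossing alone gives $\sigma^y_0\mathcal U_-^{t_0}(\lambda)\sigma^y_0=M(-\lambda)K_-(\eta-\lambda)\hat M(-\lambda)$, whose middle argument $\eta-\lambda$ cannot be matched to the $-\lambda$ of $\mathcal U_-(-\lambda)$ without the bulk commutation relations. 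In parallel I would try a cleaner induction on $\mathsf N$, whose inductive step reduces to checking that a single reflection dressing preserves \rf{Sym-A-D-} and \rf{Sym-B-C-}, again governed by the same $(\lambda,-\lambda)$ degeneration of $R$.
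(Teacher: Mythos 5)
The paper never actually proves this proposition: it is imported, statement and all, as Prop.~2.1 of \cite{Nic12b}, so the only meaningful comparison is with the standard derivation in that reference, which is precisely the computation you outline --- expand $\mathcal{U}_-(\pm\lambda)=M(\pm\lambda)K_-(\pm\lambda)\hat M(\pm\lambda)$ into bulk bilinears via $\hat M(\lambda)=(-1)^{\mathsf N}\operatorname{adj}M(-\lambda)$, reorder with the bulk relations \rf{YB} specialised at $\mu=-\lambda$, and convert the $K_-(-\lambda)$ coefficients into $K_-(\lambda)$ ones with the scalar ($\mathsf N=0$) parity relations. Your supporting ingredients check out: the telescoping identity for the base case is correct, the expansion of $(-1)^{\mathsf N}\mathcal B_-(\lambda)$ is correct, the formula $\sigma^y_0\mathcal U_-^{t_0}(\lambda)\sigma^y_0=M(-\lambda)K_-(\eta-\lambda)\hat M(-\lambda)$ is correct, and your remark that crossing alone cannot do the job (because it produces the argument $\eta-\lambda$, not $-\lambda$) is exactly the right reason why the bulk Yang--Baxter algebra must enter. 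Also rightly, the result cannot follow from the reflection equation in the abstract, since the parity relations are not invariant under $\mathcal U_-\to f(\lambda)\mathcal U_-$; they are properties of this particular dressed realization, which is what your argument uses.

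The one formula you commit to at the crucial step is, however, wrong in the paper's conventions, and with it the announced ``bilinear by bilinear'' cancellation does not close. The $(11),(12)$ entry of \rf{YB} gives $\sinh(\lambda-\mu+\eta)\,A(\lambda)B(\mu)=\sinh(\lambda-\mu)\,B(\mu)A(\lambda)+\sinh\eta\,A(\mu)B(\lambda)$, and eliminating $A(\mu)B(\lambda)$ yields, at $\mu=-\lambda$,
\begin{equation*}
A(\lambda)B(-\lambda)=\frac{\sinh(2\lambda-\eta)}{\sinh 2\lambda}\,B(-\lambda)A(\lambda)+\frac{\sinh\eta}{\sinh 2\lambda}\,B(\lambda)A(-\lambda),
\end{equation*}
i.e.\ both your shift $\sinh(2\lambda+\eta)$ and your minus sign are off. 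This is not cosmetic: with your coefficients, the coefficient of $A(\lambda)B(-\lambda)$ in $(-1)^{\mathsf N}\mathcal B_-(-\lambda)$ comes out as $\bigl[\sinh 2\lambda\,K_{22}(-\lambda)-\sinh\eta\,K_{11}(-\lambda)\bigr]/\sinh(2\lambda+\eta)$, which after inserting the scalar parity relations still contains a term in $K_{22}(\lambda)$ and therefore can never reduce to the required $\frac{\sinh(2\lambda+\eta)}{\sinh(2\lambda-\eta)}K_{11}(\lambda)$. With the corrected relation (and its $\lambda\to-\lambda$ image) the same coefficient becomes $\bigl[\sinh\eta\,K_{11}(-\lambda)+\sinh 2\lambda\,K_{22}(-\lambda)\bigr]/\sinh(2\lambda-\eta)$, and two uses of the scalar relations \rf{Sym-A-D-} collapse it exactly to $\frac{\sinh(2\lambda+\eta)}{\sinh(2\lambda-\eta)}K_{11}(\lambda)$; the $A(\lambda)A(-\lambda)$, $B(\lambda)B(-\lambda)$ and $B(\lambda)A(-\lambda)$ coefficients then match as well, proving \rf{Sym-B-C-}, and \rf{Sym-A-D-} follows from the analogous (longer) reordering of the $AD$, $DA$, $BC$, $CB$ bilinears. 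So: right method, and it does succeed, but only after the structure constants are fixed. A last small caution: your aside that the scalar relations are ``encoded in'' crossing overstates --- crossing relates $K_-(\lambda)$ to $K_-(\eta-\lambda)$, and the diagonal parity relation additionally needs the degree-one trigonometric form of the entries, i.e.\ precisely the addition identity you verified directly.
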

It is important to mention  that similar statements hold for the reflection
algebra generated by $\mathcal{U}_{+}(\lambda )$. In fact, they are simply
consequences of the previous proposition  taking into account that $%
\mathcal{U}_{+}^{t_0}(-\lambda )$ satisfies the same reflection equation of $%
\mathcal{U}_{-}(\lambda )$.

For some particular choices of boundary parameters the transfer matrix is hermitian, more precisely:
\begin{proposition}[Prop. 2.3 of \protect\cite{Nic12b}]
The monodromy matrix \thinspace $\mathcal{U}_{\pm }(\lambda )$ satisfy the
following transformation properties under Hermitian conjugation:\newline
\textsf{I)} Under the condition $\eta \in i\mathbb{R}$ (massless regime), it
holds: 
\begin{equation}
\mathcal{U}_{\pm }(\lambda )^{\dagger }=\left[ \mathcal{U}_{\pm }(-\lambda
^{\ast })\right] ^{t_{0}},  \label{ml-Hermitian_U}
\end{equation}%
\ \ \ \ \ for $\{i\tau _{\pm },i\kappa _{\pm },i\zeta _{\pm },\xi
_{1},...,\xi _{\mathsf{N}}\}\in \mathbb{R}^{\mathsf{N}+3}.$\newline
\textsf{II)} Under the condition $\eta \in \mathbb{R}$ (massive regime), it
holds: 
\begin{equation}
\mathcal{U}_{\pm }(\lambda )^{\dagger }=\left[ \mathcal{U}_{\pm }(\lambda
^{\ast })\right] ^{t_{0}},  \label{m-Hermitian_U}
\end{equation}%
\ \ \ \ \ for $\{\tau _{\pm },\kappa _{\pm },\zeta _{\pm },i\xi
_{1},...,i\xi _{\mathsf{N}}\}\in \mathbb{R}^{\mathsf{N}+3}.$\newline
Under the same conditions on the parameters of the representation it holds: 
\begin{equation}
\mathcal{T}(\lambda )^{\dagger }=\mathcal{T}(\lambda ^{\ast }),
\label{I-Hermitian_T}
\end{equation}%
i.e. $\mathcal{T}(\lambda )$ defines a one-parameter\ family of normal
operators which are self-adjoint both for $\lambda $ real and imaginary.
\end{proposition}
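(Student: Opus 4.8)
The plan is to reduce everything to the two elementary building blocks out of which $\mathcal{U}_\pm$ and $\mathcal{T}$ are assembled — the $R$-matrix and the scalar $K$-matrix — to establish how each behaves under Hermitian conjugation in the two regimes, and then to propagate those identities through the definitions $M_0(\lambda)=\prod_n R_{0n}(\lambda-\xi_n-\eta/2)$, $\hat M_0(\lambda)=(-1)^{\mathsf N}\sigma_0^y M_0^{t_0}(-\lambda)\sigma_0^y$ and $\mathcal{U}_-(\lambda)=M_0(\lambda)K_-(\lambda)\hat M_0(\lambda)$. First I would record the conjugation of a single $R$-matrix: since the $4\times4$ matrix $R(\lambda)$ is symmetric, $R(\lambda)^\dagger=\overline{R(\lambda)}$, and using $\overline{\sinh z}=\sinh\bar z$ together with the reality prescriptions one checks by inspection that
\[
R(\lambda)^\dagger=R(\lambda^\ast)\quad(\eta\in\mathbb{R}),\qquad
R(\lambda)^\dagger=-\,R(-\lambda^\ast)\quad(\eta\in i\mathbb{R}).
\]
The same computation for $K(\lambda;\zeta,\kappa,\tau)$ shows that its entrywise conjugate reproduces $K(\lambda^\ast)$ in the massive case and $K(-\lambda^\ast)$ in the massless case, each up to the transposition automatically generated by the exchange $e^{\tau}\leftrightarrow e^{-\tau}$ under conjugation (and, in the massless regime, up to a diagonal $\sigma^z$ twist and the overall sign from $\overline{\sinh\zeta}=-\sinh\zeta$). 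The parameter scalings in I) and II) — $i\zeta_\pm,i\kappa_\pm,i\tau_\pm$ real versus $\zeta_\pm,\kappa_\pm,\tau_\pm$ real, and correspondingly $\xi_n$ real versus imaginary — are dictated precisely by the requirement that these identities close back onto $R$ and $K$ at the reflected spectral parameter.

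Next I would assemble these into the statement for $\mathcal{U}_-$. Taking $\dagger$ reverses the order of the $R$-factors in $M_0(\lambda)$ and conjugates each; using the elementary identity above, the reversed product with reflected spectral parameter is brought back to the form of $M_0(\mp\lambda^\ast)$ and $\hat M_0(\mp\lambda^\ast)$ by means of the crossing/transposition symmetry of $R$ and the very definition of $\hat M$ through $\sigma_0^y M_0^{t_0}(-\lambda)\sigma_0^y$. In this step one must also use the auxiliary-space relation $X^\dagger=\big(\overline{X}\big)^{t_0}$ in the $\mathcal{H}_0$ factor, so that the product $M_0\,K_-\,\hat M_0$ reorganizes into $\big[M_0(\mp\lambda^\ast)K_-(\mp\lambda^\ast)\hat M_0(\mp\lambda^\ast)\big]^{t_0}=[\mathcal{U}_-(\mp\lambda^\ast)]^{t_0}$. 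The main obstacle is exactly this bookkeeping: $t_0$ is an anti-homomorphism only in the auxiliary space and acts trivially on the quantum factors, so order reversal under $\dagger$ does not commute naively with $t_0$, and one has to check that every sign cancels — the $(-1)^{\mathsf N}$ from $\hat M$, the $(-1)$ per $R$ in the massless regime (of which there are $2\mathsf N$, hence an even number), the $\sigma^z$/$\sigma^y$ twists, and the $-\sinh\zeta$ from $K$ — and that the inhomogeneity dependence reassembles correctly. The identity for $\mathcal{U}_+$ then follows from the same computation, or from the remark that $\mathcal{U}_+^{t_0}(-\lambda)$ obeys the same reflection equation as $\mathcal{U}_-(\lambda)$.

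Finally, for the transfer matrix I would write $\mathcal{T}(\lambda)=\tr_0\{K_+(\lambda)\mathcal{U}_-(\lambda)\}$ and use $(\tr_0 X)^\dagger=\tr_0(X^\dagger)$ to obtain $\mathcal{T}(\lambda)^\dagger=\tr_0\{\mathcal{U}_-(\lambda)^\dagger K_+(\lambda)^\dagger\}$. Inserting the $\mathcal{U}_-$ identity, moving the auxiliary transpose across the trace via $\tr_0(A^{t_0}B)=\tr_0(AB^{t_0})$, using $(K_+^\dagger)^{t_0}=\overline{K_+}$ together with the scalar $K$ conjugation of the first step, and the cyclicity of the partial trace for factors supported on $\mathcal{H}_0$, the product reassembles into $\tr_0\{K_+(\pm\lambda^\ast)\mathcal{U}_-(\pm\lambda^\ast)\}$. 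In the massive regime this is directly $\mathcal{T}(\lambda^\ast)$; in the massless regime it is $\mathcal{T}(-\lambda^\ast)$, which equals $\mathcal{T}(\lambda^\ast)$ after invoking the parity $\mathcal{T}(-\lambda)=\mathcal{T}(\lambda)$ that follows from the $\lambda\to-\lambda$ relations of Proposition~\ref{la-la} and the parity of $K_\pm$. Normality is then immediate: since $\{\mathcal{T}(\mu)\}_\mu$ is a commuting family and $\mathcal{T}(\lambda)^\dagger=\mathcal{T}(\lambda^\ast)$ is one of its members, $[\mathcal{T}(\lambda),\mathcal{T}(\lambda)^\dagger]=0$; self-adjointness follows for $\lambda$ real ($\lambda^\ast=\lambda$) and, through the parity relation, for $\lambda$ imaginary ($\lambda^\ast=-\lambda$).
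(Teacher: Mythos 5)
First, a point of reference: the paper does not prove this proposition at all --- it is imported verbatim, with its attribution, from \cite{Nic12b} --- so your attempt has to be judged against what a correct proof must establish rather than against an argument in the text.

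There is a genuine gap, and it is the meaning of $\dagger$. In this statement (as is standard in this literature) $\mathcal{U}_{\pm}(\lambda)^{\dagger}$ denotes the $2\times 2$ auxiliary matrix whose entries are the Hermitian conjugates, \emph{as operators on} $\mathcal{H}$, of the entries $\mathcal{A}_{-},\mathcal{B}_{-},\mathcal{C}_{-},\mathcal{D}_{-}$; the partial transposition $t_{0}$ on the right-hand side is then exactly the statement that conjugation fixes $\mathcal{A}_{-},\mathcal{D}_{-}$ while sending $\mathcal{B}_{-}(\lambda)\mapsto\mathcal{C}_{-}(\mp\lambda^{\ast})$ and $\mathcal{C}_{-}(\lambda)\mapsto\mathcal{B}_{-}(\mp\lambda^{\ast})$. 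Crucially, this dagger does \emph{not} reverse the order of the $R$-factors inside $M_{0}(\lambda)$, because the quantum entries of distinct $R_{0n}$ act on different sites and commute. Your proof instead takes $\dagger$ to be the Hermitian conjugation on all of $\mathcal{H}_{0}\otimes\mathcal{H}$ (that is what reverses the order of the factors) and nevertheless keeps the $t_{0}$ on the right-hand side. Since the full conjugation equals the entrywise one composed with transposition in $\mathcal{H}_{0}$, the two readings differ by exactly one $t_{0}$ and cannot satisfy the same equation: if the proposition holds in the entrywise sense, then for the full dagger the correct identity is $\mathcal{U}_{\pm}(\lambda)^{\dagger}=\mathcal{U}_{\pm}(\mp\lambda^{\ast})$ with \emph{no} transpose. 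The identity you set out to prove is in fact false: its $(2,1)$ entry asserts $\mathcal{B}_{-}(\lambda)^{\dagger}=\mathcal{B}_{-}(\mp\lambda^{\ast})$, and already for $\mathsf{N}=1$ with diagonal $K_{-}$ one computes $\mathcal{B}_{-}(\lambda)=\sinh\eta\,\big(k_{1}\sinh\mu_{-}+k_{2}\sinh\mu_{+}\big)\,\sigma_{1}^{-}$, whose conjugate is proportional to $\sigma_{1}^{+}$ and can never equal an operator proportional to $\sigma_{1}^{-}$; the true relation is $\mathcal{B}_{-}(\lambda)^{\dagger}=\mathcal{C}_{-}(\mp\lambda^{\ast})$.

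The same confusion breaks your reassembly step mechanically. With the full dagger, your own building blocks (conjugation of $R$ plus crossing) give $M_{0}(\lambda)^{\dagger}\propto\hat{M}_{0}(\mp\lambda^{\ast})$ and $\hat{M}_{0}(\lambda)^{\dagger}\propto M_{0}(\mp\lambda^{\ast})$, so that $\mathcal{U}_{-}(\lambda)^{\dagger}=\hat{M}_{0}(\lambda)^{\dagger}K_{-}(\lambda)^{\dagger}M_{0}(\lambda)^{\dagger}$ lands on an expression of the form $M_{0}(\mp\lambda^{\ast})\,K'\,\hat{M}_{0}(\mp\lambda^{\ast})$ --- the \emph{same} ordering as $\mathcal{U}_{-}(\mp\lambda^{\ast})$ itself, not of its transpose --- and no overall $t_{0}$ can be pulled out afterwards, because partial transposition in $\mathcal{H}_{0}$ is not an anti-homomorphism here: the quantum entries of $M_{0}$ and $\hat{M}_{0}$ do not commute, so $[M_{0}K_{-}\hat{M}_{0}]^{t_{0}}$ does not factor as your last displayed equality asserts. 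What then remains is precisely the point you dismiss as ``every sign cancels'': whether $K_{-}(\lambda)^{\dagger}$ reproduces $K_{-}(\mp\lambda^{\ast})$ exactly, which hinges on the relation $\overline{\kappa_{-}e^{\pm\tau_{-}}}=\kappa_{-}e^{\mp\tau_{-}}$. Your claim that conjugation ``automatically'' exchanges $e^{\tau}\leftrightarrow e^{-\tau}$ is not correct in the massive regime as stated (there $\tau_{\pm}$ is declared real, so conjugation fixes $e^{\pm\tau_{\pm}}$), so this step cannot be treated as bookkeeping; it is exactly where the reality conditions on $\kappa_{\pm},\tau_{\pm}$ enter, and where a correct (entrywise) proof must do explicit work. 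Your final paragraph on $\mathcal{T}(\lambda)$ --- the trace identities, the use of parity $\mathcal{T}(-\lambda)=\mathcal{T}(\lambda)$ in the massless case, and normality from the commuting family --- is sound in itself, but it takes the $\mathcal{U}_{-}$ identity as input, so the gap propagates to the whole argument.
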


\begin{proposition}
The  transfer matrix $\mathcal{T}(\lambda )$ is even in the
spectral parameter $\lambda $:%
\begin{equation}
\mathcal{T}(-\lambda )=\mathcal{T}(\lambda ),  \label{even-transfer}
\end{equation}%
\end{proposition}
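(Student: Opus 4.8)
The plan is to prove \eqref{even-transfer} by a direct computation starting from the definition $\mathcal{T}(\lambda)=\mathrm{tr}_0\{K_+(\lambda)\,\mathcal{U}_-(\lambda)\}$ and then applying the parity relations of Proposition \ref{la-la}. First I would expand the trace using the explicit entries of $K_+(\lambda)=K(\lambda+\eta;\zeta_+,\kappa_+,\tau_+)$ read off from \eqref{ADMFKK} together with the block form of $\mathcal{U}_-(\lambda)$, which gives
\begin{equation}
\sinh\zeta_+\,\mathcal{T}(\lambda)=\sinh(\lambda+\tfrac{\eta}{2}+\zeta_+)\mathcal{A}_-(\lambda)+\sinh(\zeta_+-\lambda-\tfrac{\eta}{2})\mathcal{D}_-(\lambda)+\kappa_+\sinh(2\lambda+\eta)\bigl(e^{\tau_+}\mathcal{C}_-(\lambda)+e^{-\tau_+}\mathcal{B}_-(\lambda)\bigr).
\end{equation}
I would then evaluate this at $-\lambda$ and reduce every generator back to argument $\lambda$ via Proposition \ref{la-la}, so that $\mathcal{T}(-\lambda)$ and $\mathcal{T}(\lambda)$ can be matched coefficient by coefficient in front of $\mathcal{A}_-,\mathcal{B}_-,\mathcal{C}_-,\mathcal{D}_-$.

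The off-diagonal terms are the easy part. Since $\sinh(-2\lambda+\eta)=-\sinh(2\lambda-\eta)$, and the relations \eqref{Sym-B-C-} give $\mathcal{B}_-(-\lambda)=-\tfrac{\sinh(2\lambda+\eta)}{\sinh(2\lambda-\eta)}\mathcal{B}_-(\lambda)$ and likewise for $\mathcal{C}_-$, the two factors of $\sinh(2\lambda-\eta)$ cancel and the $\mathcal{B}_-$, $\mathcal{C}_-$ contributions reproduce themselves exactly; this block of $\mathcal{T}(-\lambda)$ already coincides with that of $\mathcal{T}(\lambda)$.

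The diagonal terms are where the real work lies. Here I would first invert the relations \eqref{Sym-A-D-} to express the generators at $-\lambda$ through those at $\lambda$, obtaining
\begin{equation}
\mathcal{A}_-(-\lambda)=\frac{\sinh 2\lambda}{\sinh(2\lambda-\eta)}\mathcal{D}_-(\lambda)-\frac{\sinh\eta}{\sinh(2\lambda-\eta)}\mathcal{A}_-(\lambda),
\end{equation}
and the analogous identity for $\mathcal{D}_-(-\lambda)$ with $\mathcal{A}_-\leftrightarrow\mathcal{D}_-$. Substituting these into the diagonal part $\sinh(\zeta_+-\lambda+\tfrac{\eta}{2})\mathcal{A}_-(-\lambda)+\sinh(\zeta_++\lambda-\tfrac{\eta}{2})\mathcal{D}_-(-\lambda)$ of $\sinh\zeta_+\,\mathcal{T}(-\lambda)$ and collecting terms, the coefficient of $\mathcal{A}_-(\lambda)$ becomes $\bigl[\sinh 2\lambda\,\sinh(\zeta_++\lambda-\tfrac{\eta}{2})-\sinh\eta\,\sinh(\zeta_+-\lambda+\tfrac{\eta}{2})\bigr]/\sinh(2\lambda-\eta)$, and the identity to verify is that this collapses to $\sinh(\lambda+\tfrac{\eta}{2}+\zeta_+)$, with the coefficient of $\mathcal{D}_-(\lambda)$ collapsing symmetrically to $\sinh(\zeta_+-\lambda-\tfrac{\eta}{2})$.

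The main obstacle is therefore purely trigonometric: establishing that these numerators factor so as to cancel the spurious $\sinh(2\lambda-\eta)$ denominator. I expect to handle it using $\sinh a\,\sinh b=\tfrac12[\cosh(a+b)-\cosh(a-b)]$ followed by $\cosh x-\cosh y=2\sinh\tfrac{x+y}{2}\sinh\tfrac{x-y}{2}$, which turns the $\mathcal{A}_-(\lambda)$ numerator into $\sinh(\zeta_++\lambda+\tfrac{\eta}{2})\,\sinh(2\lambda-\eta)$ so that the denominator cancels cleanly, and similarly for the $\mathcal{D}_-(\lambda)$ coefficient. Since all the relations used are meromorphic in $\lambda$, the apparent $\sinh 2\lambda$ poles introduced by the inversion are removable, and the resulting equality $\mathcal{T}(-\lambda)=\mathcal{T}(\lambda)$ holds as a genuine operator identity of one-parameter families on $\mathcal{H}$.
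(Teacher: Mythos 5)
Your proposal is correct, and in fact the paper offers no proof at all for this proposition --- it is stated bare (as one of the properties imported from \cite{Nic12b}), so your computation fills a genuine gap rather than duplicating an in-paper argument. I checked the two hyperbolic identities on which everything hinges: expanding with $\sinh a\,\sinh b=\tfrac12[\cosh(a+b)-\cosh(a-b)]$, the $\mathcal{A}_-$ numerator becomes $\tfrac12\bigl[\cosh(3\lambda+\zeta_+-\tfrac{\eta}{2})-\cosh(\zeta_+-\lambda+\tfrac{3\eta}{2})\bigr]=\sinh(\lambda+\tfrac{\eta}{2}+\zeta_+)\sinh(2\lambda-\eta)$, and the $\mathcal{D}_-$ numerator becomes $\tfrac12\bigl[\cosh(\zeta_++\lambda-\tfrac{3\eta}{2})-\cosh(\zeta_+-3\lambda+\tfrac{\eta}{2})\bigr]=\sinh(\zeta_+-\lambda-\tfrac{\eta}{2})\sinh(2\lambda-\eta)$, so both spurious denominators cancel exactly and the diagonal block of $\mathcal{T}(-\lambda)$ matches that of $\mathcal{T}(\lambda)$ coefficient by coefficient; the off-diagonal block works as you say, since the sign from $\sinh(\eta-2\lambda)=-\sinh(2\lambda-\eta)$ compensates the sign in \eqref{Sym-B-C-}. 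Two cosmetic remarks: the poles introduced by inverting \eqref{Sym-A-D-} sit at $\sinh(2\lambda-\eta)=0$, not $\sinh 2\lambda=0$ as you wrote, but this is immaterial because your factorization cancels them identically; and since each generator is a Laurent polynomial in $e^{\lambda}$, the final equality, once established away from the zeros of $\sinh(2\lambda-\eta)$, extends to all $\lambda$ as you claim.
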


These properties were used in \cite{Nic12b} to construct the quantum separated variables for the boundary XXZ chain. This method can be used directly if one of the $K$ matrices is triangular. To go beyond this constraint we use the gauge transformation introduced in \cite{CaoLSW03}.

\section{Gauge transformations and essential properties}

\subsection{Definitions}

For arbitrary complex parameters $\alpha$ and $\beta$  we introduce the following two matrices   %
\begin{equation}
\bar{G}(\lambda  |\beta ) = (X(\lambda  |\beta
),Y(\lambda  |\beta )),\text{ \ \ }\tilde{G}(\lambda 
|\beta )= (X(\lambda |\beta+1 ),Y(\lambda
|\beta-1 )) 
\end{equation}
where we have defined the following columns%
\begin{equation}
X(\lambda  |\beta ) = \left( 
\begin{array}{c}
e^{-\left[ \lambda +( \alpha+\beta )\eta \right] } \\ 
1%
\end{array}%
\right) ,\text{ \ \ \ \ \ \ \ }Y(\lambda  |\beta )= \left( 
\begin{array}{c}
e^{-\left[ \lambda +(\alpha-\beta )\eta \right] } \\ 
1%
\end{array}%
\right) .  \label{X,Y-definitions} \end{equation}
Evidently these matrices depend also on $\alpha $ but as this parameter will not vary in the following computations we omit this argument 
 for simplicity.
It is not difficult to compute the inverse matrices:%
\begin{equation}
\bar{G}^{-1}(\lambda  |\beta  )=  \left( 
\begin{array}{c}
\bar{Y}(\lambda  |\beta  ) \\ 
\bar{X}(\lambda  |\beta )%
\end{array}%
\right) ,\text{ \ \ \ \ \ \ \ \ \ \ \ }\tilde{G}^{-1}(\lambda)= \left( 
\begin{array}{c}
\tilde{Y}(\lambda  |\beta-1 ) \\ 
\tilde{X}(\lambda  |\beta+1 )%
\end{array}%
\right)
\end{equation}%
in terms of the following rows
\begin{align}
\bar{X}(\lambda  |\beta)& = \frac{ e^{\left( \lambda
+\alpha \eta \right)}}{2 \sinh \beta \eta }\left( 1,-e^{-\left[
\lambda +(\alpha +\beta )\eta \right] }\right),  \nonumber\\
\bar{Y}(\lambda   |\beta )& = \frac{ e^{\left( \lambda
+\alpha \eta \right) }}{2\sinh\beta \eta }\left( -1,e^{-\left[
\lambda +(\alpha  -\beta )\eta \right] }\right)  \label{XY-bar}
\end{align}
\begin{equation}
\tilde{X}(\lambda  |\beta
 )= e^{\eta }\frac{\sinh \beta \eta }{\sinh (\beta
-1)\eta }\bar{X}(\lambda  |\beta ),\text{ \ }\tilde{Y}(\lambda
 )= e^{\eta }\frac{\sinh \beta \eta }{\sinh (\beta
+1)\eta }\bar{Y}(\lambda  |\beta  ), \label{XY-tilde}
\end{equation}

\subsection{Gauge transformed bulk and boundary operators}

The gauge transformation now can be applied to the local $R$-matrices. We apply it to every $R$ matrix in the {\it auxiliary space}
\begin{equation}
R_{0a}(\lambda -\xi _{a}-\eta /2|\beta)=\tilde{G}^{-1}(\lambda -\eta /2  |\beta+\mathsf{N}-a)R_{0a}(\lambda -\xi _{a}-\eta /2){G}(\lambda -\eta /2| \beta+\mathsf{N}-a+1)
\end{equation}
Now we can construct the gauge transformed  bulk monodromy matrix. Taking the
 product of $R$ matrices as in (\ref{T})%
\begin{equation}
M(\lambda |\beta) = \tilde{G}^{-1}(\lambda -\eta /2| \beta)\,M(\lambda )%
\tilde{G}(\lambda -\eta /2| \beta+\mathsf{N}) \\
= \left( 
\begin{array}{ll}
A(\lambda |\beta) & B(\lambda |\beta) \\ 
C(\lambda |\beta) & D(\lambda |\beta)%
\end{array}%
\right) , \end{equation}
where all the new monodromy matrix elements can be easily expressed in terms of the initial monodromy matrix and the row and columns defined above, for example
\begin{equation}
 B(\lambda |\beta)= \tilde{Y}(\lambda
-\eta /2| \beta-1)M(\lambda )Y(\lambda -\eta /2| \beta+\mathsf{N}-1).
\end{equation}
In a similar way we can apply the second gauge transformation to the ``right to left" monodromy matrix (\ref{Mhat})
\begin{equation}
\hat{M}(\lambda |\beta) = \bar{G}^{-1}(\eta /2-\lambda| \beta+\mathsf{N} )\,\hat{M}%
(\lambda )\bar{G}(\eta /2-\lambda| \beta ) \\
= \left( 
\begin{array}{ll}
\bar{A}(\lambda |\beta) & \bar{B}(\lambda |\beta) \\ 
\bar{C}(\lambda |\beta) & \bar{D}(\lambda |\beta)%
\end{array}%
\right).
\end{equation}%
We can define corresponding two-row monodromy matrix
\begin{equation}
\mathsf{U}_{-}(\lambda |\beta) =  \tilde{G}^{-1}(\lambda -\eta /2| \beta)\,\mathcal{U}_{-}(\lambda )\,\tilde{G}(\eta /2-\lambda| \beta ) =\left( 
\begin{array}{ll}
\widehat{\mathcal{A}}_{-}(\lambda |\beta+2) & \widehat{\mathcal{B}}_{-}(\lambda
|\beta) \\ 
\,\widehat{\mathcal{C}}_{-}(\lambda |\beta+2) & \widehat{\mathcal{D}}_{-}(\lambda
|\beta)%
\end{array}%
\right) .\label{U-gauge}\end{equation}
Note that this definition leads to a non-trivial ``dynamical" boundary bulk decomposition:
\begin{align}
\left(\begin{array}{l}
\widehat{\mathcal{A}}_{-}(\lambda |\beta+2) \\ 
\,\widehat{\mathcal{C}}_{-}(\lambda |\beta+2)%
\end{array}\right)= &M(\lambda |\beta)\bar{K}_-(\lambda|\beta)\left( 
\begin{array}{l}
\bar{A}(\lambda |\beta+1)  \\ 
\bar{C}(\lambda |\beta+1) %
\end{array}%
\right)\label{boundary-bulkAC}\\
\left(\begin{array}{l}
 \widehat{\mathcal{B}}_{-}(\lambda
|\beta) \\ 
 \widehat{\mathcal{D}}_{-}(\lambda
|\beta)%
\end{array}%
\right)=&M(\lambda |\beta)K_-(\lambda|\beta)\left( 
\begin{array}{l}
\bar{B}(\lambda |\beta-1)  \\ 
\bar{D}(\lambda |\beta-1) %
\end{array}%
\right),\label{boundary-bulkBD}
\end{align}
where
\begin{align}
K_-(\lambda|\beta)=&\tilde{G}^{-1}(\lambda -\eta /2| \beta+\mathsf{N})\,K_-(\lambda) \, \bar{G}(\eta /2-\lambda| \beta+\mathsf{N}-1 ), \label{K-beta}\\ 
\bar{K}_-(\lambda|\beta)=&\tilde{G}^{-1}(\lambda -\eta /2| \beta+\mathsf{N})\,K_-(\lambda) \, \bar{G}(\eta /2-\lambda| \beta+\mathsf{N}+1 ). \label{barK-beta}
\end{align}
It is more convenient to normalize the new double row monodromy matrix in the following way
%
\begin{equation}
\mathcal{U}_{-}(\lambda |\beta)\equiv e^{-\lambda +\eta /2}\mathsf{U}_{-}(\lambda |\beta)=\left(\begin{array}{ll}
\mathcal{A}_{-}(\lambda |\beta+2) & \mathcal{B}_{-}(\lambda
|\beta) \\ 
\,\mathcal{C}_{-}(\lambda |\beta+2) & \mathcal{D}_{-}(\lambda
|\beta)%
\end{array}\right).
\end{equation}%

\subsection{Properties of the gauge transformed operators }

The commutation relation of the generators of the reflection algebra are given by the equation 
(\ref{bYB}). Applying the gauge transformation one can derive the dynamical commutation relation for the transformed generators:

\begin{lemma}
The following commutations relations hold for the gauged transformed
reflection algebra generators:%
\begin{equation}
\mathcal{B}_{-}(\lambda _{2}|\beta)\mathcal{B}_{-}(\lambda _{1}|\beta-2)=\mathcal{B}%
_{-}(\lambda _{1}|\beta)\mathcal{B}_{-}(\lambda _{2}|\beta-2),  \label{CRM-BB}
\end{equation}%
\begin{eqnarray}
\mathcal{A}_{-}(\lambda _{2}|\beta+2)\mathcal{B}_{-}(\lambda _{1}|\beta) &=&\frac{%
\sinh (\lambda _{1}-\lambda _{2}+\eta )\sinh (\lambda _{2}+\lambda _{1}-\eta
)}{\sinh (\lambda _{1}-\lambda _{2})\sinh (\lambda _{1}+\lambda _{2})}%
\mathcal{B}_{-}(\lambda _{1}|\beta)\mathcal{A}_{-}(\lambda _{2}|\beta)  \notag \\
&+&\frac{\sinh (\lambda _{1}+\lambda _{2}-\eta )\sinh (\lambda _{1}-\lambda
_{2}+(\beta -1)\eta )\sinh \eta }{\sinh (\lambda _{2}-\lambda _{1})\sinh
(\lambda _{1}+\lambda _{2})\sinh (\beta -1)\eta )}\mathcal{B}_{-}(\lambda
_{2}|\beta)\mathcal{A}_{-}(\lambda _{1}|\beta)  \notag \\
&+&\frac{\sinh \eta \sinh (\lambda _{1}+\lambda _{2}-\beta \eta )}{\sinh
(\lambda _{1}+\lambda _{2})\sinh (\beta -1)\eta }\mathcal{B}_{-}(\lambda
_{2}|\beta)\mathcal{D}_{-}(\lambda _{1}|\beta),  \label{CMR-AB-Left}
\end{eqnarray}%
\begin{align}
\mathcal{B}_{-}(\lambda _{1}|\beta)\mathcal{D}_{-}(\lambda _{2}|\beta)& =\frac{\sinh
(\lambda _{1}-\lambda _{2}+\eta )\sinh (\lambda _{2}+\lambda _{1}-\eta )}{%
\sinh (\lambda _{1}-\lambda _{2})\sinh (\lambda _{1}+\lambda _{2})}\mathcal{D%
}_{-}(\lambda _{2}|\beta+2)\mathcal{B}_{-}(\lambda _{1}|\beta)  \notag \\
& -\frac{\sinh (\lambda _{2}-\lambda _{1}+(\beta+1 )\eta )\sinh (\lambda
_{2}+\lambda _{1}-\eta )}{\sinh (\lambda _{1}-\lambda _{2})\sinh (\lambda
_{2}+\lambda _{1})\sinh (\beta+1 )\eta }\mathcal{D}_{-}(\lambda _{1}|\beta+2)%
\mathcal{B}_{-}(\lambda _{2}|\beta)  \notag \\
& -\frac{\sinh \eta \sinh (\lambda _{2}+\lambda _{1}+\beta \eta )}{\sinh
(\lambda _{2}+\lambda _{1})\sinh (\beta+1 )\eta }\mathcal{A}_{-}(\lambda
_{1}|\beta+2)\mathcal{B}_{-}(\lambda _{2}|\beta),  \label{BD-DB-CMR}
\end{align}%
\begin{align}
& \mathcal{A}_{-}(\lambda _{1}|\beta+2)\mathcal{A}_{-}(\lambda _{2}|\beta+2)-\frac{%
\sinh \eta \sinh (\lambda _{1}+\lambda _{2}-(\beta )\eta )}{\sinh (\lambda
_{1}+\lambda _{2})\sinh (\beta -1)\eta }\mathcal{B}_{-}(\lambda _{1}|\beta)%
\mathcal{C}_{-}(\lambda _{2}|\beta+2)\left. =\right.  \notag \\
& \text{ \ \ \ \ \ \ \ \ \ \ \ \ \ \ \ \ \ \ \ \ }\mathcal{A}%
_{-}(\lambda _{2}|\beta+2)\mathcal{A}_{-}(\lambda _{1}|\beta+2)-\frac{\sinh \eta
\sinh (\lambda _{1}+\lambda _{2}-\beta \eta )}{\sinh (\lambda
_{1}+\lambda _{2})\sinh (\beta -1)\eta }\mathcal{B}_{-}(\lambda _{2}|\beta)%
\mathcal{C}_{-}(\lambda _{1}|\beta+2).  \label{CMR-AA-BC}
\end{align}
\end{lemma}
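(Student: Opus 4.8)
The plan is to obtain all four relations \rf{CRM-BB}--\rf{CMR-AA-BC} simultaneously, as matrix entries of a single \emph{gauge-transformed reflection equation} derived from the ungauged equation \rf{bYB}. The bridge between the two formulations is the vertex--IRF (Baxter) intertwining property of the gauge columns $X,Y$ of \rf{X,Y-definitions}: the six-vertex $R$-matrix, evaluated on the tensor product of two such columns carrying appropriately shifted dynamical parameters, reproduces a linear combination of the same columns with shifted $\beta$, the coefficients being the structure constants of a dynamical, Felder-type $R$-matrix $\check R(\lambda|\beta)$. The first step is therefore to establish this full family of intertwining identities, together with their duals for the inverse-gauge rows $\bar X,\bar Y$ of \rf{XY-bar}--\rf{XY-tilde}. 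Each is a direct $2\times2$ verification from the explicit entries of $R_{12}$ and from the definitions \rf{X,Y-definitions}--\rf{XY-tilde}; the appearance of the denominators $\sinh(\beta\pm1)\eta$ in the final coefficients can already be traced here to the normalisation $1/(2\sinh\beta\eta)$ of the rows in \rf{XY-bar}.

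The second step is to transform \rf{bYB} itself. I would sandwich the whole reflection equation between gauge matrices in the two auxiliary spaces, replace every bare factor $\mathcal{U}_i(\lambda)$ by $\tilde G_i\,\mathcal{U}_{-,i}(\lambda|\beta)\,\tilde G_i^{-1}$ following \rf{U-gauge} (keeping the reflected spectral arguments $\lambda-\eta/2$ and $\eta/2-\lambda$ dictated by that definition), and then use the intertwiners of the first step to push each bare $R_{12}(\lambda+\mu-\eta)$ and $R_{21}(\lambda-\mu)$ through the adjacent gauge columns. Every such move trades a bare $R$ for a dynamical $\check R$ and shifts the height $\beta$ by the amount prescribed by the gauge vectors. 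Once all bare $R$'s have been converted and the surplus gauge matrices have cancelled in pairs, what remains is a dynamical reflection equation for the gauged generators, $\check R_{12}\,\mathcal{U}_{-,1}(\lambda|\cdot)\,\check R_{21}\,\mathcal{U}_{-,2}(\mu|\cdot)=\mathcal{U}_{-,2}(\mu|\cdot)\,\check R_{12}\,\mathcal{U}_{-,1}(\lambda|\cdot)\,\check R_{21}$, with the heights on each generator fixed by the bookkeeping.

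Reading this $4\times4$ operator identity in components then gives the four relations directly. The entry in which only lowering columns survive produces \rf{CRM-BB}: the off-diagonal structure constants of $\check R$ are precisely what replaces the scalar coefficient of the ungauged $\mathcal{B}_-\mathcal{B}_-$ relation by the pure shift $\beta\to\beta-2$. The mixed entries give \rf{CMR-AB-Left} and \rf{BD-DB-CMR}, and the diagonal entry mixing $\mathcal{A}_-\mathcal{A}_-$ with $\mathcal{B}_-\mathcal{C}_-$ gives \rf{CMR-AA-BC}. The $\lambda$-dependent numerators $\sinh(\lambda_1+\lambda_2-\beta\eta)$ and $\sinh(\lambda_1+\lambda_2\pm\eta)$ appearing in these coefficients are inherited from the intertwiners for $R_{21}(\lambda+\mu-\eta)$, which carries the \emph{sum} of the spectral parameters and hence couples the two columns $X,Y$ that differ by the sign of $\beta$.

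The main obstacle I expect is not any individual trigonometric identity but the \emph{bookkeeping of the dynamical shifts}. Because $X$ and $Y$ carry $\beta+1$ and $\beta-1$ respectively, and because \rf{U-gauge} evaluates the left and right gauge matrices at the reflected arguments $\lambda-\eta/2$ and $\eta/2-\lambda$, the height at which each generator sits drifts as the bare $R$-matrices are commuted through; one must check that the four factors of the transformed equation emerge with mutually consistent heights, so that every stray gauge matrix cancels and the generators reassemble exactly in the normalisation of \rf{U-gauge}. Organising these shifts is the delicate part; once it is done, each scalar coefficient reduces to the trigonometric check already performed in the intertwiners of the first step.
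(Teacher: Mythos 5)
Your proposal is correct and follows essentially the route the paper itself takes (but only sketches): the paper obtains these relations by applying the gauge transformation to the reflection equation \rf{bYB} and identifying the result as the dynamical reflection algebra of \cite{FilK11}, which is precisely your vertex--IRF intertwining argument followed by reading off matrix entries. Your handling of the height bookkeeping, with the reflected arguments $\lambda-\eta/2$ and $\eta/2-\lambda$ fixed by \rf{U-gauge} and the shifts dictated by \rf{X,Y-definitions}--\rf{XY-tilde}, matches the paper's conventions, so the plan is sound.
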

It is not the complete list of relations that one can get from the reflection equation but it is all we need to construct the SOV representations. They can be seen as the the commutation relations of the generators of the  dynamical reflection algebra \cite{FilK11}.  


It is also possible to establish symmetry properties similar to the Proposition \ref{la-la}

\begin{proposition}
The generators  $\mathcal{A}_{-}(\lambda |\beta)$ and $\mathcal{D}%
_{-}(\lambda |\beta)$ are related by the following parity relation:%
\begin{align}
\mathcal{A}_{-}(\lambda |\beta)& =-\frac{\sinh \eta \sinh (2\lambda -(\beta
-1)\eta )}{\sinh 2\lambda \sinh (\beta -2)\eta }\mathcal{D}_{-}(\lambda
|\beta)+\frac{\sinh (2\lambda -\eta )\sinh (\beta -1)\eta }{\sinh 2\lambda
\sinh (\beta -2)\eta }\mathcal{D}_{-}(-\lambda |\beta),  \label{parity-m-1} \\
\mathcal{D}_{-}(\lambda |\beta)& =\frac{\sinh \eta \sinh (2\lambda +(\beta
-1)\eta )}{\sinh 2\lambda \sinh \beta \eta }\mathcal{A}_{-}(\lambda |\beta)+%
\frac{\sinh (2\lambda -\eta )\sinh (\beta -1)\eta }{\sinh 2\lambda \sinh
\beta \eta }\mathcal{A}_{-}(-\lambda |\beta),  \label{parity-m-3}
\end{align}%
while for the other two generators the following parity relations hold:%
\begin{equation}
\mathcal{B}_{-}(-\lambda |\beta)=-\frac{\sinh (2\lambda +\eta )}{\sinh (2\lambda
-\eta )}\mathcal{B}_{-}(\lambda |\beta)\text{ },\text{ \ }\mathcal{C}%
_{-}(-\lambda |\beta)=-\frac{\sinh (2\lambda +\eta )}{\sinh (2\lambda -\eta )}%
\mathcal{C}_{-}(\lambda |\beta).  \label{parity-m-2}
\end{equation}%
\end{proposition}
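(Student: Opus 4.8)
The plan is to reduce both pairs of identities to the un-gauged parity relations of Proposition~\ref{la-la}. By the definition \rf{U-gauge} together with the normalization $\mathcal{U}_-(\lambda|\beta)=e^{-\lambda+\eta/2}\mathsf{U}_-(\lambda|\beta)$, every gauged generator is a sandwich of $\mathcal{U}_-(\lambda)$ between one of the row covectors $\tilde{X},\tilde{Y}$ of \rf{XY-tilde} and one of the columns $X,Y$ of \rf{X,Y-definitions}; for instance $\mathcal{B}_-(\lambda|\beta)=e^{-\lambda+\eta/2}\,\tilde{Y}(\lambda-\eta/2|\beta-1)\,\mathcal{U}_-(\lambda)\,Y(\eta/2-\lambda|\beta-1)$, and analogously for $\mathcal{A}_-,\mathcal{C}_-,\mathcal{D}_-$ with the covectors and $\beta$-shifts read off from \rf{U-gauge}. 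First I would write out, in this sandwich form, the three generators entering each claimed relation. The essential simplification is that $\lambda$ enters the gauge vectors only through the combinations $\lambda-\eta/2$ and $\eta/2-\lambda$, so the explicit $e^{\mp\lambda}$ factors produced by \rf{X,Y-definitions}--\rf{XY-tilde} and by the normalization collapse to a $\lambda$-independent constant, making the two sides directly comparable.

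Next I would expand the sandwich at the reflected point $-\lambda$ and remove every generator of negative argument using Proposition~\ref{la-la}: the off-diagonal ones via \rf{Sym-B-C-}, the diagonal ones by solving \rf{Sym-A-D-} and its companion for $\mathcal{A}_-(-\lambda),\mathcal{D}_-(-\lambda)$. This rewrites the reflected sandwich as a linear combination of $\mathcal{A}_-(\lambda),\mathcal{B}_-(\lambda),\mathcal{C}_-(\lambda),\mathcal{D}_-(\lambda)$ whose coefficients are products of gauge-vector exponentials at $-\lambda$ with the hyperbolic ratios of \rf{Sym-A-D-}--\rf{Sym-B-C-}. For the $\mathcal{B}_-$ and $\mathcal{C}_-$ identities \rf{parity-m-2} this is almost immediate: writing $p,q$ for the nontrivial exponential entries of the two sandwiching vectors and $p',q'$ for their values at $-\lambda$, the whole combination collapses once one checks $p'q'=pq$ and $p'\sinh 2\lambda+q'\sinh\eta=q\,\sinh(2\lambda+\eta)$ together with its $p\leftrightarrow q$ companion, leaving exactly $-\tfrac{\sinh(2\lambda+\eta)}{\sinh(2\lambda-\eta)}$ times the sandwich defining $\mathcal{B}_-(\lambda|\beta)$, with the $\beta$-dependent normalizations identical on the two sides and hence cancelling.

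For the $\mathcal{A}_-$--$\mathcal{D}_-$ relations \rf{parity-m-1} and \rf{parity-m-3} the mechanism is the same, but the bookkeeping is heavier because $\mathcal{A}_-(\lambda|\beta)$ sits in the $(1,1)$ slot of $\mathcal{U}_-(\cdot|\beta-2)$ while $\mathcal{D}_-(\pm\lambda|\beta)$ sit in the $(2,2)$ slots of $\mathcal{U}_-(\cdot|\beta)$, so they are sandwiched by covectors carrying different $\beta$-shifts. Here the normalization prefactors no longer cancel but instead produce the $\beta$-dependent coefficients: the $1/\sinh(\beta-2)\eta$ of $\tilde{Y}(\cdot|\beta-3)$ in \rf{XY-tilde} is precisely the denominator of \rf{parity-m-1} (and the $1/\sinh\beta\eta$ of $\tilde{X}(\cdot|\beta+1)$ that of \rf{parity-m-3}), and matching the two surviving coefficients of $\mathcal{D}_-(\lambda)$ and $\mathcal{D}_-(-\lambda)$ after the substitution reproduces the numerators $\sinh\eta\,\sinh(2\lambda-(\beta-1)\eta)$ and $\sinh(2\lambda-\eta)\sinh(\beta-1)\eta$. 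The companion relation \rf{parity-m-3} follows in the same way, or by solving \rf{parity-m-1} at $\pm\lambda$ for $\mathcal{D}_-(\lambda|\beta)$.

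The step I expect to be the genuine obstacle is exactly this last $\beta$-bookkeeping: keeping straight which covector normalization --- $\sinh\beta\eta$, $\sinh(\beta-1)\eta$, $\sinh(\beta-2)\eta$, and so on --- attaches to which term, and verifying that the $\beta$- and $\lambda$-dependent hyperbolic factors recombine into precisely the coefficients displayed in \rf{parity-m-1}--\rf{parity-m-3}. The structural input is light, since nothing about $\mathcal{U}_-(\lambda)$ beyond Proposition~\ref{la-la} is used and the cancellation of the explicit $\lambda$-exponentials already guarantees that the final coefficients are elementary hyperbolic functions; the difficulty is entirely in the trigonometric bookkeeping.
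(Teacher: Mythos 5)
Your proposal is correct and is essentially the paper's own proof: the paper disposes of this proposition with the single remark that the relations ``can be obtained from Proposition~\ref{la-la} by direct computation'', and your argument --- writing each gauged generator as a sandwich $e^{-\lambda+\eta/2}\,(\text{row of }\tilde{G}^{-1})\,\mathcal{U}_-(\lambda)\,(\text{column of }\tilde{G})$, reflecting $\lambda\to-\lambda$, and eliminating $\mathcal{A}_-(-\lambda),\mathcal{B}_-(-\lambda),\mathcal{C}_-(-\lambda),\mathcal{D}_-(-\lambda)$ via Proposition~\ref{la-la} --- is exactly that computation. I checked the details: your key coefficient identities ($p'q'=pq$ and $p'\sinh 2\lambda+q'\sinh\eta=q\,\sinh(2\lambda+\eta)$ with its $p\leftrightarrow q$ companion) do hold, the $\beta$-shift assignments ($\tilde{Y}(\cdot|\beta-3)$ and $X(\cdot|\beta-1)$ for $\mathcal{A}_-(\lambda|\beta)$, $\tilde{X}(\cdot|\beta+1)$ and $Y(\cdot|\beta-1)$ for $\mathcal{D}_-(\lambda|\beta)$) are the right ones, and the resulting hyperbolic identities (e.g.\ $\sinh\beta\eta\,\sinh 2\lambda=\sinh\eta\,\sinh(2\lambda-(\beta-1)\eta)+\sinh(\beta-1)\eta\,\sinh(2\lambda+\eta)$ for the $\mathcal{B}_-$-- and $\mathcal{C}_-$--coefficients) verify \rf{parity-m-1}--\rf{parity-m-2} exactly as you describe.
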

These relations  can be obtained from Proposition \ref{la-la} by direct computation.

\begin{proposition}
The inverse transformed double-row monodromy matrix can be written in terms of the  quantum determinant of the reflection algebra
\begin{equation}
\mathcal{U}_{-}^{-1}(\lambda +\eta /2|\beta)=\frac{\sinh (2\lambda -2\eta
)}{\det_{q}\mathcal{U}_{-}(\lambda )}\mathcal{U}_{-}(\eta /2-\lambda |\beta),
\label{Inversion-formula}
\end{equation}%
where the following representation holds for the quantum determinant, for
both $\epsilon =\pm 1$:%
\begin{align}
\frac{\det_{q}\mathcal{U}_{-}(\lambda )}{\sinh (2\lambda -2\eta )}& =%
\mathcal{A}_{-}(\epsilon \lambda +\eta /2|\beta+2)\mathcal{A}_{-}(\eta
/2-\epsilon \lambda |\beta+2)+\mathcal{B}_{-}(\epsilon \lambda +\eta /2|\beta)%
\mathcal{C}_{-}(\eta /2-\epsilon \lambda |\beta+2)  \label{gauge-q-det-A} \\
& =\mathcal{D}_{-}(\epsilon \lambda +\eta /2|\beta)\mathcal{D}_{-}(\eta
/2-\epsilon \lambda |\beta)+\mathcal{C}_{-}(\epsilon \lambda +\eta /2|\beta+2)%
\mathcal{B}_{-}(\eta /2-\epsilon \lambda |\beta).  \label{gauge-q-det-D}
\end{align}
\end{proposition}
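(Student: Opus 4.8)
The plan is to reduce both assertions to the ungauged inversion formula~(\ref{Inverse}), exploiting that the gauge transformation acts by conjugation with the purely auxiliary-space scalar matrices $\tilde{G}(\lambda|\beta)$ and that $\det_{q}\mathcal{U}_{-}(\lambda)$ is a central scalar (indeed a $c$-number, by~(\ref{q-detU_-exp})), so that conjugation leaves it untouched and the gauged quantum determinant coincides with the ungauged one.

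First I would prove the inversion formula~(\ref{Inversion-formula}). Writing the definition in the normalized form $\mathcal{U}_{-}(\lambda|\beta)=e^{-\lambda+\eta/2}\,\tilde{G}^{-1}(\lambda-\eta/2|\beta)\,\mathcal{U}_{-}(\lambda)\,\tilde{G}(\eta/2-\lambda|\beta)$, I invert it and then shift $\lambda\mapsto\lambda+\eta/2$ to obtain
\[
\mathcal{U}_{-}^{-1}(\lambda+\eta/2|\beta)=e^{\lambda}\,\tilde{G}^{-1}(-\lambda|\beta)\,\mathcal{U}_{-}^{-1}(\lambda+\eta/2)\,\tilde{G}(\lambda|\beta).
\]
Inserting the ungauged formula~(\ref{Inverse}) for $\mathcal{U}_{-}^{-1}(\lambda+\eta/2)$ and pulling the central scalar out front turns the right-hand side into $\frac{\sinh(2\lambda-2\eta)}{\det_{q}\mathcal{U}_{-}(\lambda)}\,e^{\lambda}\,\tilde{G}^{-1}(-\lambda|\beta)\,\mathcal{U}_{-}(\eta/2-\lambda)\,\tilde{G}(\lambda|\beta)$. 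It then remains only to recognize the trailing product as $\mathcal{U}_{-}(\eta/2-\lambda|\beta)$: expanding the definition at the argument $\eta/2-\lambda$ reproduces exactly the prefactor $e^{\lambda}$ and the two flanking matrices $\tilde{G}^{-1}(-\lambda|\beta)$ and $\tilde{G}(\lambda|\beta)$, which yields~(\ref{Inversion-formula}).

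For the quantum determinant representations~(\ref{gauge-q-det-A})--(\ref{gauge-q-det-D}) I would then use~(\ref{Inversion-formula}) directly. Multiplying it by $\mathcal{U}_{-}(\lambda+\eta/2|\beta)$ on the left and on the right, and using that the left and right inverses of an invertible operator coincide, gives the two scalar identities
\[
\mathcal{U}_{-}(\lambda+\eta/2|\beta)\,\mathcal{U}_{-}(\eta/2-\lambda|\beta)=\mathcal{U}_{-}(\eta/2-\lambda|\beta)\,\mathcal{U}_{-}(\lambda+\eta/2|\beta)=\frac{\det_{q}\mathcal{U}_{-}(\lambda)}{\sinh(2\lambda-2\eta)}\,\mathbbm{1}.
\]
Each side is an equality of $2\times2$ matrices in the auxiliary space $\mathcal{H}_{0}$ with operator entries on $\mathcal{H}$, the $\beta$-shifts of those entries being already built into the labelling~(\ref{U-gauge}). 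Reading off the two diagonal entries of the first product reproduces~(\ref{gauge-q-det-A}) and~(\ref{gauge-q-det-D}) for $\epsilon=+1$; the two diagonal entries of the reversed product give the same two identities for $\epsilon=-1$, since reversing the order interchanges the spectral arguments $\lambda+\eta/2\leftrightarrow\eta/2-\lambda$ in each operator factor. The vanishing of the off-diagonal entries is automatic and yields further relations that are not needed here.

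Once the strategy is fixed the remaining work is bookkeeping, and I expect the only delicate point to be the consistent tracking of the two asymmetric arguments of the $\tilde{G}$ matrices, together with the $e^{-\lambda+\eta/2}$ normalization, through the inversion and the shift $\lambda\mapsto\lambda+\eta/2$. The structural crux that makes the whole argument collapse onto the ungauged case is that in each matrix product the $\tilde{G}$ factors standing between the two monodromy matrices carry reflected arguments and cancel (e.g. $\tilde{G}(-\lambda|\beta)\,\tilde{G}^{-1}(-\lambda|\beta)=\mathbbm{1}$), while the surviving outer pair is mutually inverse once the central scalar is commuted through; this reflected pairing of the arguments $\lambda+\eta/2$ and $\eta/2-\lambda$ is exactly what forces the gauged quantum determinant to equal the ungauged $\det_{q}\mathcal{U}_{-}(\lambda)$.
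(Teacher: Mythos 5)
Your proposal is correct and takes essentially the same approach as the paper: both arguments reduce everything to the ungauged inversion formula (\ref{Inverse}) through the cancellation of the $\tilde{G}$ matrices flanking the reflected arguments $\lambda+\eta/2$ and $\eta/2-\lambda$, and then read (\ref{gauge-q-det-A})--(\ref{gauge-q-det-D}) off the diagonal entries of the scalar products $\mathcal{U}_{-}(\lambda+\eta/2|\beta)\,\mathcal{U}_{-}(\eta/2-\lambda|\beta)$ and $\mathcal{U}_{-}(\eta/2-\lambda|\beta)\,\mathcal{U}_{-}(\lambda+\eta/2|\beta)$. The only difference is the (immaterial) ordering: the paper establishes the two scalar product identities first and deduces both claims from them, whereas you derive the gauged inversion formula first and then obtain the products by multiplication.
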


\begin{proof}
 Using the definition of the gauge transformation it is easy to see that it holds:%
\begin{equation}
\mathcal{U}_{-}(\lambda +\eta /2|\beta)= e^{-\lambda}\tilde{G}^{-1}(\lambda|\beta )%
\mathcal{U}_{-}(\lambda +\eta /2)\tilde{G}(-\lambda|\beta ),
\end{equation}%
 then we obtain:%
\begin{eqnarray}
\mathcal{U}_{-}(\lambda +\eta /2|\beta)\,\mathcal{U}_{-}( \eta /2-\lambda|\beta)
&=& \tilde{G}^{-1}(\lambda|\beta )\,\mathcal{U}%
_{-}(\lambda +\eta /2)\,\mathcal{U}_{-}(\eta /2-\lambda )\,\tilde{G}(\lambda|\beta )  \notag \\
&=&\frac{\det_{q}\mathcal{U}_{-}(\lambda )}{\sinh (2\lambda -2\eta )},
\end{eqnarray}%
and similarly:%
\begin{equation}
\mathcal{U}_{-}( \eta /2-\lambda|\beta)\,\mathcal{U}_{-}(\lambda +\eta /2|\beta)
=\frac{\det_{q}\mathcal{U}_{-}(\lambda )}{\sinh (2\lambda -2\eta )}.
\end{equation}
Now the representations (\ref{gauge-q-det-A})  and (\ref{gauge-q-det-D})  for the quantum determinant  follow directly from these expressions. 
\end{proof}


It is also easy to establish a $\beta$-parity relation for the gauged monodromy matrix
\begin{proposition}
The following identity holds:%
\begin{equation}
\mathcal{U}_{-}(\lambda |-\beta +2)=\sigma ^{x}\mathcal{U}_{-}(\lambda
|\beta)\sigma ^{x}  \label{U-gauge-symm}
\end{equation}%
or for the matrix elements: 
\begin{equation}
\mathcal{B}_{-}(\lambda |\beta)=\mathcal{C}_{-}(\lambda |-\beta +2),\text{ \
\ }\mathcal{A}_{-}(\lambda |\beta)=\mathcal{D}_{-}(\lambda |-\beta +2).
\label{B-to-C-identity}
\end{equation}
\end{proposition}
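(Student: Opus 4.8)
The plan is to reduce the claim to a single elementary ``column-swap'' identity for the gauge matrix $\tilde{G}$, and then to conjugate the gauge-independent operator $\mathcal{U}_-(\lambda)$ through it. First I would record the basic observation hidden in the definitions \rf{X,Y-definitions}: since $X$ and $Y$ differ only by the sign of $\beta$ in the exponent, one has $X(\lambda|\beta)=Y(\lambda|-\beta)$ and, symmetrically, $Y(\lambda|\beta)=X(\lambda|-\beta)$, the fixed parameter $\alpha$ dropping out of this comparison.

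Next I would promote this to the full gauge matrices. Using $\tilde{G}(\lambda|\beta)=(X(\lambda|\beta+1),Y(\lambda|\beta-1))$ together with the column-swap identity, a one-line computation gives $\tilde{G}(\lambda|-\beta)=(X(\lambda|-\beta+1),Y(\lambda|-\beta-1))=(Y(\lambda|\beta-1),X(\lambda|\beta+1))=\tilde{G}(\lambda|\beta)\,\sigma^{x}$, because right multiplication by $\sigma^{x}$ interchanges the two columns. Inverting and using $(\sigma^{x})^{2}=\mathbbm{1}$ then yields $\tilde{G}^{-1}(\lambda|-\beta)=\sigma^{x}\,\tilde{G}^{-1}(\lambda|\beta)$.

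Then I would substitute $\beta\to-\beta$ directly into the definition of the two-row matrix $\mathsf{U}_-$. Since $\mathcal{U}_-(\lambda)$ carries no gauge label, the two flip identities bracket it cleanly and give $\mathsf{U}_-(\lambda|-\beta)=\sigma^{x}\,\tilde{G}^{-1}(\lambda-\eta/2|\beta)\,\mathcal{U}_-(\lambda)\,\tilde{G}(\eta/2-\lambda|\beta)\,\sigma^{x}=\sigma^{x}\,\mathsf{U}_-(\lambda|\beta)\,\sigma^{x}$. The normalization $e^{-\lambda+\eta/2}$ is $\beta$-independent, so the same conjugation holds for the normalized $\mathcal{U}_-(\lambda|\beta)$. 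Reading off the entries according to the internal labelling of \rf{U-gauge} — where $\sigma^{x}$-conjugation exchanges the diagonal entries and the off-diagonal entries while the first column carries the $+2$-shifted label — then reproduces exactly the component relations \rf{B-to-C-identity}, which is the content of \rf{U-gauge-symm}.

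The linear algebra here is immediate once the column-swap identity is established; the only place demanding care is the bookkeeping of the $+2$ label shifts when passing between the matrix-valued identity for $\mathsf{U}_-$ and the component form for $\mathcal{A}_-,\mathcal{B}_-,\mathcal{C}_-,\mathcal{D}_-$. I expect that translation, rather than any genuine difficulty, to be the main source of shift errors.
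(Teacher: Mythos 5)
Your proposal is correct and is essentially the paper's own proof: the paper derives the proposition as a ``trivial consequence'' of the identity $Y(\lambda|\beta)=X(\lambda|-\beta)$, which is exactly your column-swap observation, promoted to $\tilde{G}(\lambda|-\beta)=\tilde{G}(\lambda|\beta)\,\sigma^{x}$ and conjugated through the gauge-independent $\mathcal{U}_{-}(\lambda)$. Your bookkeeping also lands on the operative form of the statement, namely the component identities \rf{B-to-C-identity}, with the matrix identity naturally appearing as $\mathcal{U}_{-}(\lambda|-\beta)=\sigma^{x}\,\mathcal{U}_{-}(\lambda|\beta)\,\sigma^{x}$ in the definition-level labeling of \rf{U-gauge}.
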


\begin{proof}
The proof is a trivial consequence of the following simple identities:%
\begin{equation}
Y(\lambda|\beta )=X(\lambda|-\beta ).
\end{equation}%
\end{proof}

\subsection{Boundary transfer matrix and gauged
operators}

It is possible to write the boundary transfer matrix $\mathcal{T}(\lambda )$ for the
most general boundary conditions in terms of the gauged boundary operators. First we need to transform in an appropriate way the boundary matrix $K_+(\lambda)$. There are two possible ways to do it (we will call them left and right $K_+$ matrices). 

We introduce two new  vectors
\begin{equation}
\hat{X}(\lambda|\beta+2)=e^\eta\frac{\sinh(\beta-1)\eta}{\sinh\beta\eta} X(\lambda|\beta+2),\quad
\hat{Y}(\lambda|\beta-2)=e^\eta\frac{\sinh(\beta+1)\eta}{\sinh\beta\eta} Y(\lambda|\beta-2).
\end{equation}

 Then we can define the following $2\times2$ matrix
\begin{align}
&K_{+}^{(L)}(\lambda |\beta)=\nonumber\\
&\left( 
\begin{array}{ll}
 \tilde{Y}(\eta /2-\lambda|\beta-2
)K_{+}(\lambda )\hat{X}(\lambda -\eta /2|\beta+2) & 
 \tilde{Y}(\eta /2-\lambda|\beta )K_{+}(\lambda )\hat{Y}%
(\lambda -\eta /2|\beta-2) \\ 
 \tilde{X}(\eta /2-\lambda|\beta
)K_{+}(\lambda )\hat{X}(\lambda -\eta /2|\beta+2) &  \tilde{X}(\eta /2-\lambda|\beta+2 )K_{+}(\lambda )\hat{Y}%
(\lambda -\eta /2|\beta-2)%
\end{array}%
\right),
\end{align}
the right $K_+$ matrix is defined in a similar way
\begin{equation}
K_{+}^{(R)}(\lambda |\beta)=\left( 
\begin{array}{ll}
 \bar{Y}(\eta /2-\lambda|\beta
)K_{+}(\lambda )X(\lambda -\eta /2|\beta) & 
 \bar{Y}(\eta /2-\lambda|\beta )K_{+}(\lambda
)Y(\lambda -\eta /2|\beta-2) \\ 
\bar{X}(\eta /2-\lambda|\beta
)K_{+}(\lambda )X(\lambda -\eta /2|\beta+2) &  \bar{X}(\eta /2-\lambda|\beta )K_{+}(\lambda
)Y(\lambda -\eta /2|\beta)%
\end{array}%
\right) ,
\end{equation}%
 The explicit expressions for these two matrices are given in the Appendix.
 
\begin{lemma}
The boundary transfer matrix admits the two following representations in terms of the gauged generators:%
\begin{align}
e^{-\lambda+\eta/2}\mathcal{T}(\lambda )& =K_{+}^{(L)}(\lambda |\beta-1)_{11}\mathcal{A}_{-}(\lambda
|\beta)+K_{+}^{(L)}(\lambda |\beta-1)_{22}\mathcal{D}_{-}(\lambda |\beta)  \notag \\
& +K_{+}^{(L)}(\lambda |\beta-1)_{21}\mathcal{B}_{-}(\lambda
|\beta-2)+K_{+}^{(L)}(\lambda |\beta-1)_{12}\mathcal{C}_{-}(\lambda |\beta+2),
\label{T-decomp-L}
\end{align}%
and%
\begin{align}
e^{-\lambda+\eta/2}\mathcal{T}(\lambda )& =K_{+}^{(R)}(\lambda |\beta-1)_{11}\mathcal{A}_{-}(\lambda
|\beta)+K_{+}^{(R)}(\lambda |\beta-1)_{22}\mathcal{D}_{-}(\lambda |\beta)  \notag \\
& +K_{+}^{(R)}(\lambda |\beta-1)_{21}\mathcal{B}_{-}(\lambda
|\beta+2)+K_{+}^{(R)}(\lambda |\beta-1)_{12}\mathcal{C}_{-}(\lambda |\beta).
\label{T-decomp-R}
\end{align}%
\end{lemma}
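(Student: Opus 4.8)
The plan is to evaluate the auxiliary-space trace $\mathcal{T}(\lambda)=\text{tr}_0\{K_+(\lambda)\mathcal{U}_-(\lambda)\}$ in a basis of $\mathcal{H}_0\simeq\mathbb{C}^2$ built from the gauge vectors, so that $\mathcal{U}_-(\lambda)$ is replaced by the gauged generators and $K_+(\lambda)$ by its transformed matrix. The engine of the computation is the pair of right-action identities that follow from the definition of $\mathcal{U}_-(\lambda|\beta)$ together with the invertibility of $\tilde G$: for every height $\beta$,
\[
\mathcal{U}_-(\lambda)X(\eta/2-\lambda|\beta+1)=e^{\lambda-\eta/2}\bigl[X(\lambda-\eta/2|\beta+1)\mathcal{A}_-(\lambda|\beta+2)+Y(\lambda-\eta/2|\beta-1)\mathcal{C}_-(\lambda|\beta+2)\bigr],
\]
\[
\mathcal{U}_-(\lambda)Y(\eta/2-\lambda|\beta-1)=e^{\lambda-\eta/2}\bigl[X(\lambda-\eta/2|\beta+1)\mathcal{B}_-(\lambda|\beta)+Y(\lambda-\eta/2|\beta-1)\mathcal{D}_-(\lambda|\beta)\bigr].
\]
These simply read off the two columns of $\mathcal{U}_-(\lambda|\beta)$ and hold for arbitrary $\beta$, a freedom I will exploit below.

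Since the trace does not depend on the basis, I would compute $\mathcal{T}(\lambda)=\sum_a\phi^aK_+(\lambda)\mathcal{U}_-(\lambda)\psi_a$ using the biorthogonal pair formed by the columns $\psi_a$ of $\tilde G(\eta/2-\lambda|\cdot)$ and the rows $\phi^a$ of $\tilde G^{-1}(\eta/2-\lambda|\cdot)$, for which $\sum_a\psi_a\phi^a=\mathbbm{1}$ and $\phi^a\psi_b=\delta^a_b$. Inserting the right-action identities turns $\mathcal{T}(\lambda)$ into a sum of four terms, each the product of a scalar sandwich $[\text{gauge row}]\,K_+(\lambda)\,[\text{gauge column}]$ with one gauged generator. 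After using the scaling relations \rf{XY-tilde} and the definitions of $\hat X,\hat Y$ to rewrite the columns, the scalar sandwiches become the entries of a transformed boundary matrix; the remaining task is to match this with $K_+^{(L)}(\lambda|\beta-1)$ and to bring the generators to the heights stated in \rf{T-decomp-L}.

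The delicate point, which I expect to be the main obstacle, is exactly this height bookkeeping. Reading off a single gauge column ties $(\mathcal{A}_-,\mathcal{C}_-)$ to the height $\beta+2$ and $(\mathcal{B}_-,\mathcal{D}_-)$ to the height $\beta$, whereas the statement requires the mixed pattern $\mathcal{A}_-(\lambda|\beta),\mathcal{D}_-(\lambda|\beta),\mathcal{B}_-(\lambda|\beta-2),\mathcal{C}_-(\lambda|\beta+2)$. To pass from one to the other I would use that any three of the gauge vectors $X(\cdot|\beta'),Y(\cdot|\beta'')$ are linearly dependent in $\mathbb{C}^2$, so that $X(\eta/2-\lambda|\beta+1)$ and $Y(\eta/2-\lambda|\beta-1)$ can be re-expanded on $X(\eta/2-\lambda|\beta-1),Y(\eta/2-\lambda|\beta-3)$; sandwiched into the trace this trades a generator at one height for a combination at the neighbouring height, and when needed the $\lambda$-parity relations \rf{parity-m-1}--\rf{parity-m-3} interchange $\mathcal{A}_-$ and $\mathcal{D}_-$ at fixed height. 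The accompanying rearrangement of the scalar coefficients must then be shown to collapse, via the explicit forms \rf{X,Y-definitions}, \rf{XY-bar}, \rf{XY-tilde}, precisely onto the entries of $K_+^{(L)}(\lambda|\beta-1)$; this reduces to a finite set of $2\times2$ trigonometric identities, and keeping every $\beta$-shift consistent is where the real care lies.

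Finally, the second representation \rf{T-decomp-R} follows from the identical argument with the right gauge matrix $\bar G$ and the vectors $X,Y$ in place of $\tilde G$ and $\hat X,\hat Y$, producing $K_+^{(R)}$ instead of $K_+^{(L)}$; alternatively it can be obtained from \rf{T-decomp-L} by applying the $\beta$-parity relation $\mathcal{U}_-(\lambda|-\beta+2)=\sigma^x\mathcal{U}_-(\lambda|\beta)\sigma^x$ of \rf{U-gauge-symm} together with the evenness $\mathcal{T}(-\lambda)=\mathcal{T}(\lambda)$, which exchanges the roles of $\mathcal{B}_-$ and $\mathcal{C}_-$ and of the two transformed boundary matrices.
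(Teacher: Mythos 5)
Your general framework---evaluating the trace in a gauge basis of the auxiliary space and reading the gauged generators off the columns of $\mathcal{U}_-(\lambda)\tilde{G}$---is indeed the paper's starting point, and your two right-action identities are correct. But the proposal stops exactly where the content of the lemma lies. As you observe yourself, inserting a biorthogonal pair at a \emph{single} height $\beta'$ produces the pattern $\mathcal{A}_-(\lambda|\beta'+2)$, $\mathcal{C}_-(\lambda|\beta'+2)$, $\mathcal{B}_-(\lambda|\beta')$, $\mathcal{D}_-(\lambda|\beta')$, and the scalar sandwiches that come out are \emph{not} entries of $K_+^{(L)}(\lambda|\beta-1)$: for $\beta'=\beta-2$ the coefficient of $\mathcal{A}_-(\lambda|\beta)$ is $\tilde{Y}(\eta/2-\lambda|\beta-3)K_+(\lambda)X(\lambda-\eta/2|\beta-1)$, whereas $K_+^{(L)}(\lambda|\beta-1)_{11}$ contains $\hat{X}(\lambda-\eta/2|\beta+1)$, a non-proportional vector. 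The repair you sketch (re-expansion of gauge vectors on neighbouring heights plus parity relations) is never carried out, and the ``finite set of $2\times2$ trigonometric identities'' you defer to \emph{is} the lemma. Worse, part of the proposed toolkit cannot do the job: the relations \rf{parity-m-1}--\rf{parity-m-3} flip $\lambda\to-\lambda$ (introducing generators at $-\lambda$, which do not occur in \rf{T-decomp-L}) and leave the height $\beta$ untouched, so they can never carry $\mathcal{D}_-(\lambda|\beta-2)$ to $\mathcal{D}_-(\lambda|\beta)$ or $\mathcal{C}_-(\lambda|\beta)$ to $\mathcal{C}_-(\lambda|\beta+2)$.

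The missing idea, which is what the paper's proof actually hinges on, is a \emph{mixed-height} trace insertion: one introduces $\widehat{G}(\lambda|\beta)=\bigl(\hat{X}(\lambda|\beta+2),\hat{Y}(\lambda|\beta-2)\bigr)$ and checks---this is precisely what the normalization factors in the definitions of $\hat{X},\hat{Y}$ are for---that $\widehat{G}^{-1}(\lambda|\beta)$ is obtained by stacking the rows $\tilde{Y}(\lambda|\beta-2)$ and $\tilde{X}(\lambda|\beta+2)$. Writing $\mathcal{T}(\lambda)=\mathrm{tr}_0\{\widehat{G}^{-1}(\lambda-\eta/2|\beta-1)\,\mathcal{U}_-(\lambda)K_+(\lambda)\,\widehat{G}(\lambda-\eta/2|\beta-1)\}$ and inserting the resolution of identity $\tilde{G}\,\tilde{G}^{-1}=\mathbbm{1}$ at height $\beta-2$ in the first diagonal term and at height $\beta$ in the second yields in one stroke all four terms of \rf{T-decomp-L} with exactly the entries of $K_+^{(L)}(\lambda|\beta-1)$ as coefficients; the mixed pattern of heights appears automatically because two different resolutions of identity coexist inside the two diagonal entries of one explicitly invertible similarity transformation, with no re-expansions or parity relations needed. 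Finally, your alternative route to \rf{T-decomp-R} is incorrect: the symmetry \rf{U-gauge-symm}, with the relabelling $\beta\to2-\beta$, maps the operator content $\{\mathcal{A}_-(\lambda|\beta),\mathcal{D}_-(\lambda|\beta),\mathcal{B}_-(\lambda|\beta-2),\mathcal{C}_-(\lambda|\beta+2)\}$ of \rf{T-decomp-L} onto \emph{itself} (merely swapping $\mathcal{A}_-\leftrightarrow\mathcal{D}_-$ and $\mathcal{B}_-\leftrightarrow\mathcal{C}_-$), and evenness in $\lambda$ preserves heights, so no combination of the two symmetries can reach the content $\{\mathcal{A}_-(\lambda|\beta),\mathcal{D}_-(\lambda|\beta),\mathcal{B}_-(\lambda|\beta+2),\mathcal{C}_-(\lambda|\beta)\}$ of \rf{T-decomp-R}; that representation must be established independently by the analogous $\widehat{G}$-type argument built from the right gauge vectors, as in the first half of your closing remark.
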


\begin{proof}[Proof]
To prove the expression (\ref{T-decomp-L})
we  introduce a new gauge matrix
\begin{equation}
\widehat{G}(\lambda|\beta)=\left(
\hat{X}(\lambda|\beta+2),\hat{Y}(\lambda|\beta-2)
\right).
\end{equation}
  It is not difficult to check that 
\begin{equation}
\widehat{G}^{-1}(\lambda|\beta)=\left(\begin{array}{l}
\tilde{Y}(\lambda|\beta-2)\\ \tilde{X}(\lambda|\beta+2)\end{array}
\right).
\end{equation}
Now we can rewrite  the right hand side of (\ref{T-decomp-L}) as follows
\begin{align}
e^{\lambda-\eta/2}\Big( \mathcal{A}_{-}(\lambda |\beta)K_{+}^{(L)}(\lambda |\beta-1)_{11}+&\mathcal{B}%
_{-}(\lambda |\beta-2)K_{+}^{(L)}(\lambda |\beta-1)_{21}
\notag \\
+&\mathcal{D}_{-}(\lambda
|\beta)K_{+}^{(L)}(\lambda |\beta-1)_{22}+\mathcal{C}_{-}(\lambda
|\beta+2)K_{+}^{(L)}(\lambda |\beta-1)_{12}\Big)  \notag \\
  = \tilde{Y}(\lambda -\eta /2|\beta-3)%
\mathcal{U}_{-}(\lambda )\,&K_{+}(\lambda )\hat{X}(\lambda -\eta
/2|\beta+1)  \notag \\+&\tilde{X}(\lambda -\eta /2|\beta+1)\mathcal{U}%
_{-}(\lambda )K_{+}(\lambda )\hat{Y}(\lambda -\eta /2|\beta-3)  \notag \\
  = \mathrm{tr}_{0}\{%
\widehat{G}^{-1}(\lambda-\eta/2|\beta-1) &\mathcal{U}_{-}(\lambda )K_{+}(\lambda )\widehat{G}(\lambda-\eta/2|\beta-1) \}  \notag \\
& =\text{tr}_{0}\{
 \mathcal{U}_{-}(\lambda )K_{+}(\lambda )\}  = \mathcal{T}(\lambda ).
\end{align}%
The expression (\ref{T-decomp-R}) can be proved in a similar way.
\end{proof}

\begin{proposition}
The most general transfer matrix can be written in the following form 
\begin{align}
\mathcal{T}(\lambda )& =\mathsf{a}_{+}(\lambda|\beta-1 )\mathcal{A}_{-}(\lambda
|\beta)+\mathsf{a}_{+}(-\lambda|\be-1 )\mathcal{A}_{-}(-\lambda
|\beta)\notag\\
&+K_{+}^{(L)}(\lambda |\beta-1)_{21}\mathcal{B}_{-}(\lambda
|\beta-2)+K_{+}^{(L)}(\lambda |\beta-1)_{12}\mathcal{C}_{-}(\lambda |\beta+2), \\
\mathcal{T}(\lambda )& =\mathsf{d}_{+}(\lambda|\beta-1 )\mathcal{D}_{-}(\lambda
|\beta)+\mathsf{d}_{+}(-\lambda|\be-1 )\mathcal{D}_{-}(-\lambda
|\beta)\notag\\
&+K_{+}^{(R)}(\lambda |\beta-1)_{21}\mathcal{B}_{-}(\lambda
|\beta)+K_{+}^{(R)}(\lambda |\beta-1)_{12}\mathcal{C}_{-}(\lambda |\beta),
\end{align}%
where we have defined:%
\begin{align}
\mathsf{a}_{+}(\lambda|\beta )& =\frac{\sinh (2\lambda +\eta )} {\sinh 2\lambda \sinh( \beta-1) \eta \sinh
\zeta _{+}}\Big[ \sinh \zeta _{+}\cosh
(\lambda -\eta /2)\sinh (\lambda +\eta /2+\beta \eta)  \notag \\
&  -\left( \cosh \zeta _{+}\sinh (\lambda -\eta /2)\cosh (\lambda
+\eta /2+\beta \eta) +\kappa _{+}\sinh (2\lambda -\eta )\sinh (\tau
_{+}+\alpha\eta +2\eta) \right) \Big] \label{a+}\\
\mathsf{d}_{+}(\lambda|\beta )& =\frac{\sinh (2\lambda +\eta )}{ \sinh 2\lambda \sinh (\beta-1) \eta \sinh
\zeta _{+}}\Big[ \sinh \zeta _{+}\cosh
(\lambda -\eta /2)\sinh (-\lambda -\eta /2+\beta \eta)   \notag \\
&  -\left( \cosh \zeta _{+}\sinh (\lambda -\eta /2)\cosh (-\lambda
-\eta /2+\beta \eta) +\kappa _{+}\sinh (2\lambda -\eta )\sinh (\tau
_{+}+\alpha \eta) \right) \Big]. \label{d+}
\end{align}
\end{proposition}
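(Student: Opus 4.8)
The plan is to start from the two representations of the transfer matrix established in the preceding Lemma, namely (\ref{T-decomp-L}) and (\ref{T-decomp-R}), and to eliminate one of the two diagonal generators by means of the parity relations (\ref{parity-m-1}) and (\ref{parity-m-3}). For the first identity I would work with (\ref{T-decomp-L}), which already writes $e^{-\lambda+\eta/2}\mathcal{T}(\lambda)$ as a linear combination of $\mathcal{A}_-(\lambda|\beta)$, $\mathcal{D}_-(\lambda|\beta)$, $\mathcal{B}_-(\lambda|\beta-2)$ and $\mathcal{C}_-(\lambda|\beta+2)$ with coefficients the entries of $K_+^{(L)}(\lambda|\beta-1)$. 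Substituting (\ref{parity-m-3}) to express $\mathcal{D}_-(\lambda|\beta)$ through $\mathcal{A}_-(\lambda|\beta)$ and $\mathcal{A}_-(-\lambda|\beta)$ reproduces, after multiplying by $e^{\lambda-\eta/2}$, exactly the claimed structure: the $\mathcal{B}_-$ and $\mathcal{C}_-$ terms are left untouched, while the two diagonal terms collapse onto $\mathcal{A}_-(\pm\lambda|\beta)$.

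It then remains to identify the two resulting coefficients with $\mathsf{a}_+(\pm\lambda|\beta-1)$. After the substitution the coefficient of $\mathcal{A}_-(\lambda|\beta)$ is
\[
e^{\lambda-\eta/2}\left(K_+^{(L)}(\lambda|\beta-1)_{11}+K_+^{(L)}(\lambda|\beta-1)_{22}\,\frac{\sinh\eta\,\sinh(2\lambda+(\beta-1)\eta)}{\sinh 2\lambda\,\sinh\beta\eta}\right),
\]
and the coefficient of $\mathcal{A}_-(-\lambda|\beta)$ is
\[
e^{\lambda-\eta/2}\,K_+^{(L)}(\lambda|\beta-1)_{22}\,\frac{\sinh(2\lambda-\eta)\,\sinh(\beta-1)\eta}{\sinh 2\lambda\,\sinh\beta\eta}.
\]
I would now insert the explicit forms of the diagonal entries $K_+^{(L)}(\lambda|\beta-1)_{11}=\tilde{Y}K_+\hat{X}$ and $K_+^{(L)}(\lambda|\beta-1)_{22}=\tilde{X}K_+\hat{Y}$ (given in the Appendix and built from (\ref{ADMFKK}) together with the vector definitions (\ref{X,Y-definitions}), (\ref{XY-bar})--(\ref{XY-tilde}) and the hatted vectors), put the sum of the two diagonal contributions over a common denominator, and reduce to a hyperbolic identity so as to recognise the first coefficient as $\mathsf{a}_+(\lambda|\beta-1)$ in the form (\ref{a+}).

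The coefficient of $\mathcal{A}_-(-\lambda|\beta)$ is then obtained by the same direct reduction; the evenness $\mathcal{T}(-\lambda)=\mathcal{T}(\lambda)$ of the Proposition (\ref{even-transfer}) provides a useful check, since together with the parity relations (\ref{parity-m-2}) for $\mathcal{B}_-$ and $\mathcal{C}_-$ it forces this coefficient to be the $\lambda\to-\lambda$ image of the first, namely $\mathsf{a}_+(-\lambda|\beta-1)$. The second identity, with the $\mathcal{D}_-$ generators and $\mathsf{d}_+$, follows in the identical manner by starting from (\ref{T-decomp-R}) and using (\ref{parity-m-1}) to eliminate $\mathcal{A}_-(\lambda|\beta)$ in favour of $\mathcal{D}_-(\pm\lambda|\beta)$, the entries of $K_+^{(R)}(\lambda|\beta-1)$ now playing the role of the coefficients; alternatively it is the image of the first under the reflection symmetry (\ref{U-gauge-symm}).

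The step I expect to be the main obstacle is the purely trigonometric identification of the coefficient with $\mathsf{a}_+(\lambda|\beta-1)$. One must expand the scalar products $\tilde{Y}K_+\hat{X}$ and $\tilde{X}K_+\hat{Y}$, carefully carry along the exponential prefactors attached to $X,Y,\tilde{X},\tilde{Y},\hat{X},\hat{Y}$ together with the overall $e^{\lambda-\eta/2}$, and verify that the $\beta$-dependent factors $\sinh\beta\eta$, $\sinh(\beta-1)\eta$ and $\sinh(\beta+1)\eta$ reorganise so that only the single denominator $\sinh(\beta-2)\eta$ of (\ref{a+}) survives. In particular the boundary term $\kappa_+\sinh(2\lambda-\eta)\sinh(\tau_+ + \alpha\eta + 2\eta)$ has to emerge entirely from the off-diagonal entry of $K_+$, so that tracking the $\tau_+$- and $\alpha$-dependent phases is the delicate part of the computation; it is routine but must be carried out with care for all the cancellations to close.
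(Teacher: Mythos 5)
Your proposal takes essentially the same route as the paper: the paper's own (one-line) proof likewise consists of combining the decompositions \rf{T-decomp-L} and \rf{T-decomp-R} with the parity relations \rf{parity-m-1}, \rf{parity-m-3} of the gauged generators and the explicit Appendix entries of $K_{+}^{(L)}$ and $K_{+}^{(R)}$. Your intermediate coefficient identifications are correct, and the trigonometric reduction you outline is precisely the computation the paper leaves implicit.
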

To prove this proposition one should use the properties of the gauged operators and the explicit form of the $K_+$ matrices given in Appendix A.

\subsection{Reference states}

The ferromagnetic left and right states 
$$\bra{0}=\otimes _{n=1}^{\mathsf{N}}\Big( 1,0\Big)_n,\quad \ket{0}=\otimes _{n=1}^{\mathsf{N}}\left(\begin{array}{cc}1\\0\end{array}\right)_n$$
 are no more the highest weight vectors for the spin chains with non-diagonal boundaries and it is the reason why the Bethe ansatz does not work directly for this case. However using the gauge transformation we can define new reference states which can be used in the SOV framework.
 
We  define the following left reference state:%
\begin{equation}
\langle \beta|\equiv \otimes _{n=1}^{\mathsf{N}}\left( -1,e^{-\alpha \eta +(%
\mathsf{N}-n+\beta )\eta -\xi _{n}}\right) _{(n)}=N_{\beta}\bra{0}\prod_{n=1}^{%
\mathsf{N}}\bar{G}^{-1}_n(\xi _{n}|\be+\mathsf{N}-n),%
  \label{Left-B-ref}
\end{equation}%
where $\bar{G}_{n}^{-1 }(\xi _{n})$ is the gauge transformation acting in the {\it local quantum space}
 $\mathcal{H}_n$  and $N_{\beta}$ is a
normalization factor
\begin{equation}
N_{\beta}=2^{\mathsf{N}}e^{-\alpha \mathsf{N}\eta }\prod_{n=1}^{\mathsf{N}}\sinh
(\mathsf{N}-n+\beta )\eta .
\end{equation}
\begin{proposition}
The state $\langle \beta|$ is a simultaneous $B(\lambda |\beta)$ and $\bar{B}%
(\lambda |\beta)$ left reference state:%
\begin{eqnarray}
\langle \beta|B(\lambda |\beta) &=&\langle \beta|\bar{B}(\lambda |\beta)=0,  \label{Id-left-ref1} \\
\langle \beta|A(\lambda |\beta) &=&\frac{\sinh (\mathsf{N}+\beta )\eta }{\sinh
\beta \eta }\prod_{n=1}^{\mathsf{N}}\sinh (\lambda -\xi _{n}+\eta
/2)\langle \beta-1| ,\\
\langle \beta|D(\lambda |\beta) &=&\prod_{n=1}^{\mathsf{N}}\sinh (\lambda -\xi
_{n}-\eta /2)\langle \beta+1|, \\
\langle \beta|\bar{A}(\lambda |\beta) &=&\frac{\sinh \beta \eta }{\sinh (%
\mathsf{N}+\beta )\eta }\prod_{n=1}^{\mathsf{N}}\sinh (\lambda +\xi
_{n}+\eta /2)\langle \beta+1| ,\\
\langle \beta|\bar{D}(\lambda |\beta) &=&\prod_{n=1}^{\mathsf{N}}\sinh (\lambda +\xi
_{n}-\eta /2)\langle \beta-1| . \label{Id-left-ref5}
\end{eqnarray}
\end{proposition}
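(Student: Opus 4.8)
The plan is to reduce the global identities to a site-by-site computation, using that the bulk monodromy matrix $M(\lambda)=R_{0\mathsf{N}}(\lambda-\xi_\mathsf{N}-\eta/2)\cdots R_{01}(\lambda-\xi_1-\eta/2)$ is an ordered product of local $R$-matrices and that, by (\ref{Left-B-ref}), the covector $\langle\beta|$ is an ordered product of local gauge covectors $\langle 0|_{(n)}\bar{G}_n^{-1}(\xi_n|\beta+\mathsf{N}-n)$. Each gauged generator is the corresponding entry of the auxiliary-space matrix obtained by contracting $M(\lambda)$ with the rows of $\tilde G^{-1}$ and the columns of $\tilde G$ at shifted values of $\beta$; for instance $B(\lambda|\beta)=\tilde Y(\lambda-\eta/2|\beta-1)M(\lambda)Y(\lambda-\eta/2|\beta+\mathsf{N}-1)$, and $A,D$ are the remaining entries of the same sandwiched product. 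First I would insert these two factorizations into $\langle\beta|A(\lambda|\beta)$, $\langle\beta|B(\lambda|\beta)$ and $\langle\beta|D(\lambda|\beta)$ and propagate the auxiliary-space vector through the product of $R$-matrices one site at a time.

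The core of the argument is a local intertwining lemma: contracting a single $R_{0n}(\lambda-\xi_n-\eta/2)$ against the local gauge covector $\langle 0|_{(n)}\bar G_n^{-1}(\xi_n|\gamma)$ and the auxiliary gauge columns produces a triangular $2\times2$ pattern in the auxiliary space, whose diagonal channels carry the two $R$-matrix diagonal weights $\sinh(\lambda-\xi_n+\eta/2)$ and $\sinh(\lambda-\xi_n-\eta/2)$ and transfer the local covector to a $\beta$-shifted neighbour. This is a finite linear-algebra check using the explicit vectors in (\ref{X,Y-definitions})--(\ref{XY-tilde}) and the explicit $R$-matrix; the particular second component $e^{-\alpha\eta+(\mathsf{N}-n+\beta)\eta-\xi_n}$ of the local reference covector is exactly what makes the off-diagonal (upper-right) entry vanish, which is the triangularity responsible for the annihilation property (\ref{Id-left-ref1}).

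Telescoping this local relation across the $\mathsf{N}$ sites then yields all the stated identities. The vanishing of the off-diagonal auxiliary entry at every site gives $\langle\beta|B(\lambda|\beta)=0$; the top-left diagonal channel accumulates the product $\prod_n\sinh(\lambda-\xi_n+\eta/2)=a(\lambda)$ and the bottom-right channel accumulates $\prod_n\sinh(\lambda-\xi_n-\eta/2)=d(\lambda)$, in the notation of (\ref{eigenA}), while the threading of the auxiliary gauge vector through the chain shifts the whole quantum reference covector globally by $-1$ in the $A$-channel and by $+1$ in the $D$-channel, producing $\langle\beta-1|$ and $\langle\beta+1|$. For the ``right to left'' generators $\bar A,\bar B,\bar D$ I would run the identical computation on $\hat M(\lambda)$, either directly with the $\bar G$-gauge of (\ref{XY-bar}) or by recycling the bulk result through $\hat M(\lambda)=(-1)^{\mathsf{N}}\sigma_0^y M^{t_0}(-\lambda)\sigma_0^y$, which sends $\lambda\to-\lambda$ and interchanges the two channels, giving the reciprocal prefactor.

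The step I expect to be the main obstacle is the bookkeeping of these global $\beta$-shifts and of the normalization factors, rather than any single identity. Each site contributes a ratio of the type $\sinh\gamma\eta/\sinh(\gamma\mp1)\eta$ coming from the normalizations of $\tilde X,\tilde Y$ in (\ref{XY-tilde}), and these must telescope along the chain and combine with the ratio $N_\beta/N_{\beta\mp1}$ of the normalization constants in (\ref{Left-B-ref}) to assemble the nontrivial prefactor $\sinh(\mathsf{N}+\beta)\eta/\sinh\beta\eta$ in the $A$-action (collapsing to $1$ for $D$, and to the reciprocal factor for $\bar A$). Keeping the $\beta$-arguments of the rows of $\tilde G^{-1}$ and the columns of $\tilde G$ aligned across the $\mathsf{N}$-shift between the two ends of the monodromy matrix is the delicate part; everything else is routine $2\times2$ algebra.
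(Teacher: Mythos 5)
Your proposal is correct and takes essentially the same route as the paper, whose entire proof is the one-line remark that the proposition ``can be checked for local $R$-matrices by direct computation'' --- i.e.\ exactly the site-by-site triangularity and telescoping argument you describe. Your write-up merely makes explicit the structure (local intertwining lemma, $\beta$-shift and normalization bookkeeping) that the paper leaves implicit.
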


The proposition can be checked for local $R$-matrices by direct computation.


Similarly we can define the right reference state
\begin{equation}
|\beta\rangle \equiv \otimes _{n=1}^{\mathsf{N}}%
\begin{pmatrix}
e^{-\alpha \eta -(\mathsf{N}-n+\beta )\eta -\xi _{n}} \\ 
1%
\end{pmatrix}%
=\prod_{n=1}^{\mathsf{N}}\bar{G}_n(\xi
_{n}|\beta+\mathsf{N}-n)\ket{0},  \label{Right-C-ref}
\end{equation}%
and the
following proposition holds:

\begin{proposition}
\label{Right-ref}The state $|\beta+1\rangle $ is a simultaneous $C(\lambda |\beta)$
and $\bar{C}(\lambda |\beta)$ right reference state:
\begin{eqnarray}
C(\lambda |\beta)|\beta+1\rangle &=&\bar{C}(\lambda |\beta)|\beta+1\rangle=0 ,%
\text{ } \\
A(\lambda |\beta)|\beta+1\rangle &=&\prod_{n=1}^{\mathsf{N}}\sinh (\lambda -\xi
_{n}+\eta /2)|\beta+2\rangle , \\
D(\lambda |\beta)|\beta+1\rangle &=&\frac{\sinh \eta (\mathsf{N}+\beta )}{\sinh
\eta \beta }\prod_{n=1}^{\mathsf{N}}\sinh (\lambda -\xi _{n}-\eta
/2)|\beta\rangle , \\
\bar{A}(\lambda |\beta)|\beta+1\rangle &=&\prod_{n=1}^{\mathsf{N}}\sinh (\lambda
+\xi _{n}+\eta /2)|\beta\rangle , \\
\bar{D}(\lambda |\beta)|\beta+1\rangle &=&\frac{\sinh \eta \beta }{\sinh \eta (%
\mathsf{N}+\beta )}\prod_{n=1}^{\mathsf{N}}\sinh (\lambda +\xi _{n}-\eta
/2)|\beta+2\rangle .
\end{eqnarray}
\end{proposition}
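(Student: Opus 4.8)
The plan is to prove all five identities by the same site-by-site argument that underlies the corresponding statement for the left reference state. Two ingredients are needed: the factorization of the reference state and the telescoping of the gauged monodromy matrix into local Lax operators. Since $\bar{G}(\lambda|\beta)$ sends the local spin-up vector to its first column $X(\lambda|\beta)$, the definition (\ref{Right-C-ref}) gives $|\gamma\rangle=\otimes_{n=1}^{\mathsf{N}}X(\xi_n|\gamma+\mathsf{N}-n)$, so the reference state is the pure product $|\beta+1\rangle=\otimes_{n=1}^{\mathsf{N}}X(\xi_n|\beta+1+\mathsf{N}-n)$. On the operator side, the consecutive $\tilde{G}$ and $\tilde{G}^{-1}$ factors of the gauged local $R$-matrices cancel by the choice of their $\beta$-arguments, so that $M(\lambda|\beta)$ is the ordered product of the $R_{0a}(\lambda-\xi_a-\eta/2|\cdot)$; the action of $A(\lambda|\beta),C(\lambda|\beta),D(\lambda|\beta)$ on $|\beta+1\rangle$ can then be computed by contracting the auxiliary space one site at a time. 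The $\bar{A},\bar{C},\bar{D}$ families are handled identically starting from the gauged right-to-left matrix $\hat{M}(\lambda|\beta)$; the substitution $\lambda\to-\lambda$ hidden in $\hat{M}(\lambda)=(-1)^{\mathsf{N}}\sigma_0^y M^{t_0}(-\lambda)\sigma_0^y$ is what replaces every $\sinh(\lambda-\xi_n\pm\eta/2)$ by $\sinh(\lambda+\xi_n\pm\eta/2)$.

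The core is a one-site lemma asserting that each gauged local $R$-matrix acts triangularly in the auxiliary space on the local gauge vector $X(\xi_a|\cdot)$. First I would evaluate $R_{0a}(\lambda-\xi_a-\eta/2)$ on the tensor product of the auxiliary column $X(\lambda-\eta/2|\cdot)$ with the quantum column $X(\xi_a|\cdot)$. At the resonant value $\nu_a=\lambda-\xi_a-\eta/2$ the first entries of the auxiliary and of the quantum gauge column lock onto a common scale (up to factors $e^{\pm\eta}$ and $e^{\nu_a}$), so that contraction with the lower row $\tilde{X}$ collapses to zero by the elementary identity $\sinh\nu\,e^{-\eta}+\sinh\eta\,e^{\nu}=\sinh(\nu+\eta)$ and its companion, giving the vanishing of the $C$-channel, while contraction with the upper row $\tilde{Y}$ produces the diagonal factor $\sinh(\lambda-\xi_a+\eta/2)$ and a unit shift $X(\xi_a|\gamma)\mapsto X(\xi_a|\gamma+1)$ of the local label. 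The same computation with the $Y$ auxiliary column yields the $D$-channel factor $\sinh(\lambda-\xi_a-\eta/2)$ and the opposite shift $\gamma\mapsto\gamma-1$. As a base case I would carry out $\mathsf{N}=1$ in full, where the normalizations combine exactly to give $C(\lambda|\beta)|\beta+1\rangle=0$, $A(\lambda|\beta)|\beta+1\rangle=\sinh(\lambda-\xi_1+\eta/2)|\beta+2\rangle$ and $D(\lambda|\beta)|\beta+1\rangle=\tfrac{\sinh(\beta+1)\eta}{\sinh\beta\eta}\sinh(\lambda-\xi_1-\eta/2)|\beta\rangle$, already exhibiting the ratio factor of the general formula.

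For general $\mathsf{N}$ I would compose the $\mathsf{N}$ one-site actions through the telescoping product, or equivalently argue by induction on the number of sites. Only the all-diagonal path survives on the gauge vacuum, because the lower-triangular local action annihilates $X(\xi_a|\cdot)$; along this path each site contributes its factor $\sinh(\lambda-\xi_n\pm\eta/2)$ and shifts its local label by $+1$ (for $A$ and $\bar{D}$) or $-1$ (for $D$ and $\bar{A}$). Shifting every local label by the same amount is precisely a global shift of the state label, so this reproduces $|\beta+1\rangle\mapsto|\beta+2\rangle$ and $|\beta+1\rangle\mapsto|\beta\rangle$ respectively, with the claimed products of $\sinh$'s.

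The step I expect to demand the most care is the joint bookkeeping of the $\beta$-labels and of the accumulated local normalization ratios. The factor $\sinh\eta(\mathsf{N}+\beta)/\sinh\eta\beta$ in front of $D(\lambda|\beta)|\beta+1\rangle$ (and its reciprocal for $\bar{D}$) is genuinely global: it comes not from any single site but as a telescoping product of local ratios $\sinh\gamma\eta/\sinh(\gamma-1)\eta$ as the running label $\gamma$ sweeps the $\mathsf{N}$ consecutive values $\beta+1,\dots,\beta+\mathsf{N}$, whereas for $A$ the analogous ratios cancel to unity. Verifying that these products collapse correctly, and that the residual $\beta$-argument on the output state is exactly as stated, is the one part that is not a purely local identity; everything else reduces to the same elementary hyperbolic manipulations used in the one-site lemma.
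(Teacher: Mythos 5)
Your proposal is correct and takes essentially the same route as the paper, which treats this proposition (like its left-reference-state analogue) as something that "can be checked for local $R$-matrices by direct computation": your one-site intertwining lemma, the telescoping of the gauged local $R$-matrices into $M(\lambda|\beta)$ and $\hat{M}(\lambda|\beta)$, and the triangularity/path argument are exactly that local check carried out in detail. The bookkeeping you flag does work out as you describe: the $A$-channel normalizations cancel to unity, while the local $D$-channel ratios $\sinh\gamma\eta/\sinh(\gamma-1)\eta$ telescope over $\gamma=\beta+1,\dots,\beta+\mathsf{N}$ to the global factor $\sinh(\mathsf{N}+\beta)\eta/\sinh\beta\eta$.
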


\section{SOV representations of the gauge transformed reflection algebra}

In this section we construct explicitly the SOV representation of the gauged reflection algebra.  In general it is associated to the construction of the eigenstates of the operators $\mathcal{B}$ (or $\mathcal{C}$). However the gauge transformation and the particular structure of the reference states leads to a slightly different result. Instead of the eigenstates we construct right and left {\it pseudo-eigenstates} for these operators. More precisely, for any generic value of $\beta$ we will construct a basis in the Hilbert space $\mathcal{H}$
\[\bra{\beta,\mathbf{h}},\quad  \mathbf{h}\equiv (h_{1},...,h_{\mathsf{N}}),\quad h_{j}\in \{0,1\},\]
formed by states that we will call left pseudo-eigenstates of $\mathcal{B}_{-}(\lambda |\beta)$ if they satisfy the identities
\begin{equation}
\langle \beta,\mathbf{h}|\mathcal{B}_{-}(\lambda |\beta)=\mathsf{B}_{\mathbf{h}}(\lambda |\beta)\langle \beta-2,\mathbf{h}|,
\end{equation}
where for all the possible $\mathbf{h}$ the $\mathsf{B}_{-}(\lambda |\beta)$ are the pseudo-eigenvalues of $\mathcal{B}_{-}(\lambda |\beta)$, central elements in the algebra.
Similarly we can define the basis of right pseudo-eigenstates. 

The  results of this section can be summarized in  the following theorem. 

\begin{theorem}
\label{Th1}Let the inhomogeneities $\{\xi _{1},...,\xi _{\mathsf{N}}\}\in 
\mathbb{C}$ $^{\mathsf{N}}$ satisfy the following conditions:%
\begin{equation}
\xi _{a}\neq \xi _{b}+r\eta \text{ \ }\forall a\neq b\in \{1,...,\mathsf{N}%
\}\,\,\text{and\thinspace \thinspace }r\in \{-1,0,1\},  \label{E-SOV}
\end{equation}%
then:

I$_{b}$) for any  $\alpha ,\beta \in \mathbb{C}$ such that for any integer $k$%
\begin{equation}
(\alpha -\beta )\eta \neq (\mathsf{N}-1)\eta -\tau _{-}-(-1)^k (\alpha
_{-}+\beta _{-})+i\pi k,  \label{NON-nilp-B-L}
\end{equation}%
the one parameter family of the gauge transformed generators of the
reflection algebra $\mathcal{B}_{-}(\lambda |\beta)$\ is left
pseudo-diagonalizable and its pseudo-spectrum is simple.

II$_{b}$) for any fixed  $\alpha ,\beta \in \mathbb{C}$ such that  for any integer $k$
\begin{equation}
(\alpha -\beta )\eta \neq -(\mathsf{N}+1)\eta -\tau _{-}-(-1)^k (\alpha
_{-}+\beta _{-})+i\pi k,  \label{NON-nilp-B-R}
\end{equation}%
the one parameter family of the gauge transformed generators of the
reflection algebra $\mathcal{B}_{-}(\lambda |\beta)$\ is right
pseudo-diagonalizable and its pseudo-spectrum is simple.

I$_{c}$) for any fixed $m\in \mathbb{Z}$, $\alpha ,\beta \in \mathbb{C}$:%
\begin{equation}
(\alpha +\beta )\eta \neq (\mathsf{N}+1) \eta -\tau _{-}-(-1)^k
(\alpha _{-}+\beta _{-})+i\pi k,  \label{NON-nilp-C-L}
\end{equation}%
the one parameter family of the gauge transformed generators of the
reflection algebra $\mathcal{C}_{-}(\lambda |\beta)$\ is left
pseudo-diagonalizable and its pseudo-spectrum is simple.

II$_{c}$) for any  $\alpha ,\beta \in \mathbb{C}$ such that  for any integer $k$:%
\begin{equation}
(\alpha +\beta )\eta \neq -(\mathsf{N}-1)\eta -\tau _{-}-(-1)^k (\alpha
_{-}+\beta _{-})+i\pi k,  \label{NON-nilp-C-R}
\end{equation}%
the one parameter family of the gauge transformed generators of the
reflection algebra $\mathcal{C}_{-}(\lambda |\beta)$\ is right
pseudo-diagonalizable and its pseudo-spectrum is simple.

In all these cases we can construct a SOV representation of the gauge
transformed reflection algebra.
\end{theorem}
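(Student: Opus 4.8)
The plan is to carry out Sklyanin's separation-of-variables construction adapted to the dynamical (gauge-transformed) reflection algebra. Since the four statements I$_b$, II$_b$, I$_c$, II$_c$ are tied together by the symmetries of the algebra, I would prove I$_b$ in detail and deduce the rest. The parity relation \eqref{parity-m-2} shows that $\mathcal{B}_-(\lambda|\beta)$ is even in $\lambda$ up to an explicit ratio, and the $\sigma^x$-conjugation identity \eqref{B-to-C-identity}, $\mathcal{B}_-(\lambda|\beta)=\mathcal{C}_-(\lambda|-\beta+2)$, transports every $\mathcal{B}_-$ statement to a $\mathcal{C}_-$ statement under $\beta\mapsto-\beta+2$; this substitution maps the non-nilpotency condition \eqref{NON-nilp-B-L} precisely onto \eqref{NON-nilp-C-L} and \eqref{NON-nilp-B-R} onto \eqref{NON-nilp-C-R}. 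The left/right dichotomy (I versus II) is handled symmetrically, using the right reference state $|\beta+1\rangle$ of Proposition \ref{Right-ref} in place of the left one.

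First I would show that the left reference state $\langle\beta|$ is itself a pseudo-eigenstate of $\mathcal{B}_-(\lambda|\beta)$. Expanding $\mathcal{B}_-$ through the boundary-bulk decomposition \eqref{boundary-bulkBD} and acting on the left with $\langle\beta|$, the reference relations \eqref{Id-left-ref1}--\eqref{Id-left-ref5} collapse the expression: the annihilations $\langle\beta|B(\lambda|\beta)=0$ and $\langle\beta-1|\bar B(\lambda|\beta-1)=0$ kill every term except the chain $A(\lambda|\beta)\,\bar D(\lambda|\beta-1)$, which produces $\langle\beta-2|$ times an explicit scalar pseudo-eigenvalue $\mathsf{B}_{\mathbf 0}(\lambda|\beta)$ proportional to $K_-(\lambda|\beta)_{12}\prod_{n}\sinh(\lambda-\xi_n+\eta/2)\sinh(\lambda+\xi_n-\eta/2)$. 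By \eqref{parity-m-2} this pseudo-eigenvalue is an even function of $\lambda$ determined by its $\mathsf{N}$ pairs of zeros together with a single overall constant; the non-nilpotency hypothesis \eqref{NON-nilp-B-L} is exactly what guarantees that the gauged boundary factor $K_-(\lambda|\beta)_{12}$, computed from \eqref{K-beta} and Appendix A, does not vanish identically, so that this constant is nonzero and $\mathcal{B}_-$ is genuinely non-nilpotent with its full complement of $\mathsf{N}$ dynamical operator zeros.

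Next I would generate the full family $\langle\beta,\mathbf h|$ by using $\mathcal{A}_-$ as a ladder operator on the separated variables, the latter being the operator zeros of $\mathcal{B}_-$. Evaluating the three-term relation \eqref{CMR-AB-Left} at the special point $\lambda_1=\xi_n-\eta/2$, a zero of $\mathsf{B}_{\mathbf 0}$, and eliminating $\mathcal{D}_-$ in favour of $\mathcal{A}_-$ through \eqref{parity-m-1}--\eqref{parity-m-3}, one checks that acting with $\mathcal{A}_-$ at this point flips the single index $h_n\colon 0\mapsto 1$, shifting the corresponding pair of zeros from $\pm(\xi_n-\eta/2)$ to $\pm(\xi_n+\eta/2)$ while leaving the others intact and respecting the dynamical $\beta$-shifts recorded in the relations. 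Iterating over $n=1,\dots,\mathsf{N}$ produces $2^{\mathsf{N}}$ states $\langle\beta,\mathbf h|$, $\mathbf h\in\{0,1\}^{\mathsf N}$, each satisfying $\langle\beta,\mathbf h|\mathcal{B}_-(\lambda|\beta)=\mathsf{B}_{\mathbf h}(\lambda|\beta)\langle\beta-2,\mathbf h|$ with pseudo-eigenvalue obtained from $\mathsf{B}_{\mathbf 0}$ by the prescribed zero-shifts; the dynamical commutativity \eqref{CRM-BB} ensures these pseudo-eigenvalues are consistently defined across the $\beta$-sectors. To conclude I would invoke the genericity condition \eqref{E-SOV}: the excluded separations $\xi_a-\xi_b\in\{-\eta,0,\eta\}$ are precisely those that would make two of the $2\mathsf N$ shifted inhomogeneities $\{\xi_n-\eta/2,\ \xi_n+\eta/2\}$ coincide, so \eqref{E-SOV} forces the $2^{\mathsf N}$ functions $\mathsf{B}_{\mathbf h}$ to be pairwise distinct. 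Distinct pseudo-eigenvalues imply linear independence of the associated pseudo-eigenstates by the standard separation argument, and since there are $2^{\mathsf N}=\dim\mathcal H$ of them they form a basis; this is the asserted pseudo-diagonalizability with simple pseudo-spectrum, and it furnishes the SOV representation.

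The main obstacle I expect is the ladder step. Verifying that $\mathcal{A}_-$ at $\lambda_1=\xi_n-\eta/2$ acts as a clean one-step shift, with no residual cross-terms mixing distinct sites and with the two adjacent $\beta$-sectors closing consistently, requires careful bookkeeping of \eqref{CMR-AB-Left} and \eqref{CRM-BB} against the dynamical parity relations \eqref{parity-m-1}--\eqref{parity-m-3}; it is here, through the non-vanishing of the shift coefficients, that the non-nilpotency condition \eqref{NON-nilp-B-L} is used a second time. The remaining cases II$_b$, I$_c$, II$_c$ then follow from the symmetries noted above, with \eqref{NON-nilp-B-R}, \eqref{NON-nilp-C-L}, \eqref{NON-nilp-C-R} replacing \eqref{NON-nilp-B-L} and with no new analytic content.
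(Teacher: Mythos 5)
Your proposal follows essentially the same route as the paper's proof: you show the reference state $\langle\beta|$ is a $\mathcal{B}_-$-pseudo-eigenstate via the boundary--bulk decomposition \eqref{boundary-bulkBD} and the reference-state relations \eqref{Id-left-ref1}--\eqref{Id-left-ref5}; you correctly identify the non-nilpotency condition \eqref{NON-nilp-B-L} as the non-vanishing of the gauged entry $K_-(\lambda|\beta)_{12}$ (equivalently of $\mathsf{B}_{\mathbf 0}$); you build the basis by acting with $\mathcal{A}_-$ at zeros of the pseudo-eigenvalue; you use \eqref{E-SOV} to separate the $2^{\mathsf N}$ pseudo-eigenvalue functions; and you deduce I$_c$, II$_c$ from \eqref{B-to-C-identity} under $\beta\mapsto-\beta+2$, which indeed carries \eqref{NON-nilp-B-L}, \eqref{NON-nilp-B-R} onto \eqref{NON-nilp-C-L}, \eqref{NON-nilp-C-R}. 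This is the paper's construction, at the paper's own level of detail (the paper, too, defers the unwanted-term bookkeeping and the independence argument to \cite{Nic12b}).

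One step fails as literally written: you take the ladder operator to be $\mathcal{A}_-$ evaluated at $\lambda=\xi_n-\eta/2$. Both points $\pm(\xi_n-\eta/2)$ are zeros of $\mathsf{B}_{\mathbf 0}(\lambda|\beta)$, but they are not interchangeable: the decomposition \eqref{boundary-bulkAC} gives
\begin{equation*}
\langle \beta|\mathcal{A}_{-}(\xi _{n}-\eta /2|\beta+2)=0, \qquad \langle \beta|
\mathcal{A}_{-}(\eta /2-\xi _{n}|\beta+2)\neq 0,
\end{equation*}
so at your point every would-be basis covector is the zero covector; the paper's basis \eqref{D-left-eigenstates} is built with $\mathcal{A}_-(\eta/2-\xi_n|\beta+2)$. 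The fix is only a sign, and your stated outcome (the pair of zeros $\pm(\xi_n-\eta/2)$ traded for $\pm(\xi_n+\eta/2)$) is the correct one, but the choice of which zero to use is exactly the point the paper flags. Two smaller corrections: the order-independence (consistency) of the definition of $\langle\beta,\mathbf h|$ comes from the relation \eqref{CMR-AA-BC}, not from \eqref{CRM-BB}; and the right case II$_b$ is not purely a mirror of I$_b$ --- the paper ladders with $\mathcal{D}_-(\xi_n+\eta/2|\beta)$ from the state $|-\beta+2\rangle$ (obtained by combining Proposition \ref{Right-ref} with the symmetry \eqref{B-to-C-identity}), and the distinct condition \eqref{NON-nilp-B-R}, with $-(\mathsf N+1)\eta$ in place of $(\mathsf N-1)\eta$, arises from the non-vanishing of a \emph{different} gauged boundary entry, so there is a genuinely separate computation there rather than ``no new analytic content.''
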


The proof and some necessary clarifications of the statements contained in
this theorem are given by the explicit constructions of the SOV representation in
the next subsections. In fact, we do these constructions explicitly only for
the cases I$_{b}$) and II$_{b}$) as for the cases I$_{c}$) and II$_{c}$)
these constructions can be induced from the others due to the symmetries.

\subsection{$\mathcal{B}_{-}(\la|\beta)$-SOV representations of the gauge
transformed reflection algebra}

\subsubsection{Left $\mathcal{B}_{-}(\la|\beta)$-SOV representations of the gauge
transformed reflection algebra}

In this subsection we construct 
the left $\mathcal{B}_{-}(\lambda |\beta)$-pseudo-eigenbasis.

\begin{theorem}
\underline{Left $\mathcal{B}_{-}(\lambda |\beta)$ SOV-basis} \ The
following states:%
\begin{equation}
\langle \beta,h_{1},...,h_{\mathsf{N}}|= 
\langle \beta|\prod_{n=1}^{\mathsf{N}}\left( \frac{\mathcal{A}_{-}(\eta /2-\xi
_{n}|\beta+2)}{\mathsf{A}_{-}(\eta /2-\xi _{n})}\right) ^{h_{n}},
\label{D-left-eigenstates}
\end{equation}%
where $\langle \beta|$ is the state defined in (\ref{Left-B-ref}) and the function $\mathsf{A}_-$ is given by (\ref{eigenA}).
 If ( \ref{E-SOV}%
)  and ( \ref{NON-nilp-B-L})  are satisfied, these
states define a basis of $\mathcal{H}$ formed out of
pseudo-eigenstates of $\mathcal{B}_{-}(\lambda |\beta)$:%
\begin{equation}
\langle \beta,\mathbf{h}|\mathcal{B}_{-}(\lambda |\beta)=\mathsf{B}_{\mathbf{h}}(\lambda |\beta)\langle \beta-2,\mathbf{h}|,
\label{right-B-eigen-cond}
\end{equation}%
where $\langle \beta,\mathbf{h}|= \langle \beta,h_{1},...,h_{\mathsf{N}}|$,\,
  \,$\mathbf{h}= (h_{1},...,h_{\mathsf{N}})$,\, $h_{j}\in \{0,1\}$ and%
\begin{align}
\mathsf{B}_{\mathbf{h}}(\lambda |\beta)=& \left( -1\right) ^{%
\mathsf{N}}e^{(\beta +\mathsf{N})\eta }a_{\mathbf{h}%
}(\lambda )a_{\mathbf{h}}(-\lambda )\notag\\
&\times\frac{\sinh (2\lambda -\eta )\left(
2\kappa _{-}\sinh \left[ (\mathsf{N}+\beta -\alpha -1)\eta -\tau _{-}%
\right] -e^{\zeta _{-}}\right) }{2
\sinh \zeta _{-}\sinh (\mathsf{N}+\beta )\eta },
\end{align}%
with%
\begin{equation}\label{a_h}
a_{\mathbf{h}}(\lambda )= \prod_{n=1}^{\mathsf{N}}\sinh (\lambda
-\xi _{n}-(h_{n}-\frac{1}{2})\eta ).
\end{equation}%
\end{theorem}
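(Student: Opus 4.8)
The plan is to prove the statement by induction on the occupation number $|\mathbf{h}|=\sum_n h_n$, viewing each factor $\mathcal{A}_{-}(\eta/2-\xi_n|\beta+2)$ in \eqref{D-left-eigenstates} as a shift operator that flips $h_n$ from $0$ to $1$ and multiplies the pseudo-eigenvalue by a prescribed ratio. Three ingredients are needed: (i) the base case, that the reference state $\langle\beta|$ is a $\mathcal{B}_{-}(\lambda|\beta)$ pseudo-eigenstate with eigenvalue $\mathsf{B}_{\mathbf{0}}(\lambda|\beta)$; (ii) the inductive shift step; and (iii) linear independence of the resulting $2^{\mathsf{N}}$ states.

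For the base case I would use the dynamical boundary-bulk decomposition \eqref{boundary-bulkBD} to write $\mathcal{B}_{-}(\lambda|\beta)$ as $e^{-\lambda+\eta/2}$ times a combination of $\bar{B}(\lambda|\beta-1)$ and $\bar{D}(\lambda|\beta-1)$, with operator coefficients drawn from $M(\lambda|\beta)K_{-}(\lambda|\beta)$. Acting on $\langle\beta|$ and applying the reference-state identities \eqref{Id-left-ref1}--\eqref{Id-left-ref5} in two stages -- first $\langle\beta|B(\lambda|\beta)=0$ together with $\langle\beta|A(\lambda|\beta)\propto\langle\beta-1|$, then at level $\beta-1$ the vanishing $\langle\beta-1|\bar{B}(\lambda|\beta-1)=0$ together with $\langle\beta-1|\bar{D}(\lambda|\beta-1)\propto\langle\beta-2|$ -- telescopes the expression down to a single surviving term proportional to $\langle\beta-2|$. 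Its scalar is $[K_{-}(\lambda|\beta)]_{12}$ times $\prod_n\sinh(\lambda-\xi_n+\eta/2)\sinh(\lambda+\xi_n-\eta/2)=(-1)^{\mathsf{N}}a_{\mathbf{0}}(\lambda)a_{\mathbf{0}}(-\lambda)$ with $a_{\mathbf{h}}$ as in \eqref{a_h}; inserting the explicit gauged matrix element from \eqref{K-beta} (Appendix A) and using \eqref{alfa-beta} to rewrite $e^{\zeta_-}=2\kappa_-\sinh(\alpha_-+\beta_-)$ should reproduce exactly the stated $\mathsf{B}_{\mathbf{0}}(\lambda|\beta)$.

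The heart of the argument is the shift step. Assuming $\langle\beta,\mathbf{h}|$ with $h_n=0$ is a pseudo-eigenstate, I would append $\mathcal{A}_{-}(\eta/2-\xi_n|\beta+2)$ and commute $\mathcal{B}_{-}(\lambda|\beta)$ past it with \eqref{CMR-AB-Left} at $\lambda_2=\eta/2-\xi_n$, $\lambda_1=\lambda$. The coefficient of the wanted term $\mathcal{B}_{-}(\lambda|\beta)\mathcal{A}_{-}(\eta/2-\xi_n|\beta)$ simplifies to $\frac{\sinh(\lambda+\xi_n+\eta/2)\sinh(\lambda-\xi_n-\eta/2)}{\sinh(\lambda-\xi_n+\eta/2)\sinh(\lambda+\xi_n-\eta/2)}$, which is exactly the ratio $\mathsf{B}_{\mathbf{h}+\mathbf{e}_n}(\lambda|\beta)/\mathsf{B}_{\mathbf{h}}(\lambda|\beta)$, whereas the two remaining terms both carry $\mathcal{B}_{-}(\eta/2-\xi_n|\beta)$ as their leftmost factor. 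The key observation is that $\mathsf{B}_{\mathbf{h}}(\eta/2-\xi_n|\beta)=0$ whenever $h_n=0$, because the $m=n$ factor of $a_{\mathbf{h}}(-(\eta/2-\xi_n))=a_{\mathbf{h}}(\xi_n-\eta/2)$ in \eqref{a_h} is $\sinh(-h_n\eta)$ and hence vanishes; by the inductive hypothesis this kills both unwanted terms. After reabsorbing the normalization $\mathsf{A}_{-}(\eta/2-\xi_n)$ from \eqref{eigenA} and the $\beta\to\beta-2$ shift, one lands on $\mathsf{B}_{\mathbf{h}+\mathbf{e}_n}(\lambda|\beta)\langle\beta-2,\mathbf{h}+\mathbf{e}_n|$, closing the induction. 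The same vanishing applied to \eqref{CMR-AA-BC} shows that the appended $\mathcal{A}_{-}(\eta/2-\xi_n|\beta+2)$ factors effectively commute on these states, so the product in \eqref{D-left-eigenstates} is order-independent and the states are well defined.

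Finally, linear independence follows from simplicity of the pseudo-spectrum. Since the $\mathsf{B}_{\mathbf{h}}$ differ only through $a_{\mathbf{h}}(\lambda)a_{\mathbf{h}}(-\lambda)$, whose zeros sit at $\xi_n\mp\eta/2$ according to $h_n$, condition \eqref{E-SOV} forces distinct $\mathbf{h}$ to yield distinct pseudo-eigenvalue functions (consistent with the even structure imposed by the parity relation \eqref{parity-m-2}); combined with the dynamical commutativity \eqref{CRM-BB} this produces $2^{\mathsf{N}}$ linearly independent states, hence a basis of the $2^{\mathsf{N}}$-dimensional $\mathcal{H}$. Condition \eqref{NON-nilp-B-L} enters precisely to keep the overall prefactor $2\kappa_-\sinh[(\mathsf{N}+\beta-\alpha-1)\eta-\tau_-]-e^{\zeta_-}$ nonzero (via $e^{\zeta_-}=2\kappa_-\sinh(\alpha_-+\beta_-)$ from \eqref{alfa-beta}), so that $\mathcal{B}_{-}(\lambda|\beta)$ is not nilpotent and no pseudo-eigenvalue degenerates. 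I expect the main obstacle to be the bookkeeping of the base case: matching the gauged matrix element $[K_{-}(\lambda|\beta)]_{12}$ and all $\beta$-dependent prefactors against the closed form of $\mathsf{B}_{\mathbf{0}}$ demands careful handling of the Appendix A data and of the successive gauge shifts, with the order-independence of the $\mathcal{A}_-$ product a secondary technical point.
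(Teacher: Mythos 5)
Your proposal is correct and takes essentially the same route as the paper: the base case via the boundary-bulk decomposition \eqref{boundary-bulkBD} and the reference-state identities \eqref{Id-left-ref1}--\eqref{Id-left-ref5}, the flip step via \eqref{CMR-AB-Left} with the key vanishing $\mathsf{B}_{\mathbf{h}}(\eta/2-\xi_{n}|\beta)=0$ when $h_{n}=0$, order independence via \eqref{CMR-AA-BC}, and \eqref{E-SOV}, \eqref{NON-nilp-B-L} securing the basis property and the non-degenerate prefactor. The only difference is that you spell out explicitly the inductive commutation step that the paper delegates to the analogous proof in \cite{Nic12b}.
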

\begin{proof}[Proof]
It is worth writing explicitly the (boundary-bulk) decomposition of the
gauge transformed reflection algebra generator (\ref{boundary-bulkBD})%
\begin{align}
e^{\lambda -\eta /2}\mathcal{B}_{-}(\lambda |\beta)& =K_{-}(\lambda
|\beta)_{12}A(\lambda |\beta)\bar{D}(\lambda |\beta-1)+K_{-}(\lambda |\beta)_{11}A(\lambda
|\beta)\bar{B}(\lambda |\beta-1)  \notag \\
& +K_{-}(\lambda |\beta)_{21}B(\lambda |\beta)\bar{B}(\lambda |\beta-1)+K_{-}(\lambda
|\beta)_{22}B(\lambda |\beta)\bar{D}(\lambda |\beta-1).
\end{align}%
Then, the formulae (\ref{Id-left-ref1}- \ref{Id-left-ref5})
 imply that $\langle \beta|$ is a $\mathcal{B}%
_{-}(\lambda |\beta)$-pseudo-eigenstate with non-zero eigenvalue:%
\begin{equation}
\langle \beta|\mathcal{B}_{-}(\lambda )= \mathsf{B}_{\mathbf{%
0}}(\lambda|\beta )\langle \beta-2|,
\end{equation}%
where:%
\begin{equation}
\mathsf{B}_{\mathbf{0}}(\lambda |\beta)=
\left( -1\right) ^{\mathsf{N}}e^{-\lambda +\eta
/2}K_{-}(\lambda |\beta)_{12}a_{\mathbf{0}}(\lambda )a_{\mathbf{0}%
}(-\lambda ),
\end{equation}%
$a_{\mathbf{0}}(\lambda )$ is given by (\ref{a_h}) for all $h_j=0$ and:%
\begin{equation}
e^{-\lambda +\eta /2}K_{-}(\lambda |\beta)_{12}=\frac{e^{(\beta +\mathsf{N})\eta }\sinh
(2\lambda -\eta )(2\kappa _{-}\sinh \left[ (\mathsf{N}+\beta -\alpha -1)\eta -\tau
_{-}\right] -e^{\zeta _{-}})}{2\sinh (\mathsf{N}+\beta )\eta \sinh \zeta _{-}}.
\end{equation}%
Now by using the reflection algebra commutation relations we can follow step
by step the proof given in \cite{Nic12b} to prove the validity of (\ref%
{right-B-eigen-cond}). Under the condition (\ref{E-SOV}),
these relations also imply that the set of states $\langle \beta$, $\mathbf{h}|$
forms a set of 2$^{\mathsf{N}}$ independent states, i.e. a $\mathcal{B}%
_{-}(\lambda |\beta)$-pseudo-eigenbasis of $\mathcal{H}$. The
action of $\mathcal{A}_{-}(\zeta _{b}^{(h_{b})}|\beta+2)$ for $b\in \{1,...,2%
\mathsf{N}\}$ follows by the definition of the states $\langle \beta,\mathbf{h}%
|$, the reflection algebra commutation relations (\ref{CMR-AB-Left}), the quantum determinant relations and the conditions:%
\begin{equation}
\langle \beta|\mathcal{A}_{-}(\xi _{n}-\eta /2|\beta+2)=0, \ \ \langle \beta|%
\mathcal{A}_{-}(\eta /2-\xi _{n}|\beta+2)\neq 0
\end{equation}%
which trivially follows from the boundary-bulk decomposition (\ref{boundary-bulkAC})%
\begin{align}
e^{\lambda -\eta /2}\mathcal{A}_{-}(\lambda |\beta+2)& =\bar{K}_{-}(\lambda
|\beta)_{11}A(\lambda |\beta)\bar{A}(\lambda |\beta+1)+\bar{K}_{-}(\lambda
|\beta)_{12}A(\lambda |\beta)\bar{C}(\lambda |\beta+1)  \notag \\
& +\bar{K}_{-}(\lambda |\beta)_{21}B(\lambda |\beta)\bar{A}(\lambda |\beta+1)+\bar{K}%
_{-}(\lambda |\beta)_{22}B(\lambda |\beta)\bar{C}(\lambda |\beta+1).
\end{align}%
It is important to point out that the states $\langle \beta$, 
$\mathbf{h}|$ are well defined non-zero states and their definition
does not depend on the order of operator $\mathcal{A}_{-}(\eta/2-\xi
_{b}|\beta+2)$ in their definition as it  follows from the
commutation relations (\ref{CMR-AA-BC}).
\end{proof}

\begin{theorem}
The action of the 
reflection algebra generators $\mathcal{A}_{-}(\lambda |\beta+2)$ on the generic state $\langle \beta$, $\mathbf{h}|$,  is given by the following expression 
\begin{align}
\langle \beta,\mathbf{h}|\mathcal{A}_{-}(\lambda |\beta+2)& =\sum_{a=1}^{2%
\mathsf{N}}\frac{\sinh (2\lambda -\eta )\sinh (\lambda +\zeta _{a}^{(h_{a})})%
}{\sinh (2\zeta _{a}^{(h_{a})}-\eta )\sinh 2\zeta _{a}^{(h_{a})}}\notag\\
&\times\prod 
_{\substack{ b=1  \\ b\neq a\text{ mod}\mathsf{N}}}^{\mathsf{N}}\frac{\cosh
2\lambda -\cosh 2\zeta _{b}^{(h_{b})}}{\cosh 2\zeta _{a}^{(h_{a})}-\cosh
2\zeta _{b}^{(h_{b})}}\mathsf{A}_{-}(\zeta _{a}^{(h_{a})})  
 \langle \beta,\mathbf{h}|T_{a}^{-\varphi
_{a}}
 \notag \\
& +\det_{q}M(0)\cosh (\lambda -\eta /2)\prod_{b=1}^{\mathsf{N}}\frac{%
\cosh 2\lambda -\cosh 2\zeta _{b}^{(h_{b})}}{\cosh \eta -\cosh 2\zeta
_{b}^{(h_{b})}}\langle \beta,\mathbf{h}|  \notag \\
& +(-1)^{\mathsf{N}+1}\coth \zeta _{-}\det_{q}M(i\pi /2)\sinh (\lambda -\eta
/2)\prod_{b=1}^{\mathsf{N}}\frac{\cosh 2\lambda -\cosh 2\zeta _{b}^{(h_{b})}%
}{\cosh \eta +\cosh 2\zeta _{b}^{(h_{b})}}\langle \beta,\mathbf{h}|,
\label{L-SOV A-}
\end{align}%
where $h_{n+\mathsf{N}}\equiv h_n\in \{0,1\}$, and%
\begin{equation}
\zeta _{n}^{(h_{n})} =\varphi _{n}\left[ \xi _{n}+(h_{n}-\frac{1}{2})\eta %
\right]\quad \forall n\in
\{1,...,2\mathsf{N}\}, 
\end{equation}
\begin{equation}
\varphi _{a} =1\quad \text{ for }\quad a\le\mathsf{N}\quad \text{  and \ }\quad\varphi _{a} =-1 \quad\text{ for }\quad a>\mathsf{N},
\end{equation}%
and:%
\begin{equation}
\langle \beta,h_{1},...,h_{a},...,h_{\mathsf{N}}|T_{a}^{\pm }=\langle
\beta,h_{1},...,h_{a}\pm 1,...,h_{\mathsf{N}}|.
\end{equation}%
\smallskip
\end{theorem}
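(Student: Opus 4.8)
The plan is to reconstruct $\mathcal{A}_{-}(\lambda|\beta+2)$, acting on a fixed covector $\langle\beta,\mathbf{h}|$, as a trigonometric polynomial in $\lambda$ and to fix it by interpolation in the variable $\cosh 2\lambda$. Acting on $\langle\beta,\mathbf{h}|$ the result is a combination of SOV covectors with coefficients that are trigonometric polynomials in $\lambda$, and I would first argue that only the single shifts $\langle\beta,\mathbf{h}|T_{a}^{-\varphi_{a}}$ and the diagonal term $\langle\beta,\mathbf{h}|$ can occur: in the boundary-bulk decomposition (\ref{boundary-bulkAC}) each bulk factor changes a single occupation number $h_{n}$ by at most one unit, and the subsequent interpolation is completely determined by the values at the nodes, each of which is a single shift. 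It therefore suffices to determine a finite list of scalar coefficients.

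The values at the separated-variable points are already in hand from the proof of the preceding theorem: using the definition (\ref{D-left-eigenstates}) of the states, the commutation relation (\ref{CMR-AB-Left}), the quantum-determinant relation (\ref{gauge-q-det-A}) and the vanishing condition $\langle\beta|\mathcal{A}_{-}(\xi_{n}-\eta/2|\beta+2)=0$, one obtains $\langle\beta,\mathbf{h}|\mathcal{A}_{-}(\zeta_{a}^{(h_{a})}|\beta+2)=\mathsf{A}_{-}(\zeta_{a}^{(h_{a})})\langle\beta,\mathbf{h}|T_{a}^{-\varphi_{a}}$ for every $a\in\{1,\dots,2\mathsf{N}\}$, the order-independence being guaranteed by (\ref{CMR-AA-BC}). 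These $2\mathsf{N}$ evaluations supply the residues carried by the sum over $a$ in the statement; since $\cosh 2\zeta_{b}^{(h_{b})}=\cosh 2\zeta_{b+\mathsf{N}}^{(h_{b})}$, the $2\mathsf{N}$ points collapse to $\mathsf{N}$ distinct nodes of $\cosh 2\lambda$, which is why the Lagrange kernels $\prod_{b\neq a}(\cosh 2\lambda-\cosh 2\zeta_{b}^{(h_{b})})/(\cosh 2\zeta_{a}^{(h_{a})}-\cosh 2\zeta_{b}^{(h_{b})})$ run over $\mathsf{N}$ indices only.

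The two remaining coefficients are the diagonal ones, which I would fix by evaluating $\mathcal{A}_{-}(\lambda|\beta+2)$ at the distinguished values $\lambda=\eta/2$ and $\lambda=\eta/2+i\pi/2$. At both of these the parity relation (\ref{parity-m-2}) forces $\mathcal{B}_{-}(\lambda|\beta)=\mathcal{C}_{-}(\lambda|\beta)=0$, so the quantum-determinant relation (\ref{gauge-q-det-A}) collapses to $\mathcal{A}_{-}(\lambda|\beta+2)^{2}=\det_{q}\mathcal{U}_{-}/\sinh(2\lambda-2\eta)$ and $\mathcal{A}_{-}(\lambda|\beta+2)$ acts as a scalar equal to $\pm\mathsf{A}_{-}(\lambda)$ by (\ref{q-detU_-exp}). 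Using $g_{-}(\eta/2)=1$ and $g_{-}(\eta/2+i\pi/2)=-\coth\zeta_{-}$, which follow from (\ref{g_PM})--(\ref{alfa-beta}), these scalars become $\det_{q}M(0)$ at $\lambda=\eta/2$ and $\coth\zeta_{-}\det_{q}M(i\pi/2)$ at $\lambda=\eta/2+i\pi/2$, the sign $(-1)^{\mathsf{N}+1}$ arising from the $\pm i\pi/2$ shifts inside the bulk quantum determinant (\ref{bulk-q-det}). The factor $\cosh(\lambda-\eta/2)$, nonzero at $\eta/2$, then carries the $\det_{q}M(0)$ term with node $\cosh 2\lambda=\cosh\eta$, while $\sinh(\lambda-\eta/2)$, nonzero at $\eta/2+i\pi/2$, carries the $\det_{q}M(i\pi/2)$ term with node $\cosh 2\lambda=-\cosh\eta$.

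Finally I would assemble the full $\lambda$-dependence by Lagrange interpolation in $\cosh 2\lambda$, using the $\mathsf{N}$ nodes $\cosh 2\zeta_{b}^{(h_{b})}$ together with the two extra nodes $\pm\cosh\eta$; the odd-in-$\lambda$ factors $\sinh(2\lambda-\eta)\sinh(\lambda+\zeta_{a}^{(h_{a})})$ in the shift terms and the factors $\cosh(\lambda-\eta/2),\sinh(\lambda-\eta/2)$ in the diagonal terms implement the even/odd splitting dictated by the parity relations (\ref{parity-m-1})--(\ref{parity-m-3}). The \emph{main obstacle} is exactly this bookkeeping: one must check that the interpolation is exact, i.e. that the degree and parity of the trigonometric polynomial $\mathcal{A}_{-}(\lambda|\beta+2)$ leave no room for additional terms, so that the $\mathsf{N}+2$ data determine it uniquely. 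Since the parity relations couple $\mathcal{A}_{-}$ to $\mathcal{D}_{-}$, the even and odd parts must be tracked together, and the delicate step is to confirm that the prescribed prefactors and nodes reproduce the operator exactly, with all signs and the $\coth\zeta_{-}$ normalization in place.
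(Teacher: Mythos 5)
Your overall strategy---interpolation of $\mathcal{A}_{-}(\lambda|\beta+2)$ using its known trigonometric-polynomial form in $\lambda$, with the $2\mathsf{N}$ separated nodes $\zeta_a^{(h_a)}$ (whose values you obtain exactly as the paper does, from the state definitions, (\ref{CMR-AB-Left}), the quantum determinant and the vanishing conditions) plus two extra nodes at $\lambda=\eta/2$ and $\lambda=\eta/2+i\pi/2$---is precisely the paper's route, and all the node values you write down, including the sign $(-1)^{\mathsf{N}+1}$ and the factor $\coth\zeta_-$, are correct. However, there is a genuine gap in how you \emph{justify} the two diagonal node values. From the vanishing of $\mathcal{B}_-$ and $\mathcal{C}_-$ at these points, the gauged quantum-determinant relation (\ref{gauge-q-det-A}) gives only
\begin{equation*}
\mathcal{A}_{-}(\eta/2|\beta+2)^{2}=\frac{\det_{q}\mathcal{U}_{-}(0)}{\sinh(-2\eta)}\,\mathbb{I},
\end{equation*}
and an operator whose \emph{square} is a scalar multiple of the identity need not itself be scalar (think of $\sigma^z$). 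So the inference ``$\mathcal{A}_{-}(\lambda|\beta+2)$ acts as a scalar equal to $\pm\mathsf{A}_{-}(\lambda)$'' does not follow from (\ref{gauge-q-det-A}) alone; even restricted to a single covector, the relation $\langle\beta,\mathbf{h}|\mathcal{A}_{-}^{2}=c^{2}\langle\beta,\mathbf{h}|$ does not force $\langle\beta,\mathbf{h}|\mathcal{A}_{-}=\pm c\,\langle\beta,\mathbf{h}|$: it only says that $\langle\beta,\mathbf{h}|$ and $\langle\beta,\mathbf{h}|\mathcal{A}_{-}$ span a two-dimensional invariant space on which $\mathcal{A}_{-}^{2}=c^{2}$.

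What closes this gap---and is what the paper actually uses---is the pair of \emph{operator} identities (\ref{U-identities}), namely $\mathcal{U}_{-}(\eta/2)=\det_{q}M(0)\,I_{0}$ and $\mathcal{U}_{-}(\eta/2+i\pi/2)=i\coth\zeta_{-}\det_{q}M(i\pi/2)\,\sigma_{0}^{z}$, which follow from the explicit form of $K_{-}$ at these points (it degenerates to $I$, respectively $i\coth\zeta_-\,\sigma^z$, because $\sinh(2\lambda-\eta)$ vanishes there) together with the inversion formula (\ref{M-inverse}). These state that the \emph{entire} boundary monodromy matrix at the two special points is a c-number times $I_{0}$ (resp.\ $\sigma_{0}^{z}$), which is strictly stronger than any consequence of the quantum determinant. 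Combined with the gauge contractions $\tilde{Y}(0|\beta-1)X(0|\beta+1)=1$ and $\tilde{Y}(i\pi/2|\beta-1)\sigma_{0}^{z}X(-i\pi/2|\beta+1)=-1$, they show directly that $\mathcal{A}_{-}(\eta/2|\beta+2)$ and $\mathcal{A}_{-}(\eta/2+i\pi/2|\beta+2)$ act as the c-numbers you wrote, on every state. Once you replace your quantum-determinant argument by this one, the rest of your proposal (the prefactors $\sinh(2\lambda-\eta)$, $\cosh(\lambda-\eta/2)$, $\sinh(\lambda-\eta/2)$ each vanishing at the complementary nodes, followed by the rewriting in terms of $\cosh 2\lambda$) goes through exactly as in the paper; note also that the paper sidesteps the even/odd bookkeeping you flag as the main obstacle by interpolating in $\lambda$ over all $2\mathsf{N}+2$ nodes first, and only then collapsing to the $\cosh 2\lambda$ form.
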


\begin{proof}[Proof]
Using the identities:%
\begin{equation}
\mathcal{U}_{-}(\eta /2)=\det_{q}M(0)\text{ }I_{0},\text{ \ \ }\mathcal{U}%
_{-}(\eta /2+i\pi /2)=i\coth \zeta _{-}\det_{q}M(i\pi /2)\text{ }\sigma
_{0}^{z},  \label{U-identities}
\end{equation}%
and%
\begin{equation}
\tilde{Y}(0\vert \beta-1)X(0\vert \beta+1)=1,\text{ \ }\tilde{Y}(i\pi /2\vert \beta-1)\sigma
_{0}^{z}X(-i\pi /2\vert \beta+1)=-1
\end{equation}%
and taking into account  that $\mathcal{A}_{-}(\lambda |\beta)$ has the following
functional dependence with respect to $\lambda $:%
\begin{equation}
\mathcal{A}_{-}(\lambda |\beta)=\sum_{a=0}^{2\mathsf{N}+1}e^{\left( 2a-2\mathsf{N%
}+1\right) \lambda }\mathcal{A}_{m,a}
\end{equation}%
we get the following interpolation formula for the action on $\langle \beta$, 
$\mathbf{h}|$:%
\begin{align*}
\langle \beta,\mathbf{h}|\mathcal{A}_{-}(\lambda |\beta+2)& =\sum_{a=1}^{2%
\mathsf{N}}\frac{\sinh(2\la-\eta)}{\sinh(2\zeta_a^{(h_a)}-\eta)}\prod_{\substack{ b=1  \\ b\neq a}}^{2\mathsf{N}}\frac{\sinh
(\lambda -\zeta _{b}^{(h_{b})})}{\sinh (\zeta _{a}^{(h_{a})}-\zeta
_{b}^{(h_{b})})}\mathsf{A}_{-}(\zeta _{a}^{(h_{a})})\langle \beta,\mathbf{h%
}|T_{a}^{-\varphi _{a}}  \notag \\
& +\det_{q}M(0)\cosh(\la-\eta/2)\prod_{ b=1}^{2\mathsf{N}%
}\frac{\sinh (\lambda -\zeta _{b}^{(h_{b})})}{\sinh (\eta/2-\zeta _{b}^{(h_{b})})}\langle \beta,\mathbf{h}|  \notag \\
& \coth \zeta _{-}\det_{q}M(i\pi /2)\sinh(\la-\eta/2)\prod_{b=1}^{2\mathsf{N}}\frac{\sinh
(\lambda -\zeta _{b}^{(h_{b})})}{\sinh(\eta/2+i\pi/2-\zeta
_{b}^{(h_{b})})}\langle \beta,\mathbf{h}|.
\end{align*}%
Then, it is a simple exercise to rewrite this in the form (\ref{L-SOV
A-}). 
\end{proof}

\subsubsection{Right $\mathcal{B}_{-}(\la|\beta)$-SOV representations of the gauge
transformed reflection algebra}

\begin{theorem}
\underline{Right $\mathcal{B}_{-}(\lambda |\beta)$ SOV-basis} \ We
define the states:%
\begin{equation}
|\beta,h_{1},...,h_{\mathsf{N}}\rangle = 
\prod_{n=1}^{\mathsf{N}}\left( \frac{\mathcal{D}_{-}(\xi _{n}+\eta /2|\beta)}{%
f_n(\beta)\mathsf{A}_{-}(\eta /2-\xi _{n})}\right) ^{(1-h_{n})}%
|-\beta+2\rangle ,  \label{D-right-eigenstates}
\end{equation}%
where:%
\begin{equation}
f_n(\beta)= \frac{\sinh (2\xi _{n}+\eta )\sinh \beta \eta }{\sinh
(2\xi _{n}-\eta )\sinh (2\xi _{n}+\beta \eta )},
\end{equation}
and $h_{n}\in \{0,1\},$ $n\in \{1,...,\mathsf{N}\}$. If (\ref{E-SOV}%
) and (\ref{NON-nilp-B-R})
are satisfied, then 
 this set of states defines a basis of $\mathcal{H}$ and they are $\mathcal{B}_{-}(\lambda |\beta)$ right pseudo-eigenstates:
\begin{equation}
\mathcal{B}_{-}(\lambda |\beta)|\beta,\mathbf{h}\rangle =|\beta+2,\mathbf{h}%
\rangle \mathsf{\bar B}_{\mathbf{h}}(\lambda|\beta ),
\label{left-B-eigen-cond}
\end{equation}%
where:%
\begin{align}
\mathsf{\bar B}_{\mathbf{h}}(\lambda |\beta)=& \left( -1\right)
^{\mathsf{N}}e^{(\beta -\mathsf{N})\eta } \prod_{n=1}^{\mathsf{N}}
\left(\frac {f_n(\beta+2)}{f_n(\beta)}%
\right)^{1-h_n}a_{\mathbf{h}}(\lambda )a_{%
\mathbf{h}}(-\lambda )\notag\\
&\times
\frac{\sinh (2\lambda -\eta
)\left( 2\kappa _{-}\sinh \left[ (\beta -(1+\mathsf{N}+\alpha ))\eta -\tau
_{-}\right] -e^{\zeta _{-}}\right) }{2\sinh \zeta _{-}\sinh \beta \eta 
}
.
\end{align}%
\end{theorem}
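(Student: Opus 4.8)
The plan is to prove this theorem as the ``right'' counterpart of the preceding Left $\mathcal{B}_{-}$ SOV-basis theorem, under the formal exchange $\mathcal{A}_{-}\leftrightarrow\mathcal{D}_{-}$, $\langle\beta|\leftrightarrow|-\beta+2\rangle$, $h_{n}\leftrightarrow 1-h_{n}$. The subtlety is that the reference configuration is now $\mathbf{h}=\mathbf{1}$, since $|\beta,\mathbf{1}\rangle=|-\beta+2\rangle$, and the natural decomposition (\ref{boundary-bulkBD}) of $\mathcal{B}_{-}(\lambda|\beta)$ involves $\bar{B},\bar{D}$ at gauge $\beta-1$, which do \emph{not} act simply on $|-\beta+2\rangle$. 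I would therefore not use it directly, but instead invoke the $\beta$-parity identity $\mathcal{B}_{-}(\lambda|\beta)=\mathcal{C}_{-}(\lambda|-\beta+2)$ from (\ref{B-to-C-identity}) and expand through the $\mathcal{A}_{-}/\mathcal{C}_{-}$ decomposition (\ref{boundary-bulkAC}) relabelled to gauge $-\beta$. For that decomposition $|-\beta+2\rangle$ is exactly the right reference state $|(-\beta+1)+1\rangle$ of Proposition \ref{Right-ref}, so $\bar{C}(\lambda|-\beta+1)$ annihilates it and only the $\bar{K}_{-}(\lambda|-\beta)_{21}\,D\,\bar{A}$ term survives. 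Applying Proposition \ref{Right-ref} twice (first $\bar{A}(\lambda|-\beta+1)$, then $D(\lambda|-\beta)$) collapses the action onto $|-\beta\rangle=|\beta+2,\mathbf{1}\rangle$ with scalar prefactor $\prod_{n}\sinh(\lambda+\xi_{n}+\eta/2)\sinh(\lambda-\xi_{n}-\eta/2)=(-1)^{\mathsf{N}}a_{\mathbf{1}}(\lambda)a_{\mathbf{1}}(-\lambda)$ times $\bar{K}_{-}(\lambda|-\beta)_{21}$ and the ratio $\sinh(\mathsf{N}-\beta)\eta/\sinh(-\beta)\eta$ coming from the $D$-action.

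The main obstacle I anticipate is the explicit evaluation of $\bar{K}_{-}(\lambda|-\beta)_{21}$ from its definition (\ref{barK-beta}): one must insert the vectors $X,Y,\bar{X},\bar{Y}$ and the scalar $K_{-}$ and check that it produces the boundary factor $\sinh(2\lambda-\eta)\,\bigl(2\kappa_{-}\sinh[(\beta-(1+\mathsf{N}+\alpha))\eta-\tau_{-}]-e^{\zeta_{-}}\bigr)/(2\sinh\zeta_{-})$, together with the prefactor $(-1)^{\mathsf{N}}e^{(\beta-\mathsf{N})\eta}/\sinh\beta\eta$ of $\mathsf{\bar B}_{\mathbf{1}}(\lambda|\beta)$, the $D$-action ratio being absorbed in the process. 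This is the mirror of the identity $e^{-\lambda+\eta/2}K_{-}(\lambda|\beta)_{12}=\cdots$ used in the left proof; it is a routine but careful trigonometric computation. A useful self-check is that, using $e^{\zeta_{-}}=2\kappa_{-}\sinh(\alpha_{-}+\beta_{-})$ (a direct consequence of (\ref{alfa-beta})), the vanishing of this boundary factor is \emph{exactly} the excluded locus (\ref{NON-nilp-B-R}); hence condition (\ref{NON-nilp-B-R}) guarantees $\mathsf{\bar B}_{\mathbf{1}}\not\equiv 0$, i.e.\ that $\mathcal{B}_{-}(\lambda|\beta)$ is genuinely right pseudo-diagonalizable and not nilpotent.

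For a general configuration $\mathbf{h}$ I would build $|\beta,\mathbf{h}\rangle$ by acting with $\mathcal{D}_{-}(\xi_{n}+\eta/2|\beta)$ at the sites with $h_{n}=0$, and then commute $\mathcal{B}_{-}(\lambda|\beta)$ through this product using the dynamical relation (\ref{BD-DB-CMR}) with $\lambda_{1}=\lambda$, $\lambda_{2}=\xi_{n}+\eta/2$, supplemented by the parity relations (\ref{parity-m-1})--(\ref{parity-m-3}) and the quantum-determinant identities (\ref{gauge-q-det-A})--(\ref{gauge-q-det-D}), following \cite{Nic12b} step by step. At the special points $\xi_{n}+\eta/2$ the unwanted coefficients vanish, so passing $\mathcal{B}_{-}$ across each $\mathcal{D}_{-}(\xi_{n}+\eta/2|\beta)$ only flips $h_{n}$ from $0$ to $1$ and shifts the gauge of that factor from $\beta$ to $\beta+2$; this produces the factor $a_{\mathbf{h}}(\lambda)a_{\mathbf{h}}(-\lambda)$ and, via the mismatch between the normalisations $f_{n}(\beta)$ used in $|\beta,\mathbf{h}\rangle$ and $f_{n}(\beta+2)$ demanded by $|\beta+2,\mathbf{h}\rangle$, the dynamical ratios $\prod_{n}\bigl(f_{n}(\beta+2)/f_{n}(\beta)\bigr)^{1-h_{n}}$ appearing in $\mathsf{\bar B}_{\mathbf{h}}(\lambda|\beta)$. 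The functions $f_{n}$ are chosen precisely so that this bookkeeping closes. Order independence of the defining product follows from the $\mathcal{D}_{-}$--$\mathcal{D}_{-}$ commutation relation obtained from (\ref{CMR-AA-BC}) by $\beta\to-\beta+2$ through (\ref{U-gauge-symm}), evaluated at the points $\xi_{n}+\eta/2$.

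Finally I would establish the basis property. Under (\ref{E-SOV}) the $2\mathsf{N}$ shifted points $\zeta_{n}^{(h_{n})}=\pm[\xi_{n}+(h_{n}-\tfrac12)\eta]$ are pairwise distinct, so the pseudo-eigenvalues $\mathsf{\bar B}_{\mathbf{h}}(\lambda|\beta)$, being fixed by their zeros $a_{\mathbf{h}}(\lambda)a_{\mathbf{h}}(-\lambda)$, are pairwise distinct; together with (\ref{NON-nilp-B-R}) ensuring none of them vanishes identically, this forces the $2^{\mathsf{N}}$ states $|\beta,\mathbf{h}\rangle$ to be linearly independent and hence to span $\mathcal{H}$, giving a right $\mathcal{B}_{-}(\lambda|\beta)$ pseudo-eigenbasis with simple pseudo-spectrum. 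This completes the construction.
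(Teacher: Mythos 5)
Your proposal is correct and follows essentially the same route as the paper's own proof: the paper also establishes the reference-state identity by combining the parity relation (\ref{B-to-C-identity}) with the boundary-bulk decomposition (\ref{boundary-bulkAC}) and Proposition \ref{Right-ref} (yielding $\mathcal{B}_{-}(\lambda|\beta)|-\beta+2\rangle=|-\beta\rangle\,\mathsf{C}_{\mathbf{0}}(\lambda|-\beta+2)$, which matches your evaluation of $\bar{K}_{-}(\lambda|-\beta)_{21}$ and the ratio $\sinh(\mathsf{N}-\beta)\eta/\sinh(-\beta)\eta$), and then extends to general $\mathbf{h}$ via the commutation relations (\ref{BD-DB-CMR}) together with the annihilation conditions (\ref{D-on-m-B-pseudoeigen}), citing \cite{Nic12b} for the remaining bookkeeping. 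Your observation that, via $e^{\zeta_{-}}=2\kappa_{-}\sinh(\alpha_{-}+\beta_{-})$, the vanishing locus of the boundary factor is precisely the excluded set (\ref{NON-nilp-B-R}) is a correct and useful consistency check that the paper leaves implicit.
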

\begin{proof}[Proof]
The proof is similar  to the one for the left SOV basis.  First  we prove that $%
|-\beta+2\rangle $ is a right $\mathcal{B}_{-}(\lambda |\beta)$ pseudo-eigenstate. From the Proposition \ref{Right-ref} and the 
boundary-bulk decomposition (\ref{boundary-bulkAC}):%
\begin{align*}
e^{\lambda -\eta /2}\mathcal{C}_{-}(\lambda |\beta)& =\bar{K}_{-}(\lambda
|\beta-2)_{21}D(\lambda |\beta-2)\bar{A}(\lambda |\beta-1)+\bar{K}_{-}(\lambda
|\beta-2)_{22}D(\lambda |\beta-2)\bar{C}(\lambda |\beta-1)  \notag \\
& +\bar{K}_{-}(\lambda |\beta-2)_{12}C(\lambda |\beta-2)\bar{C}(\lambda |\beta-1)+\bar{%
K}_{-}(\lambda |\beta-2)_{11}C(\lambda |\beta-2)\bar{A}(\lambda |\beta-1).
\end{align*}%
It follows that the state $|\beta\rangle $ is a right $\mathcal{C}_{-}(\lambda
|\beta)$-pseudo-eigenstate; i.e. it holds:%
\begin{equation}
\mathcal{C}_{-}(\lambda |\beta)|\beta\rangle =|\beta-2\rangle \mathsf{C}%
_{\mathbf{0}}(\lambda|\beta )  \label{right-C-boundary-state}
\end{equation}%
where:%
\begin{equation}
\mathsf{C}_{\mathbf{0}}(\lambda|\beta )=\left( -1\right) ^{\mathsf{N}}e^{-\lambda
+\eta /2}\tilde{K}_{-}(\lambda |\beta)_{21}\frac{\sinh (\mathsf{N}+\beta-2
)\eta }{\sinh (\beta-2 )\eta }a_{\mathbf{1}}(\lambda )a_{%
\mathbf{1}}(-\lambda ),
\end{equation}%
and $a_{\mathbf{1}}(\lambda )$ is given by (\ref{a_h}) for all $h_j=1$ and%
\begin{equation*}
e^{-\lambda +\eta /2}\tilde{K}_{-}(\lambda |\beta)_{21}=\frac{e^{-(\beta
+\mathsf{N}-2)\eta }\sinh (2\lambda -\eta )\left( 2\kappa _{-}\sinh \left[ (\mathsf{N}+\beta
+\alpha -1)\eta +\tau _{-}\right] +e^{\zeta _{-}}\right) }{2\sinh \zeta
_{-}\sinh (\mathsf{N}+\beta -2)\eta }.
\end{equation*}%
Then from the identity (\ref{B-to-C-identity}), it follows
that the formula (\ref{right-C-boundary-state})  is equivalent
to the following one:%
\begin{equation}
\mathcal{B}_{-}(\lambda |\beta)|-\beta+2\rangle =|-\beta\rangle\, %
\mathsf{C}_{\mathbf{0}}(\lambda|-\beta+2 ).  \label{m-B-pseudoeigen}
\end{equation}%
Then by using the identities (\ref{m-B-pseudoeigen}) and the
commutation relations (\ref{BD-DB-CMR}) and the formulae: 
\begin{equation}
\mathcal{D}_{-}(-\xi _{n}-\eta /2|\beta)|-\beta+2\rangle =0,\quad%
\mathcal{D}_{-}(\xi _{n}+\eta /2|\beta)|-\beta+2\rangle \neq 0,
\label{D-on-m-B-pseudoeigen}
\end{equation}%
the states (\ref{D-right-eigenstates})  are proved to be
non-zero $\mathcal{B}_{-}(\lambda |\beta)$-pseudo-eigenstates with
pseudo-eigenvalues $\mathsf{\bar B}_{\mathbf{h}}(\lambda|\beta )$
which  form a basis of $\mathcal{H}$.
\end{proof}
To define the action of the operators $\mathcal{D}_{-}(\lambda |\beta)$ on the generic state $|\beta, \mathbf{h}\rangle $ we will need to introduce a set of values
\begin{equation}
\mathsf{D}_{-}(\zeta _{a}^{(h_{a})})=\Big[f_a(\beta))\Big]^{\varphi_a}\mathsf{A}_{-}(-\zeta _{a}^{(1-h_{a})}), \qquad a=1,\dots, 2 \mathsf{N}.
\end{equation}
It is important to underline that this set of values cannot be seen as values of some analytic  function $\mathsf{D}_-$, however to construct the SOV representation we will need only these points. 
\begin{theorem}
The action of the 
reflection algebra generators $\mathcal{D}_{-}(\lambda |\beta)$ on the generic state $|\beta, \mathbf{h}\rangle $,  can be written as follows%
\begin{align}
\mathcal{D}_{-}(\lambda |\beta)|\beta,\mathbf{h}\rangle  =&\sum_{a=1}^{2%
\mathsf{N}}T_{a}^{-\varphi _{a}}|\beta,\mathbf{h}\rangle \frac{%
\sinh (2\lambda -\eta )\sinh (\lambda +\zeta _{a}^{(h_{a})})}{\sinh (2\zeta
_{a}^{(h_{a})}-\eta )\sinh 2\zeta _{a}^{(h_{a})}}\notag \\
& \times\prod_{\substack{ b=1  \\ %
b\neq a\text{ mod}\mathsf{N}}}^{\mathsf{N}}\frac{\cosh 2\lambda -\cosh
2\zeta _{b}^{(h_{b})}}{\cosh 2\zeta _{a}^{(h_{a})}-\cosh 2\zeta
_{b}^{(h_{b})}}\mathsf{D}_{-}(\zeta _{a}^{(h_{a})})  \notag \\
& +|\beta,\mathbf{h}\rangle \det_{q}M(0)\cosh (\lambda -\eta
/2)\prod_{b=1}^{\mathsf{N}}\frac{\cosh 2\lambda -\cosh 2\zeta _{b}^{(h_{b})}%
}{\cosh \eta -\cosh 2\zeta _{b}^{(h_{b})}}  \notag \\
& +(-1)^{\mathsf{N}}|\beta,\mathbf{h}\rangle \coth \zeta
_{-}\det_{q}M(i\pi /2)\sinh (\lambda -\eta /2)\prod_{b=1}^{\mathsf{N}}\frac{%
\cosh 2\lambda -\cosh 2\zeta _{b}^{(h_{b})}}{\cosh \eta +\cosh 2\zeta
_{b}^{(h_{b})}},  \label{R-SOV D-}
\end{align}%
where:%
\begin{equation}
T_{a}^{\pm }|\beta,h_{1},...,h_{a},...,h_{\mathsf{N}}\rangle
=|\beta,h_{1},...,h_{a}\pm 1,...,h_{\mathsf{N}}\rangle .
\end{equation}
\end{theorem}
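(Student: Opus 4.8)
The plan is to determine $\mathcal{D}_{-}(\lambda|\beta)$ on $|\beta,\mathbf{h}\rangle$ by the very same interpolation in $\lambda$ that produced the $\mathcal{A}_{-}$-action \rf{L-SOV A-}. Since $\mathcal{A}_{-}(\lambda|\beta+2)$ and $\mathcal{D}_{-}(\lambda|\beta)$ are the two diagonal entries of the single gauged double-row matrix $\mathcal{U}_{-}(\lambda|\beta)$, they carry the same dependence on $\lambda$: reading the boundary-bulk decomposition \rf{boundary-bulkBD} one sees that $e^{\lambda-\eta/2}\mathcal{D}_{-}(\lambda|\beta)=\sum_{a=0}^{2\mathsf{N}+1}e^{(2a-2\mathsf{N}+1)\lambda}\mathcal{D}_{m,a}$ is a Laurent polynomial in $e^{\lambda}$ of the same degree as its $\mathcal{A}_{-}$ counterpart. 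Hence $\mathcal{D}_{-}(\lambda|\beta)$ is fixed by its values at $2\mathsf{N}+2$ distinct points, and the resulting Lagrange formula lives in the same hyperbolic basis $\{\sinh(\lambda-\zeta_{b}^{(h_{b})})\}$ supplemented by the two central directions $\cosh(\lambda-\eta/2)$ and $\sinh(\lambda-\eta/2)$.

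First I would read off the action at the $2\mathsf{N}$ points $\zeta_{a}^{(h_{a})}$, following the right-basis construction. For $a\le\mathsf{N}$ (so $\varphi_{a}=1$ and $\zeta_{a}^{(1)}=\xi_{a}+\eta/2$) the operator $\mathcal{D}_{-}(\xi_{a}+\eta/2|\beta)$ is exactly the building block of \rf{D-right-eigenstates}: acting on $|\beta,\mathbf{h}\rangle$ it inserts or removes one factor, i.e. it realises the shift $T_{a}^{-1}$ together with a point value. The complementary family, $a>\mathsf{N}$ with $\varphi_{a}=-1$ and negative argument, is reached from the annihilation property \rf{D-on-m-B-pseudoeigen}, $\mathcal{D}_{-}(-\xi_{n}-\eta/2|\beta)|-\beta+2\rangle=0$, propagated to a generic $\mathbf{h}$ through the commutation relations \rf{BD-DB-CMR}; this yields the opposite shift $T_{a}^{+1}$. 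Collecting both families gives precisely the point data $\mathsf{D}_{-}(\zeta_{a}^{(h_{a})})=[f_{a}(\beta)]^{\varphi_{a}}\mathsf{A}_{-}(-\zeta_{a}^{(1-h_{a})})$ accompanied by the shift $T_{a}^{-\varphi_{a}}$. I would stress that the $a$-dependent prefactor $[f_{a}(\beta)]^{\varphi_{a}}$, which differs between $a\le\mathsf{N}$ and $a>\mathsf{N}$, is exactly what prevents these numbers from being the restriction of a single analytic function, whence $\mathsf{D}_{-}$ appears only as a finite set of values. This identification of the non-analytic point data, together with checking order-independence via the $\mathcal{D}$-analogue of \rf{CMR-AA-BC} (obtained through the symmetry \rf{B-to-C-identity}), is the part I expect to require the most care.

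The two remaining interpolation values come from the central identities \rf{U-identities}. From $\mathcal{U}_{-}(\eta/2)=\det_{q}M(0)\,I_{0}$ one gets $\mathcal{D}_{-}(\eta/2|\beta)=\det_{q}M(0)$ acting as a scalar, feeding the $\cosh(\lambda-\eta/2)$ term with denominators $\cosh\eta-\cosh 2\zeta_{b}^{(h_{b})}$. From $\mathcal{U}_{-}(\eta/2+i\pi/2)=i\coth\zeta_{-}\det_{q}M(i\pi/2)\,\sigma_{0}^{z}$, after contracting with the gauge vectors that single out the lower-right ($\mathcal{D}_{-}$) entry, the $\mathcal{D}$-analogue of the contractions $\tilde{Y}(0|\beta-1)X(0|\beta+1)=1$ used for $\mathcal{A}_{-}$, one obtains the $\sinh(\lambda-\eta/2)$ term; because $\sigma_{0}^{z}$ contributes $-1$ in the lower-right position, this term picks up an extra sign relative to the $\mathcal{A}_{-}$ case, turning the $(-1)^{\mathsf{N}+1}$ of \rf{L-SOV A-} into the $(-1)^{\mathsf{N}}$ of \rf{R-SOV D-}.

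Finally I would assemble the bare Lagrange interpolation through these $2\mathsf{N}+2$ values and symmetrize it into the product form of \rf{R-SOV D-}. The repackaging pairs the term at $\zeta_{a}^{(h_{a})}$ with the one at $\zeta_{a+\mathsf{N}}^{(h_{a})}=-\zeta_{a}^{(h_{a})}$ and uses $\sinh(\lambda-\zeta)\sinh(\lambda+\zeta)\propto\cosh 2\lambda-\cosh 2\zeta$ to collapse the $2\mathsf{N}$-fold products into products over $b\in\{1,\dots,\mathsf{N}\}$ of $\cosh 2\lambda-\cosh 2\zeta_{b}^{(h_{b})}$, the parity relations \rf{parity-m-3} and \rf{parity-m-2} guaranteeing the compatibility of the two point values inside each pair. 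As in the $\mathcal{A}_{-}$ proof this last rewriting is purely algebraic and, though tedious, amounts to a routine simple exercise; the genuine content lies in the determination of the non-analytic point data $\mathsf{D}_{-}(\zeta_{a}^{(h_{a})})$ described above.
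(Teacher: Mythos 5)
Your interpolation strategy (values at the $2\mathsf{N}$ points $\zeta_a^{(h_a)}$ plus the two central points $\eta/2$ and $\eta/2+i\pi/2$, then trigonometric repackaging) is exactly the paper's, and three of your four ingredients are sound: the lowering values ($a\le\mathsf{N}$, $h_a=1$) read off from the definition \rf{D-right-eigenstates}, the central values from \rf{U-identities} including the relative sign from $\sigma_0^z$, and the final rewriting. The genuine gap is in the remaining family of point values, which you yourself flag as the delicate part: for $a>\mathsf{N}$ and $h_n=0$ (with $n=a-\mathsf{N}$) one must prove the \emph{nonzero} raising action
\begin{equation*}
\mathcal{D}_{-}(-\zeta_n^{(0)}|\beta)\,|\beta,\mathbf{h}\rangle
=\bigl[f_n(\beta)\bigr]^{-1}\mathsf{A}_{-}(\xi_n+\eta/2)\;T_n^{+}|\beta,\mathbf{h}\rangle .
\end{equation*}
You propose to get this from the annihilation property \rf{D-on-m-B-pseudoeigen} ``propagated through \rf{BD-DB-CMR}''. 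That mechanism cannot produce it. First, \rf{D-on-m-B-pseudoeigen} is vanishing data (at the points $-\zeta_n^{(1)}$): propagating a zero can only reproduce the zero entries of the table, i.e.\ the coefficients that vanish because $\mathsf{A}_{-}(\xi_n-\eta/2)=0$; it can never yield a nonzero value with a shift. Second, \rf{BD-DB-CMR} exchanges $\mathcal{B}$ with $\mathcal{D}$, whereas the states $|\beta,\mathbf{h}\rangle$ are built from $\mathcal{D}$-operators; it is not the tool that moves an extra $\mathcal{D}_{-}$ through the product defining the state. Even if you use the genuine $\mathcal{D}$-$\mathcal{D}$ relation obtained from \rf{CMR-AA-BC} via \rf{B-to-C-identity}, you end up facing the product $\mathcal{D}_{-}(-\zeta_n^{(0)}|\beta)\mathcal{D}_{-}(\zeta_n^{(1)}|\beta)$ acting on $|\beta,\mathsf{T}_n^{+}(\mathbf{h})\rangle$, and exchanging its two factors generates a term $\mathcal{C}_{-}(\zeta_n^{(1)}|\beta+2)\mathcal{B}_{-}(-\zeta_n^{(0)}|\beta)$ whose $\mathcal{B}$-factor does \emph{not} annihilate that state and whose $\mathcal{C}$-action on right states is unknown; so no commutation relation evaluates it.

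The missing idea is the one the paper names explicitly: the quantum determinant \rf{gauge-q-det-D}. Because $-\zeta_n^{(0)}+\zeta_n^{(1)}=\eta$, the choice $\epsilon=-1$, $\lambda=\xi_n$ in \rf{gauge-q-det-D} gives
\begin{equation*}
\mathcal{D}_{-}(-\zeta_n^{(0)}|\beta)\,\mathcal{D}_{-}(\zeta_n^{(1)}|\beta)
+\mathcal{C}_{-}(-\zeta_n^{(0)}|\beta+2)\,\mathcal{B}_{-}(\zeta_n^{(1)}|\beta)
=\mathsf{A}_{-}(\xi_n+\eta/2)\,\mathsf{A}_{-}(\eta/2-\xi_n).
\end{equation*}
Acting on $|\beta,\mathsf{T}_n^{+}(\mathbf{h})\rangle$, the $\mathcal{C}\mathcal{B}$ term drops out because the pseudo-eigenvalue $\mathsf{\bar B}_{\mathsf{T}_n^{+}(\mathbf{h})}(\lambda|\beta)\propto a_{\mathsf{T}_n^{+}(\mathbf{h})}(\lambda)a_{\mathsf{T}_n^{+}(\mathbf{h})}(-\lambda)$ vanishes at $\lambda=\zeta_n^{(1)}$, while the definition of the states gives $\mathcal{D}_{-}(\zeta_n^{(1)}|\beta)|\beta,\mathsf{T}_n^{+}(\mathbf{h})\rangle=f_n(\beta)\mathsf{A}_{-}(\eta/2-\xi_n)\,|\beta,\mathbf{h}\rangle$; solving for the unknown action produces precisely the raising value above. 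This quantum-determinant step is what your proposal lacks, and without it the interpolation table, hence the theorem, is not established.
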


\begin{proof}[Proof]
The form of the action of $\mathcal{D}_{-}(\zeta _{a}^{(h_{a})}|\beta)$ on $|\beta,$%
$\mathbf{h}\rangle $ is just a consequence of the definition of the states
and the quantum determinant. Finally, the formula (\ref{R-SOV D-})  
is just a rewriting of the following interpolation formula for the
action on $|\beta$, $\mathbf{h}\rangle $:%
\begin{eqnarray*}
\mathcal{D}_{-}(\lambda |\beta)|\beta,\mathbf{h}\rangle &=&\sum_{a=1}^{2%
\mathsf{N}}T_{a}^{-\varphi _{a}}|\beta,\mathbf{h}\rangle
 \frac{\sinh(2\la-\eta)}{\sinh(2\zeta_a^{(h_a)}-\eta)}
 \prod 
_{\substack{ b=1  \\ b\neq a}}^{2\mathsf{N}}\frac{\sinh (\lambda -\zeta
_{b}^{(h_{b})})}{\sinh (\zeta _{a}^{(h_{a})}-\zeta _{b}^{(h_{b})})}f^{\varphi_a}_a(\beta)\mathsf{A}_{-}(-\zeta _{a}^{(1-h_{a})})  \notag \\
&&+|\beta,\mathbf{h}\rangle \det_{q}M(0)\cosh(\la-\eta/2)\prod_{ b=1 }^{2\mathsf{N}}\frac{\sinh (\lambda -\zeta _{b}^{(h_{b})})}{%
\sinh (\eta/2-\zeta _{b}^{(h_{b})})}  \notag \\
&&-|\beta,\mathbf{h}\rangle \coth \zeta _{-}\det_{q}M(i\pi
/2)\sinh(\la-\eta/2)\prod_{b=1}^{2\mathsf{N}}\frac{\sinh (\lambda -\zeta _{b}^{(h_{b})})}{%
\sinh (\eta/2+i\pi/2-\zeta _{b}^{(h_{b})})}.
\end{eqnarray*}
\end{proof}



\subsection{Change of basis properties}

To study the properties of the SOV basis we introduce first the standard spin basis for the $2$-dimensional linear
space $\mathcal{H}_{n}$, the quantum space in the site $n$ of the chain,
\begin{equation}
\sigma _{n}^{z}|k,n\rangle =(2k-1)|k,n\rangle ,\text{ \ }k\in \{0,1\}.
\end{equation}%
Similarly, we introduce the dual $\sigma _{n}^{z}$-eigenvectors $\langle k,n|$,%
\begin{equation}
\langle k,n|\sigma _{n}^{z}=(2k-1)\langle k,n|,\text{ \ }k\in \{0,1\}.
\end{equation}
The tensor products of the local basis vectors constitute an orthogonal basis in $\mathcal{H}$
\begin{equation}
\ket{\mathbf{k}}=\otimes _{n=1}^{\mathsf{N}}|k_{n},n\rangle,\qquad \bra{\mathbf{k}}=\otimes _{n=1}^{\mathsf{N}}
\bra{k_{n},n}\quad\text{where}\quad \mathbf{k}=\{k_1,\dots,k_\mathsf{N}\} ,
\end{equation}
and
\begin{equation}
\langle \mathbf{k}'\ket{\mathbf{k}}=\prod_{n=1}^{\mathsf{N}}\delta
_{k_{n},k_{n}^{\prime }}\text{ \ \ }\forall k_{n},k_{n}^{\prime }\in
\{0,1\}.
\end{equation}
We define the following  $2^{\mathsf{N}}\times 2^{\mathsf{N}}$ matrices $U^{(L,\beta)}$
and $U^{(R,\beta)}$:%
\begin{equation}
\langle \beta,\mathbf{h}|=\langle \mathbf{h}|U^{(L,\beta)}=%
\sum_{i=1}^{2^{\mathsf{N}}}U_{\varkappa \left( \mathbf{h}\right)
,i}^{(L,\beta)}\langle \varkappa ^{-1}\left( i\right) |\text{ \ \ and\ \ \ }|\beta,%
\mathbf{h}\rangle =U^{(R,\beta)}|\mathbf{h}\rangle =\sum_{i=1}^{2^{%
\mathsf{N}}}U_{i,\varkappa \left( \mathbf{h}\right)
}^{(R,\beta)}|\varkappa ^{-1}\left( i\right) \rangle ,
\end{equation}%
which define the change of basis to the SOV-basis starting from the original
spin basis:%
\begin{equation}
\langle \mathbf{h}|= \otimes _{n=1}^{\mathsf{N}}\langle
h_{n},n|\text{ \ \ \ \ and \ \ \ }|\mathbf{h}\rangle =
\otimes _{n=1}^{\mathsf{N}}|h_{n},n\rangle ,
\end{equation}%
where $\varkappa $ is the following isomorphism between the sets $%
\{0,1\}^{\mathsf{N}}$ and $\{1,...,2^{\mathsf{N}}\}$: 
\begin{equation}
\varkappa :\mathbf{h}\in \{0,1\}^{\mathsf{N}}\rightarrow \varkappa
\left( \mathbf{h}\right) = 1+\sum_{a=1}^{\mathsf{N}%
}2^{(a-1)}h_{a}\in \{1,...,2^{\mathsf{N}}\}.  \label{corrisp}
\end{equation}%
Note that the matrices $U^{(L,\beta )}$ and $U^{(R,\beta )}$\ are
invertible matrices for the pseudo-diagonalizability of $\mathcal{B}%
_{-}(\lambda |\beta)$: 
\begin{equation}
U^{(L,\beta)}\mathcal{B}_{-}(\lambda |\beta)=\Delta _{\mathcal{B}_{-}}^{L}(\lambda
|\beta)U^{(L,\beta-2)},\text{ \ \ }\mathcal{B}_{-}(\lambda
|\beta)U^{(R,\beta)}=U^{(R,\beta+2)}\Delta _{\mathcal{B}_{-}}^{R}(\lambda |\beta).
\end{equation}%
Here $\Delta _{\mathcal{B}_{-}}^{L/R}(\lambda |\beta)$ are the $2^{\mathsf{N}%
}\times 2^{\mathsf{N}}$ diagonal matrices with elements
\begin{equation}
\left( \Delta _{\mathcal{B}_{-}}^{L}(\lambda |\beta)\right) _{i,j}= \delta
_{i,j}\mathsf{B}_{\varkappa ^{-1}\left( i\right) }(\lambda |\beta),%
\text{ }\left( \Delta _{\mathcal{B}_{-}}^{R}(\lambda |\beta)\right) _{i,j}=
\delta _{i,j}\mathsf{B}_{\varkappa ^{-1}\left( i\right)
}(\lambda |\beta),\text{ \ }\forall i,j\in \{1,...,2^{\mathsf{N}}\}.
\end{equation}%

The main result of this section is the following proposition:
\begin{proposition}
The $2^{\mathsf{N}}\times 2^{\mathsf{N}}$ matrix:%
\begin{equation}
M\equiv U^{(L,\beta-2)}U^{(R,\beta)}
\end{equation}%
is diagonal and it is characterized by:%
\begin{equation}
M_{\varkappa \left( \mathbf{h}\right) \varkappa \left( \mathbf{k%
}\right) }=\langle \beta-2,\mathbf{h}|\beta,\mathbf{k}\rangle =\delta
_{\varkappa \left( \mathbf{h}\right) \varkappa \left( \mathbf{k}%
\right) }
Z(\be-2)
\prod_{1\leq b<a\leq \mathsf{N}}\frac{1}{\eta _{a}^{(h_{a})}-\eta
_{b}^{(h_{b})}},  \label{M_jj}
\end{equation}%
with  the normalization constant%
\begin{equation}
Z(\be)= \prod_{1\leq b<a\leq \mathsf{N}}(\eta
_{a}^{(1)}-\eta _{a}^{(1)})\langle \beta|\left( \prod_{n=1}^{\mathsf{N}}%
\mathcal{A}_{-}(\eta /2-\xi _{n}|\beta+2)/\mathsf{A}_{-}(\eta /2-\xi _{n})\right) 
|-\beta\rangle ,  \label{Norm-def}
\end{equation}%
and 
\begin{equation}
\eta _{a}^{(h_{a})}\equiv \cosh 2\left[ (\xi _{a}+(h_{a}-\frac{1}{2})\eta %
\right] .
\end{equation}
\end{proposition}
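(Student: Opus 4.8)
The plan is to treat the off-diagonal and the diagonal entries of $M$ separately: first show that $M_{\varkappa(\mathbf{h})\varkappa(\mathbf{k})}=\langle\beta-2,\mathbf{h}|\beta,\mathbf{k}\rangle$ vanishes whenever $\mathbf{h}\neq\mathbf{k}$, and then pin down the surviving diagonal entries by a recursion in the configuration $\mathbf{h}$.

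For the off-diagonal vanishing I would evaluate one and the same operator $\mathcal{B}_{-}(\lambda|\beta)$ on the left pseudo-eigenstate via (\ref{right-B-eigen-cond}) and on the right pseudo-eigenstate via (\ref{left-B-eigen-cond}). Writing $P_{\mathbf{h}\mathbf{k}}(\beta)\equiv\langle\beta-2,\mathbf{h}|\beta,\mathbf{k}\rangle$ and sandwiching $\mathcal{B}_{-}(\lambda|\beta)$ between $\langle\beta,\mathbf{h}|$ and $|\beta,\mathbf{k}\rangle$, acting once to the left and once to the right, yields the functional identity
\begin{equation}
\mathsf{B}_{\mathbf{h}}(\lambda|\beta)\,P_{\mathbf{h}\mathbf{k}}(\beta)=\mathsf{\bar B}_{\mathbf{k}}(\lambda|\beta)\,P_{\mathbf{h}\mathbf{k}}(\beta+2),\qquad\forall\lambda.
\end{equation}
The key point is that the entire $\lambda$-dependence of both pseudo-eigenvalues is carried by the common factor $\sinh(2\lambda-\eta)$ times $a_{\mathbf{h}}(\lambda)a_{\mathbf{h}}(-\lambda)$ (respectively $a_{\mathbf{k}}$), every other factor being a $\lambda$-independent constant. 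Since $a_{\mathbf{h}}(\lambda)a_{\mathbf{h}}(-\lambda)$ is, up to sign, a polynomial in $\cosh2\lambda$ whose zeros sit exactly at $\cosh2\lambda=\eta_{n}^{(h_{n})}$, the condition (\ref{E-SOV}) forces these zero sets to be pairwise distinct for distinct $\mathbf{h}$. Hence for $\mathbf{h}\neq\mathbf{k}$ the functions $\mathsf{B}_{\mathbf{h}}(\cdot|\beta)$ and $\mathsf{\bar B}_{\mathbf{k}}(\cdot|\beta)$ are linearly independent, and the identity can hold for all $\lambda$ only if $P_{\mathbf{h}\mathbf{k}}(\beta)=P_{\mathbf{h}\mathbf{k}}(\beta+2)=0$. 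This establishes that $M$ is diagonal.

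For the diagonal entries I would build a recursion that flips one index at a time. Choosing an $a$ with $h_{a}=1$, I peel off the corresponding factor from the left state, writing $\langle\beta-2,\mathbf{h}|=\mathsf{A}_{-}(\eta/2-\xi_{a})^{-1}\langle\beta-2,\mathbf{h}_{a\to0}|\,\mathcal{A}_{-}(\eta/2-\xi_{a}|\beta)$, which is legitimate because the defining operators commute on the special points by (\ref{CMR-AA-BC}). I then let $\mathcal{A}_{-}(\eta/2-\xi_{a}|\beta)$ act on the right state: using the parity relation (\ref{parity-m-1}) to rewrite $\mathcal{A}_{-}$ as a combination of $\mathcal{D}_{-}(\pm\lambda|\beta)$ and the known right action (\ref{R-SOV D-}), this produces a diagonal term proportional to $|\beta,\mathbf{h}\rangle$ plus a shift term proportional to $|\beta,\mathbf{h}_{a\to0}\rangle$. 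By the diagonality already proven, only the shift component survives the overlap against $\langle\beta-2,\mathbf{h}_{a\to0}|$, so
\begin{equation}
\langle\beta-2,\mathbf{h}|\beta,\mathbf{h}\rangle=c_{a}(\mathbf{h}|\beta)\,\langle\beta-2,\mathbf{h}_{a\to0}|\beta,\mathbf{h}_{a\to0}\rangle,
\end{equation}
with $c_{a}(\mathbf{h}|\beta)$ the extracted shift coefficient. Iterating over all indices with $h_{a}=1$ reduces every diagonal element to a single reference overlap, which is identified with $Z(\beta-2)$ through its definition (\ref{Norm-def}) and the identification $|-\beta\rangle=|\beta+2,\mathbf{1}\rangle$, so that $Z$ encodes one reference overlap dressed by a Vandermonde factor; this fixes the overall normalization self-consistently.

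The hard part will be the bookkeeping in this last recursion: one must check that the product of the shift coefficients $c_{a}$ collected along any flipping sequence assembles precisely into $\prod_{1\le b<a\le\mathsf{N}}(\eta_{a}^{(h_{a})}-\eta_{b}^{(h_{b})})^{-1}$, and in particular that the answer is independent of the order in which indices are flipped. The Vandermonde-type denominator originates from the interpolation products $\prod_{b\neq a}(\cosh2\lambda-\cosh2\zeta_{b}^{(h_{b})})/(\cosh2\zeta_{a}^{(h_{a})}-\cosh2\zeta_{b}^{(h_{b})})$ in (\ref{L-SOV A-}) and (\ref{R-SOV D-}), evaluated at the nodes $\cosh2\zeta_{a}^{(h_{a})}=\eta_{a}^{(h_{a})}$; the remaining $\lambda$-independent prefactors (the functions $f_{n}$, the factors $\mathsf{A}_{-}$, and the $\sinh$-ratios carrying the $\beta$-shifts) have to be tracked carefully and shown to cancel against the normalizing denominators in the state definitions (\ref{D-left-eigenstates}) and (\ref{D-right-eigenstates}), leaving exactly $Z(\beta-2)$. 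Order-independence of the recursion I would justify a posteriori from the commutation relation (\ref{CMR-AA-BC}), which guarantees that the peeling procedure is well defined irrespective of the chosen sequence of flips.
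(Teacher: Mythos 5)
Your first step (diagonality of $M$) is correct and is in fact identical to the paper's: the two-sided evaluation of $\mathcal{B}_{-}(\lambda|\beta)$ gives the same functional identity, and your linear-independence argument is just a repackaging of the paper's evaluation at a point $\zeta_n^{(k_n)}$ where $\mathsf{\bar B}_{\mathbf{k}}$ vanishes while $\mathsf{B}_{\mathbf{h}}$ does not. The problem is in the diagonal recursion, and it is not deferred bookkeeping: the step you postponed actually fails.

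The paper's recursion (\ref{F1}) comes from inserting $\mathcal{D}_{-}(\xi_a+\eta/2|\beta)$ between $\langle\beta-2,\ldots,h_a{=}1,\ldots|$ and $|\beta,\ldots,h_a{=}0,\ldots\rangle$. The decisive feature of that choice is that on the \emph{bra} side the parity relation (\ref{parity-m-3}) collapses to a single term, because $\langle\beta-2,\ldots,h_a{=}1,\ldots|\mathcal{A}_{-}(-(\xi_a+\eta/2)|\beta)=0$ (its node coefficient is $\mathsf{A}_{-}(-\xi_a-\eta/2)=0$); the left evaluation then uses only exact node actions, and the surviving coefficient $\frac{\sinh\eta\sinh(2\xi_a+\beta\eta)}{\sinh(2\xi_a+\eta)\sinh\beta\eta}$ combines with the single right-side coefficient so that $f_a(\beta)$ cancels identically, leaving the $\beta$-independent ratio $\prod_{b\neq a}(\eta_a^{(0)}-\eta_b^{(h_b)})/(\eta_a^{(1)}-\eta_b^{(h_b)})$ required by (\ref{M_jj}). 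Your mirror move — peeling $\mathcal{A}_{-}(\eta/2-\xi_a|\beta)$ off the bra and pushing it through the ket via (\ref{parity-m-1}) and (\ref{R-SOV D-}) — keeps \emph{both} parity terms alive and requires the right action of $\mathcal{D}_{-}$ at the two arguments $\pm(\eta/2-\xi_a)$, neither of which is a separation node of the ket configuration (its nodes sit at $\pm(\xi_a+\eta/2)$, since $h_a=1$). Carrying your extraction through with the paper's formulas, the coefficient $c_a$ equals the desired product $\prod_{b\neq a}\frac{\eta_a^{(0)}-\eta_b^{(h_b)}}{\eta_a^{(1)}-\eta_b^{(h_b)}}$ multiplied by the factor
\begin{equation*}
\frac{\sinh\beta\eta\,\left[\sinh^2\eta\,\sinh(2\xi_a+(\beta-2)\eta)+\sinh 2\xi_a\,\sinh((\beta-1)\eta)\,\sinh(2\xi_a-2\eta)\right]}{\sinh^2(2\xi_a-\eta)\,\sinh(2\xi_a+\beta\eta)\,\sinh((\beta-2)\eta)},
\end{equation*}
which is \emph{not} identically $1$ (for instance, as $\mathrm{Re}(\beta\eta)\to+\infty$ it tends to $\left[\sinh^2\eta+e^{\eta-2\xi_a}\sinh 2\xi_a\sinh(2\xi_a-2\eta)\right]/\sinh^2(2\xi_a-\eta)$). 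Since the proposition forces every one-flip ratio of diagonal overlaps to be the $\beta$-independent Vandermonde ratio, your recursion coefficient is wrong: the uncancelled $\beta$-dependence is in direct conflict with the exact left evaluation (the peeling identity), which already fixes that matrix element to be $\mathsf{A}_{-}(\eta/2-\xi_a)\langle\beta-2,\mathbf{h}|\beta,\mathbf{h}\rangle$. The root cause is that (\ref{R-SOV D-}) is certified by the state definitions (\ref{D-right-eigenstates}) and the quantum determinant only at the ket's own nodes, and your route leans on it, through the parity relation, exactly at non-node points in both channels — which is precisely the regime the paper's arrangement of operators and evaluation points is engineered to avoid. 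A separate, minor issue: your flipping chain terminates at the all-zeros overlap, while $Z(\beta-2)$ in (\ref{Norm-def}) is the dressed all-\emph{ones} overlap $\langle\beta-2,\mathbf{1}|\beta,\mathbf{1}\rangle$, so even with a correct recursion an extra chain of flips is needed to connect your reference point to the normalization actually used in (\ref{M_jj}).
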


\begin{proof}
First we prove  that the matrix $M$ is diagonal. 
In order to do it we compute the matrix element $\langle \beta,$%
$\mathbf{h}|\mathcal{B}_{-}(\lambda |\beta)|\beta,\mathbf{k}\rangle $ which lead
to the following identity:%
\begin{equation}
\mathsf{B}_{\mathbf{h}}(\lambda |\beta)\langle \beta-2,\mathbf{%
h}|\beta,\mathbf{k}\rangle =\mathsf{\bar B}_{\mathbf{k}%
}(\lambda |\beta)\langle \beta,\mathbf{h}|\beta+2,\mathbf{k}\rangle ,
\end{equation}%
which implies:%
\begin{equation}
\langle \beta-2,\mathbf{h}|\beta,\mathbf{k}\rangle \propto \delta
_{\varkappa \left( \mathbf{h}\right) \varkappa \left( \mathbf{k}%
\right) },  \label{orthogonality-pseudo-states}
\end{equation}%
as from the condition $\mathbf{h}\neq \mathbf{k}$ it follows that $\exists
n\in \{1,...,\mathsf{N}\}$\ such that $h_{n}\neq k_{n}$ and then:%
\begin{equation}
\mathsf{B}_{\mathbf{h}}(\zeta _{n}^{(k_{n})}|\beta)\neq 0,\quad
\mathsf{\bar B}_{\mathbf{k}}(\zeta _{n}^{(k_{n})}|\beta)=0.
\end{equation}%
To compute the diagonal elements $M_{\varkappa \left( \mathbf{h}%
\right) \varkappa \left( \mathbf{h}\right) }$, we compute the matrix
elements 
$$\theta _{a}(\beta)= \langle
\be-2,h_{1},...,h_{a}=1,...,h_{\mathsf{N}}|\mathcal{D}_{-}(\xi _{a}+\eta
/2|\beta)|\beta,h_{1},...,h_{a}=0,...,h_{\mathsf{N}}\rangle ,$$
 where $a\in \{1,...,%
\mathsf{N}\}$. Using the right action of the operator $\mathcal{D}_{-}(\xi
_{a}+\eta /2|\beta)$ and the condition (\ref{orthogonality-pseudo-states}), we get:%
\begin{align}
\theta _{a}(\be)& = f ^{-1}_a(\beta)\mathsf{A}%
_{-}(\eta /2+\xi _{a})\frac{\sinh \eta }{\sinh (2\xi _{a}-\eta )}\prod 
_{\substack{ b=1  \\ b\neq a}}^{\mathsf{N}}\frac{\cosh 2\zeta
_{a}^{(1)}-\cosh 2\zeta _{b}^{(h_{b})}}{\cosh 2\zeta _{a}^{(0)}-\cosh 2\zeta
_{b}^{(h_{b})}}  \notag \\
& \times \left. \langle \beta-2,h_{1},...,h_{a}=1,...,h_{\mathsf{N}%
}|\beta,h_{1},...,h_{a}=1,...,h_{\mathsf{N}}\rangle \right.
\end{align}%
while using the decomposition (\ref{parity-m-3}) and the fact that:%
\begin{equation}
\langle \beta-2,h_{1},...,h_{a}=1,...,h_{\mathsf{N}}|\mathcal{A}_{-}(-(\xi
_{a}+\eta /2)|\beta)=0
\end{equation}%
it holds:%
\begin{align}
& \langle \beta-2,h_{1},...,h_{a}\left. =\right. 1,...,h_{\mathsf{N}}|\mathcal{D}%
_{-}(\xi _{a}+\eta /2|\beta)  \notag \\
& \left. =\right. \frac{\sinh \eta \sinh (2\xi _{a}+\beta \eta )}{\sinh
(2\xi _{a}+\eta )\sinh (\beta )\eta }\langle \beta-2,h_{1},...,h_{a}\left.
=\right. 1,...,h_{\mathsf{N}}|\mathcal{A}_{-}(\xi _{a}+\eta /2|\beta) \\
\left. =\right. & \frac{\sinh \eta \sinh (2\xi _{a}+\beta \eta )}{\sinh
(2\xi _{a}+\eta )\sinh (\beta )\eta }\mathsf{A}_{-}(\eta /2+\xi
_{a})\langle \beta-2,h_{1},...,h_{a}\left. =\right. 0,...,h_{\mathsf{N}}|,
\end{align}%
and then we get:%
\begin{equation}
\theta _{a}^{\left( m\right) }=\frac{\sinh \eta \sinh (2\xi _{a}+(\beta
)\eta )}{\sinh (2\xi _{a}+\eta )\sinh (\beta )\eta }\mathsf{A}_{-}(\eta
/2+\xi _{a})\langle \beta-2,h_{1},...,h_{a}=0,...,h_{\mathsf{N}%
}|\beta,h_{1},...,h_{a}=0,...,h_{\mathsf{N}}\rangle ,
\end{equation}%
so that it holds:%
\begin{equation}
\frac{\langle \beta-2,h_{1},...,h_{a}=1,...,h_{\mathsf{N}%
}|\beta,h_{1},...,h_{a}=1,...,h_{\mathsf{N}}\rangle }{\langle
m-2,h_{1},...,h_{a}=0,...,h_{\mathsf{N}}|\beta,h_{1},...,h_{a}=0,...,h_{\mathsf{N%
}}\rangle }=\prod_{\substack{ b=1  \\ b\neq a}}^{\mathsf{N}}\frac{\cosh
2\zeta _{a}^{(0)}-\cosh 2\zeta _{b}^{(h_{b})}}{\cosh 2\zeta _{a}^{(1)}-\cosh
2\zeta _{b}^{(h_{b})}},  \label{F1}
\end{equation}%
from which one can prove:%
\begin{equation}
\frac{\langle \beta-2,h_{1},...,h_{\mathsf{N}}|\beta,h_{1},...,h_{\mathsf{N}}\rangle 
}{\langle \beta-2,1,...,1|\beta,1,...,1\rangle }=\prod_{1\leq b<a\leq \mathsf{N}}%
\frac{\eta _{a}^{\left( 1\right) }-\eta _{b}^{\left( 1\right) }}{\eta
_{a}^{\left( h_{a}\right) }-\eta _{b}^{\left( h_{b}\right) }}.  \label{F2}
\end{equation}%
This prove the proposition as it is easy to see that%
\begin{equation}
\langle \beta-2,1,...,1|\beta,1,...,1\rangle =Z(\be-2)\prod_{1\leq b<a\leq \mathsf{N}}\frac{1%
}{\eta _{a}^{\left( 1\right) }-\eta _{b}^{\left( 1\right) }},
\end{equation}%
by our definition of the normalization $Z(\beta)$.
\end{proof}

\subsection{SOV-decomposition of the identity}

The identity $\mathbb{I}$ admits the following representation in terms of
left and right SOV-basis:%
\begin{equation}
\mathbb{I}= \sum_{i=1}^{2^{\mathsf{N}}}\mu |\beta,\varkappa ^{-1}\left(
i\right) \rangle \langle \beta-2,\varkappa ^{-1}\left( i\right) |,
\end{equation}%
where the $\mu = \left( \langle \beta-2,\varkappa ^{-1}\left( i\right)
|\beta,\varkappa ^{-1}\left( i\right) \rangle \right) ^{-1}$ is the Sklyanin's
measure\footnote{Sklyanin's measure has been first introduced by Sklyanin in the quantum Toda
chain \cite{Skl85}, see also \cite{Smi98a} and \cite{BytT06} for further
discussions.} analogous in our 6-vertex reflection algebra representations.
Now using the result of the previous section we can write it explicitly:%
\begin{equation}
\mathbb{I}=\frac 1{Z(\be-2)} \sum_{h_{1},...,h_{\mathsf{N}}=0}^{1}\prod_{1\leq b<a\leq 
\mathsf{N}}(\eta _{a}^{(h_{a})}-\eta _{a}^{(h_{a})})|\beta,h_{1},...,h_{\mathsf{N%
}}\rangle \langle \beta-2,h_{1},...,h_{\mathsf{N}}|.  \label{Decomp-Id}
\end{equation}

\section{SOV representations for $\mathcal{T}(\protect\lambda )$-spectral
problem}

\label{SOV-Gen}In \cite{Skl85,Skl92} Sklyanin has introduced a method to
construct quantum separation of variable (SOV) representations for the
spectral problem of the transfer matrices associated to the representations
of the Yang-Baxter algebra. For the most general representations of the
reflection algebra with non-diagonal boundary matrices the quantum SOV
representations are  constructed here following the same approach developed
in \cite{Nic12b} but we use  the gauge transformation to eliminate one of the non-diagonal entries of $K_+$. It means that we fix either $\al-\beta$ or $\al+\beta$. It is important to underline that the second gauge parameter remains free and can be used either to eliminate the second non-diagonal entry of $K_+$ or the corresponding entry of $K_-$. However we do not need to fix this second parameter to construct the eigenvectors of the transfer matrix. 

More precisely,
 the following theorems hold:

\begin{theorem}
\label{bsovtau}
Under the most general boundary conditions,
and  if the gauge parameters $\alpha ,\beta \in \mathbb{C}$ satisfy the following condition for an integer $k$
\begin{equation}
(\alpha -\beta +2)\eta = -\tau _{+}+(-1)^k (\alpha _{+}-\beta _{+})+i\pi
k,  \label{Triangular-gauge-K+B}
\end{equation}%
then $K_+^{(L)}(\la|\beta-1)_{12}=K_+^{(R)}(\la|\beta-1)_{12}=0$ and:

I$_{b}$) the left representation for which the one parameter family $%
\mathcal{B}_{-}(\lambda |\beta-2)$\ is pseudo-diagonal defines a left SOV
representation for the spectral problem of the transfer matrix $\mathcal{T}%
(\lambda )$.

II$_{b}$) the right representation for which the one parameter family $%
\mathcal{B}_{-}(\lambda |\beta)$\ is pseudo-diagonal defines a right SOV
representation for the spectral problem of the transfer matrix $\mathcal{T}%
(\lambda )$.
\end{theorem}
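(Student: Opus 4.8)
The plan is to prove the theorem in three stages: first triangularize $K_+$ by the gauge, then use the two representations of $\mathcal{T}(\lambda)$ from the Proposition containing (\ref{a+})--(\ref{d+}) to express the transfer matrix through only two families of gauged generators, and finally evaluate on the $\mathcal{B}_-$-SOV basis at the interpolation nodes to exhibit the separated structure.

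First I would verify that (\ref{Triangular-gauge-K+B}) forces $K_+^{(L)}(\lambda|\beta-1)_{12}=K_+^{(R)}(\lambda|\beta-1)_{12}=0$. By the definitions of the left and right gauged $K_+$ matrices these two entries are $\tilde{Y}(\eta/2-\lambda|\beta-1)K_+(\lambda)\hat{Y}(\lambda-\eta/2|\beta-3)$ and $\bar{Y}(\eta/2-\lambda|\beta-1)K_+(\lambda)Y(\lambda-\eta/2|\beta-3)$; since $\tilde{Y}$ is proportional to $\bar{Y}$ by (\ref{XY-tilde}) and $\hat{Y}$ is proportional to $Y$ by its definition, the two entries differ only by a nonzero scalar, so it suffices to annihilate one of them. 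Inserting the explicit $K_+$-form of Appendix A, the bracket reduces to a single hyperbolic factor whose argument is an affine function of $(\alpha-\beta)\eta$ and of $\tau_+,\alpha_+,\beta_+$; requiring it to vanish identically in $\lambda$ gives exactly (\ref{Triangular-gauge-K+B}), the sign $(-1)^k$ and the shift $i\pi k$ arising from the zeros of $\sinh$. I expect this to be the main obstacle, since it is the only point where the Appendix A expressions and the right-boundary parameters enter, and it is precisely what plays the role of the usual boundary constraint while fixing only the combination $\alpha-\beta$.

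For case I$_b$, with $K_+^{(L)}(\lambda|\beta-1)_{12}=0$ the first representation of $\mathcal{T}(\lambda)$ collapses to $\mathcal{T}(\lambda)=\mathsf{a}_+(\lambda|\beta-1)\mathcal{A}_-(\lambda|\beta)+\mathsf{a}_+(-\lambda|\beta-1)\mathcal{A}_-(-\lambda|\beta)+K_+^{(L)}(\lambda|\beta-1)_{21}\mathcal{B}_-(\lambda|\beta-2)$. I then act on the left with the basis $\langle\beta-2,\mathbf{h}|$ that pseudo-diagonalizes $\mathcal{B}_-(\lambda|\beta-2)$ and evaluate at the nodes $\lambda=\zeta_a^{(h_a)}$. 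Because the pseudo-eigenvalue $\mathsf{B}_{\mathbf{h}}$ carries the factor $a_{\mathbf{h}}(\lambda)a_{\mathbf{h}}(-\lambda)$ of (\ref{a_h}), it vanishes at every node, so the $\mathcal{B}_-$ term drops out and with it the unwanted $\beta$-shift. Reading the $\mathcal{A}_-$-action formula (\ref{L-SOV A-}) with $\beta\to\beta-2$, at $\lambda=\pm\zeta_a^{(h_a)}$ the product structure isolates the single index $a$, and using $\mathcal{T}(-\lambda)=\mathcal{T}(\lambda)$ together with $\zeta_{a+\mathsf{N}}^{(h_a)}=-\zeta_a^{(h_a)}$ one obtains
\begin{align*}
\langle\beta-2,\mathbf{h}|\mathcal{T}(\zeta_a^{(h_a)})=&\,\mathsf{a}_+(\zeta_a^{(h_a)}|\beta-1)\mathsf{A}_-(\zeta_a^{(h_a)})\langle\beta-2,\mathbf{h}|T_a^{-}\\
&+\mathsf{a}_+(-\zeta_a^{(h_a)}|\beta-1)\mathsf{A}_-(-\zeta_a^{(h_a)})\langle\beta-2,\mathbf{h}|T_a^{+},
\end{align*}
so that $\mathcal{T}$ at the node $\zeta_a^{(h_a)}$ shifts only the single coordinate $h_a$ (with $h_{a+\mathsf{N}}\equiv h_a$). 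The evenness collapses the $2\mathsf{N}$ nodes to $\mathsf{N}$ inequivalent ones, and the spectral problem reduces to $\mathsf{N}$ independent two-term relations coupling $h_a=0$ and $h_a=1$; this is the defining feature of a left SOV representation.

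For case II$_b$ I would repeat the argument on the second representation: $K_+^{(R)}(\lambda|\beta-1)_{12}=0$ removes the $\mathcal{C}_-(\lambda|\beta)$ term and leaves $\mathsf{d}_+(\pm\lambda|\beta-1)\mathcal{D}_-(\pm\lambda|\beta)$ together with $K_+^{(R)}(\lambda|\beta-1)_{21}\mathcal{B}_-(\lambda|\beta)$. Acting on the right basis $|\beta,\mathbf{h}\rangle$ that pseudo-diagonalizes $\mathcal{B}_-(\lambda|\beta)$, the pseudo-eigenvalue $\mathsf{\bar B}_{\mathbf{h}}$ again vanishes at the nodes by the same $a_{\mathbf{h}}(\lambda)a_{\mathbf{h}}(-\lambda)$ factor, while the $\mathcal{D}_-$-action (\ref{R-SOV D-}) isolates the single coordinate $a$, giving the mirror separated relations. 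To close both cases I would finally observe that (\ref{Triangular-gauge-K+B}) constrains only $\alpha-\beta$, so the orthogonal combination $\alpha+\beta$ remains free and may be chosen to satisfy the non-nilpotency conditions (\ref{NON-nilp-B-L}), respectively (\ref{NON-nilp-B-R}) (with the appropriate shift of $\beta$), which guarantees that the required $\mathcal{B}_-$-SOV basis of Theorem \ref{Th1} indeed exists.
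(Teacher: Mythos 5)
Your first three stages are correct and follow essentially the same route as the paper: you verify via Appendix A that the bracket in $K_{+}^{(L)}(\lambda|\beta-1)_{12}$ is $\lambda$-independent and vanishes precisely under (\ref{Triangular-gauge-K+B}) (with $K_{+}^{(R)}(\lambda|\beta-1)_{12}=e^{-2\eta}K_{+}^{(L)}(\lambda|\beta-1)_{12}$ following from the proportionalities $\tilde{Y}\propto\bar{Y}$, $\hat{Y}\propto Y$); you then use the two representations of $\mathcal{T}(\lambda)$ in terms of $\mathsf{a}_{+},\mathsf{d}_{+}$, kill the $\mathcal{B}_{-}$ term at the nodes through the factor $a_{\mathbf{h}}(\lambda)a_{\mathbf{h}}(-\lambda)$ in its pseudo-eigenvalue, and collapse (\ref{L-SOV A-}), resp.\ (\ref{R-SOV D-}), at $\lambda=\pm\zeta_{a}^{(h_{a})}$ to a single shift term. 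This is exactly the mechanism by which the paper derives the Baxter-like discrete system (\ref{SOVBax1}) in the proof of Theorem \ref{C:T-eigenstates-}, which is where the paper actually carries out the proof of Theorem \ref{bsovtau}.

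Your closing paragraph, however, contains a genuine error. The non-nilpotency conditions (\ref{NON-nilp-B-L}) and (\ref{NON-nilp-B-R}) constrain the combination $(\alpha-\beta)\eta$, not $(\alpha+\beta)\eta$: they involve the \emph{left} boundary parameters $\tau_{-},\alpha_{-},\beta_{-}$, but the very same gauge combination $\alpha-\beta$ that the triangularity condition (\ref{Triangular-gauge-K+B}) has already fixed in terms of the \emph{right} boundary parameters $\tau_{+},\alpha_{+},\beta_{+}$. Hence the "orthogonal combination" $\alpha+\beta$ cannot be tuned to satisfy them --- it simply does not appear in them (it is relevant only for the $\mathcal{C}_{-}$ cases I$_{c}$/II$_{c}$, where both (\ref{Triangular-gauge-K+C}) and (\ref{NON-nilp-C-L})/(\ref{NON-nilp-C-R}) involve $\alpha+\beta$). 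Once (\ref{Triangular-gauge-K+B}) is imposed, the existence of the required $\mathcal{B}_{-}$-SOV basis is no longer a matter of choice: it is a yes/no condition on the boundary parameters alone, and it fails exactly when the value of $\alpha-\beta$ dictated by triangularity hits the forbidden set of Theorem \ref{Th1}, i.e.\ when the Nepomechie-type constraint (\ref{Fail-SOV-i}) (for I$_{b}$) or (\ref{Fail-SOV-ii}) (for II$_{b}$) holds --- this is precisely the content of the paper's applicability theorem in Section 5.2, and it shows that the stronger statement you try to establish (that the construction can always be closed under the most general boundary conditions) is actually false. The correct conclusion is that the separated structure you derived holds whenever the corresponding pseudo-eigenbasis exists, and that at least one of the two constructions I$_{b}$, II$_{b}$ (or their $\mathcal{C}_{-}$ counterparts) survives unless both conditions (\ref{Fail-SOV-i}) and (\ref{Fail-SOV-ii}) are satisfied simultaneously.
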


Similarly we can formulate the same theorem for the $\mathcal{C}_{-}(\lambda |\beta)$ SOV representations: 

\begin{theorem}
\label{csovtau}
Under the most general boundary conditions, if the gauge parameters $\alpha ,\beta \in \mathbb{C}$ satisfy the following condition for an integer $k$
\begin{equation}
(\alpha +\beta )\eta = -\tau _{+}+(-1)^k (\alpha _{+}-\beta _{+})+i\pi
k,  \label{Triangular-gauge-K+C}
\end{equation}%
then $K_+^{(L)}(\la|\beta-1)_{21}=K_+^{(R)}(\la|\beta-1)_{21}=0$ and:

I$_{c}$) the left representation for which the one parameter family $%
\mathcal{C}_{-}(\lambda |\beta+2)$\ is pseudo-diagonal defines a left SOV
representation for the spectral problem of the transfer matrix $\mathcal{T}%
(\lambda )$.

II$_{c}$) the right representation for which the one parameter family $%
\mathcal{C}_{-}(\lambda |\beta)$\ is pseudo-diagonal defines a right SOV
representation for the spectral problem of the transfer matrix $\mathcal{T}%
(\lambda )$.
\end{theorem}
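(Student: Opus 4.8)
The plan is to prove Theorem \ref{csovtau} by the same two-step mechanism used for Theorem \ref{bsovtau}, exploiting the $\beta$-parity symmetry of the gauged reflection algebra to transport the $\mathcal{B}_-$-construction onto the $\mathcal{C}_-$-construction. First I would verify the vanishing statement $K_+^{(L)}(\lambda|\beta-1)_{21}=K_+^{(R)}(\lambda|\beta-1)_{21}=0$. Using the explicit forms of the gauged matrices $K_+^{(L)}$ and $K_+^{(R)}$ recorded in Appendix A, each $(2,1)$ entry is a single scalar product of the type $\tilde{X}(\cdots)K_+(\lambda)\hat{X}(\cdots)$ (respectively $\bar{X}(\cdots)K_+(\lambda)X(\cdots)$), which factorizes into a product of hyperbolic functions carrying the combination $(\alpha+\beta)\eta+\tau_+$; one checks directly that this factor vanishes precisely on the locus \eqref{Triangular-gauge-K+C}. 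This is the exact mirror of the computation underlying the $(1,2)$-entry vanishing in Theorem \ref{bsovtau}, now with $\alpha-\beta+2$ replaced by $\alpha+\beta$.

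Once the $(2,1)$ entries are eliminated, the two transfer-matrix decompositions of the preceding Proposition (the one defining $\mathsf{a}_{+}$ and $\mathsf{d}_{+}$ in \eqref{a+}--\eqref{d+}) lose their $\mathcal{B}_-$ terms and reduce to expressions involving only $\mathcal{A}_-$ (respectively $\mathcal{D}_-$) together with $\mathcal{C}_-$. In the left case this reads $\mathcal{T}(\lambda)=\mathsf{a}_{+}(\lambda|\beta-1)\mathcal{A}_-(\lambda|\beta)+\mathsf{a}_{+}(-\lambda|\beta-1)\mathcal{A}_-(-\lambda|\beta)+K_+^{(L)}(\lambda|\beta-1)_{12}\,\mathcal{C}_-(\lambda|\beta+2)$, and analogously with $\mathsf{d}_{+}$, $\mathcal{D}_-$ and $\mathcal{C}_-(\lambda|\beta)$ in the right case. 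These are exactly the structures needed for separation of variables once $\mathcal{C}_-$ is pseudo-diagonal, the $\mathcal{A}_-$ (resp. $\mathcal{D}_-$) pieces supplying the one-step shift.

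The cleanest way to conclude is to invoke the symmetry $\mathcal{U}_-(\lambda|-\beta+2)=\sigma^x\mathcal{U}_-(\lambda|\beta)\sigma^x$ of \eqref{U-gauge-symm}, equivalently the swaps $\mathcal{B}_-(\lambda|\beta)=\mathcal{C}_-(\lambda|-\beta+2)$ and $\mathcal{A}_-(\lambda|\beta)=\mathcal{D}_-(\lambda|-\beta+2)$ of \eqref{B-to-C-identity}. Under the involution $\beta\mapsto 2-\beta$ this swap interchanges the two reduced decompositions above with those of Theorem \ref{bsovtau}, maps the gauge condition \eqref{Triangular-gauge-K+B} onto \eqref{Triangular-gauge-K+C} (since $\alpha-(2-\beta)+2=\alpha+\beta$), and carries the non-nilpotency conditions \eqref{NON-nilp-B-L}, \eqref{NON-nilp-B-R} of Theorem \ref{Th1} onto \eqref{NON-nilp-C-L}, \eqref{NON-nilp-C-R} (the $\mathsf{N}\mp1$ shifts match precisely). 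Hence the left and right $\mathcal{C}_-$-pseudo-diagonal SOV bases are the $\sigma^x$-images of the $\mathcal{B}_-$-pseudo-diagonal bases, and the SOV separation proved for $\mathcal{B}_-$ transports verbatim, giving parts I$_c$) and II$_c$).

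The main obstacle is not the symmetry bookkeeping itself but the verification that the reduced decomposition genuinely separates: one must confirm that in the $\mathcal{C}_-$-pseudo-diagonal basis the shift action of $\mathcal{A}_-$ (resp. $\mathcal{D}_-$) combines with the scalar coefficients $\mathsf{a}_{+}$ (resp. $\mathsf{d}_{+}$) and with the $\mathcal{C}_-$ pseudo-eigenvalue so that evaluating $\mathcal{T}(\lambda)$ at the $2\mathsf{N}$ points $\lambda=\zeta_a^{(h_a)}$ yields $\mathsf{N}$ decoupled single-variable relations with no residual coupling among the $h_n$. The symmetry argument delivers this for free, but if one prefers a direct proof the delicate point is tracking the $\beta$-shifts through the dynamical commutation relations of the Lemma and the quantum-determinant identities; one must also check that the gauge-fixing \eqref{Triangular-gauge-K+C} is compatible with the simplicity conditions \eqref{NON-nilp-C-L}--\eqref{NON-nilp-C-R}, i.e. that it does not force a degenerate choice of $\beta$ at which the pseudo-spectrum collapses.
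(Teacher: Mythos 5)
Your proposal is correct and takes essentially the same route as the paper: the authors prove Theorem \ref{bsovtau} in detail and dispose of Theorem \ref{csovtau} by noting it ``can be proved in a similar way,'' with the $\mathcal{C}_-$-cases of Theorem \ref{Th1} explicitly ``induced from the others due to the symmetries'' \rf{U-gauge-symm}--\rf{B-to-C-identity}, which is precisely your $\beta\mapsto 2-\beta$ transport. Your supporting checks are accurate as well: the Appendix~A expressions show the $(2,1)$ entries of $K_+^{(L/R)}(\lambda|\beta-1)$ vanish exactly on the locus \rf{Triangular-gauge-K+C}, and under the involution the gauge condition \rf{Triangular-gauge-K+B} and the conditions \rf{NON-nilp-B-L}, \rf{NON-nilp-B-R} map onto \rf{Triangular-gauge-K+C} and \rf{NON-nilp-C-L}, \rf{NON-nilp-C-R} respectively.
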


The proof of the Theorem \ref{bsovtau} and the explicit constructions of the SOV solutions
of the spectral problem for the transfer matrix $\mathcal{T}(\lambda )$ will
be given in the following subsections. Theorem \ref{csovtau}  can be proved  in a similar
way.

\subsection{Transfer matrix spectrum in $\mathcal{B}_{-}(\la|\beta)$%
-SOV-representations}

\begin{theorem}
\label{C:T-eigenstates-} Let  $\Sigma _{\mathcal{T}}$ be  the set
of the eigenvalue functions of the transfer matrix $\mathcal{T}(\lambda )$,
then any $\tau (\lambda )\in \Sigma _{\mathcal{T}}$ is an even function of $%
\lambda $\ of the form:%
\begin{align}
\tau (\lambda )=&\sum_{a=1}^{\mathsf{N}}\frac{\cosh^2 2\la-\cosh^2\eta}{\cosh^2 2\zeta
_{a}^{(0)}-\cosh^2\eta}\prod_{\substack{ b=1  \\ b\neq a}}%
^{\mathsf{N}}\frac{\cosh 2\lambda -\cosh 2\zeta _{b}^{(0)}}{\cosh 2\zeta
_{a}^{(0)}-\cosh 2\zeta _{b}^{(0)}}\tau (\zeta _{a}^{(0)}), \notag\\
&+(\cosh 2\la+\cosh\eta)\prod_{b=1}%
^{\mathsf{N}}\frac{\cosh 2\lambda -\cosh 2\zeta _{b}^{(0)}}{\cosh \eta
-\cosh 2\zeta _{b}^{(0)}} \det_{q}M(0) \notag\\
&+(-1)^{\mathsf{N}}(\cosh 2\la-\cosh\eta)\prod_{b=1}%
^{\mathsf{N}}\frac{\cosh 2\lambda -\cosh 2\zeta _{b}^{(0)}}{\cosh \eta
+\cosh 2\zeta _{b}^{(0)}} \coth\zeta_-\coth\zeta_+\det_{q}M(i\pi/2).
 \label{set-T-1}
\end{align}%
If the condition (\ref{E-SOV}) is satisfied, then $\mathcal{T}%
(\lambda )$ has simple spectrum and $\Sigma _{\mathcal{T}}$ is given by 
the solutions of the discrete system of equations: 
\begin{equation}
\tau (\pm \zeta _{a}^{(0)})\tau (\pm \zeta _{a}^{(1)})=\mathbf{A}%
(\zeta _{a}^{(1)})\mathbf{A}(-\zeta _{a}^{(0)}),\quad\forall
a\in \{1,...,\mathsf{N}\},  \label{I-Functional-eq}
\end{equation}%
in the class of functions of the form (\ref{set-T-1}),
where the coefficient $\mathbf{A}(\lambda )$ is defined by: 
\begin{equation}
\mathbf{A}(\lambda )\equiv \mathsf{a}_{+}(\lambda|\beta -1)\mathsf{A}%
_{-}(\lambda ),
\end{equation}%
and satisfies the quantum determinant condition:

\begin{equation}
\frac{\det_{q}K_{+}(\lambda)\det_{q}
\mathcal{U}_{-}(\lambda)}{\sinh (2\lambda +\eta )\sinh (2\lambda -\eta )}=\mathbf{A}(\lambda+\eta/2 )\mathbf{A}(-\lambda +\eta/2 ).\label{Tot-q-det-tt}
\end{equation}

\begin{itemize}
\item[\textsf{I)}] Under the condition (\ref{NON-nilp-B-R}),
the vector:%
\begin{equation}
|\tau \rangle =\sum_{h_{1},...,h_{\mathsf{N}}=0}^{1}\prod_{a=1}^{\mathsf{N}%
}Q_{\tau }(\zeta _{a}^{(h_{a})})\prod_{1\leq b<a\leq \mathsf{N}}(\eta
_{a}^{(h_{a})}-\eta _{b}^{(h_{b})})|\beta,h_{1},...,h_{\mathsf{N}}\rangle ,
\label{eigenT-r-D}
\end{equation}%
defines, uniquely up to an overall normalization, the right $\mathcal{T}$%
-eigenstate corresponding to $\tau (\lambda )\in \Sigma _{\mathcal{T}}$. The
coefficients in (\ref{eigenT-r-D}) are characterized by:%
\begin{equation}
Q_{\tau }(\zeta _{a}^{(1)})/Q_{\tau }(\zeta _{a}^{(0)})=\tau (\zeta
_{a}^{(0)})\mathbf{A}(-\zeta _{a}^{(0)}).  \label{t-Q-relation}
\end{equation}

\item[\textsf{II)}] Under the condition  (\ref{NON-nilp-B-L}),
the covector 
\begin{equation}
\langle \tau |=\sum_{h_{1},...,h_{\mathsf{N}}=0}^{1}\prod_{a=1}^{\mathsf{N}}%
\bar{Q}_{\tau }(\zeta _{a}^{(h_{a})})\prod_{1\leq b<a\leq \mathsf{N}}(\eta
_{a}^{(h_{a})}-\eta _{b}^{(h_{b})})\langle \beta-2,h_{1},...,h_{\mathsf{N}}|,
\label{eigenT-l-D}
\end{equation}%
defines, uniquely up to an overall normalization, the left $\mathcal{T}$%
-eigenstate corresponding to $\tau (\lambda )\in \Sigma _{\mathcal{T}}$. The
coefficients in (\ref{eigenT-l-D}) are characterized by:%
\begin{equation}
\bar{Q}_{\tau _{-}}(\zeta _{a}^{(1)})/\bar{Q}_{\tau _{-}}(\zeta
_{a}^{(0)})=\tau (\zeta _{a}^{(0)})/\mathbf{D}(\zeta _{a}^{(1)}),
\label{t-Qbar-relation}
\end{equation}%
where:%
\begin{equation}
\mathbf{D}(\zeta _{a}^{(h_a)} )\equiv \mathsf{d}_{+}(\zeta _{a}^{(h_a)} |\beta-1 )\mathsf{D}%
_{-}(\zeta _{a}^{(h_a)}  ).  \label{coeff-T-D}
\end{equation}
\end{itemize}
\end{theorem}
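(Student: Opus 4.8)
The plan is to first pin down the analytic form (\ref{set-T-1}) of any eigenvalue independently of the SOV machinery. Since $\mathcal{T}(\lambda)$ is even by (\ref{even-transfer}) and, by its construction as a trace of products of $R$- and $K$-matrices, an entire trigonometric polynomial, I would argue from the $e^{\lambda}$-degree of the gauged operators that it is a polynomial of degree $\mathsf{N}+1$ in the variable $\cosh 2\lambda$. Such an even polynomial is fixed by its values at $\mathsf{N}+2$ distinct points, and I would choose the $\mathsf{N}$ points $\cosh 2\zeta_a^{(0)}$ together with $\cosh 2\lambda=\pm\cosh\eta$, i.e. $\lambda=\eta/2$ and $\lambda=\eta/2+i\pi/2$. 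At the latter two points the operator identities (\ref{U-identities}) together with the trace formula (\ref{transfer}) produce explicitly the coefficients multiplying $\det_q M(0)$ and $\det_q M(i\pi/2)$, while the remaining $\mathsf{N}$ interpolation data are the free values $\tau(\zeta_a^{(0)})$. Assembling the Lagrange interpolation with these ingredients gives precisely (\ref{set-T-1}).

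Next I would translate the spectral equation into the SOV basis. Under (\ref{Triangular-gauge-K+B}) the coefficient $K_+^{(L/R)}(\lambda|\beta-1)_{12}$ vanishes, so by the Proposition expressing $\mathcal{T}(\lambda)$ through $\mathsf{a}_+$ (\ref{a+}) and $\mathsf{d}_+$ (\ref{d+}) the transfer matrix reduces to $\mathsf{d}_+(\lambda|\beta-1)\mathcal{D}_-(\lambda|\beta)+\mathsf{d}_+(-\lambda|\beta-1)\mathcal{D}_-(-\lambda|\beta)$ plus a single $\mathcal{B}_-$-term for the right problem, and to the analogous $\mathcal{A}_-$-expression plus a $\mathcal{B}_-$-term for the left problem. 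Imposing $\mathcal{T}(\lambda)|\tau\rangle=\tau(\lambda)|\tau\rangle$ and evaluating at the $2\mathsf{N}$ separated points $\pm\zeta_a^{(h_a)}$, the interpolation structure of the action (\ref{R-SOV D-}) localizes the equation to the $a$-th variable and, since $h_a\in\{0,1\}$, couples only the coefficients with $h_a=0$ and $h_a=1$. The evaluations at $\zeta_a^{(1)}$ and at $-\zeta_a^{(0)}$, identified up to the evenness (\ref{even-transfer}), thus yield two first-order relations between these two coefficients, with coefficients built from $\mathbf{D}(\zeta_a^{(1)})$ and $\mathbf{A}(-\zeta_a^{(0)})$.

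These two relations are the core of the construction. Taking their ratio isolates $Q_\tau(\zeta_a^{(1)})/Q_\tau(\zeta_a^{(0)})=\tau(\zeta_a^{(0)})\mathbf{A}(-\zeta_a^{(0)})$, once the Sklyanin-measure Vandermonde $\prod_{b<a}(\eta_a^{(h_a)}-\eta_b^{(h_b)})$ of (\ref{Decomp-Id}) is factored out of the wavefunction $\Psi(\mathbf{h})=\prod_a Q_\tau(\zeta_a^{(h_a)})\prod_{b<a}(\eta_a^{(h_a)}-\eta_b^{(h_b)})$; this is exactly the $t$-$Q$ relation (\ref{t-Q-relation}), the Vandermonde being precisely what turns the SOV action into a clean first-order relation on $Q_\tau$. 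Taking instead their product eliminates the $Q_\tau$-ratio and leaves $\tau(\zeta_a^{(0)})\tau(\zeta_a^{(1)})=\mathbf{A}(\zeta_a^{(1)})\mathbf{A}(-\zeta_a^{(0)})$, the quadratic system (\ref{I-Functional-eq}); here the quantum-determinant identity (\ref{Tot-q-det-tt}) guarantees that the product of the two recursion coefficients collapses to $\mathbf{A}(\zeta_a^{(1)})\mathbf{A}(-\zeta_a^{(0)})$. The left covector (\ref{eigenT-l-D}) is handled identically through the $\mathcal{A}_-$-decomposition (\ref{L-SOV A-}), producing (\ref{t-Qbar-relation}) with $\mathbf{D}$. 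Conversely, any $\tau$ in the class (\ref{set-T-1}) solving (\ref{I-Functional-eq}) makes the $t$-$Q$ ratios consistent, so (\ref{eigenT-r-D}) satisfies the eigenvalue equation at all $2\mathsf{N}$ points and hence identically, both sides being of the form (\ref{set-T-1}); simplicity of the spectrum then follows because under (\ref{E-SOV}) the $\zeta_a^{(h_a)}$ are distinct, $\mathcal{B}_-$ has simple pseudospectrum, and $\tau$ determines the eigenvector uniquely up to an overall scale.

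The principal obstacle, I expect, is the bookkeeping of the dynamical $\beta$-shifts together with the precise coefficient identification. Since $\mathcal{B}_-$ sends the $\beta$-sector to $\beta\pm2$ while $\mathcal{D}_-$ preserves it, one must check carefully that the decomposed transfer matrix acts consistently on a fixed-$\beta$ SOV state and that the localization at $\zeta_a^{(h_a)}$ genuinely decouples the $a$-th variable from the rest. The delicate point is verifying that the two recursion coefficients multiply to exactly $\mathbf{A}(\zeta_a^{(1)})\mathbf{A}(-\zeta_a^{(0)})$, with no residual $\beta$-dependence, which forces one to combine the explicit forms of $\mathsf{a}_+$ and $\mathsf{d}_+$, the definition of $\mathsf{D}_-$ through $f_a(\beta)$, and the quantum-determinant condition (\ref{Tot-q-det-tt}); this single identity is what underwrites the whole equivalence between the spectral problem and the quadratic system.
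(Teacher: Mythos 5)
Your opening step (the characterization \eqref{set-T-1} via evenness, the polynomial degree in $\cosh 2\lambda$, and the values at $\lambda=\eta/2$ and $\lambda=\eta/2+i\pi/2$ supplied by \eqref{U-identities}) is exactly the paper's argument, and your endgame (vanishing of the $2\times 2$ determinants $\Leftrightarrow$ the quadratic system \eqref{I-Functional-eq}, rank one $\Rightarrow$ uniqueness and simplicity, converse by matching at the $2\mathsf{N}$ separated points and invoking the functional form \eqref{set-t}) also coincides with the paper. The genuine gap is in the middle, precisely at the point you defer as the ``principal obstacle'': how the separated equations are obtained. You propose to act with the right decomposition ($\mathsf{d}_+$, $\mathcal{D}_-$, via \eqref{R-SOV D-}) on the right-basis expansion of $|\tau\rangle$, and with \eqref{L-SOV A-} on the left-basis expansion of $\langle\tau|$. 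This does not localize. Fix an evaluation point $\pm\zeta_a^{(h_a)}$: in $|\tau\rangle=\sum_{\mathbf{k}}c_{\mathbf{k}}|\beta,\mathbf{k}\rangle$ the residual term $K_{+}^{(R)}(\lambda|\beta-1)_{21}\mathcal{B}_{-}(\lambda|\beta)$ is annihilated only on the components with $k_a=h_a$, because $\bar{\mathsf{B}}_{\mathbf{k}}(\lambda|\beta)\propto a_{\mathbf{k}}(\lambda)a_{\mathbf{k}}(-\lambda)$ vanishes at the separated points attached to $\mathbf{k}$, not at those attached to $\mathbf{h}$. The other half of the components are sent by $\mathcal{B}_{-}(\lambda|\beta)$ into the shifted basis $|\beta+2,\mathbf{k}\rangle$ with non-zero coefficients, and neither the re-expansion of $\{|\beta+2,\mathbf{k}\rangle\}$ in $\{|\beta,\mathbf{k}\rangle\}$ nor the overlaps $\langle \beta-2,\mathbf{h}|\beta+2,\mathbf{k}\rangle$ are controlled (only $\langle\beta-2,\mathbf{h}|\beta,\mathbf{k}\rangle$ is known, from \eqref{M_jj}). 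So the equation at a separated point does not close on the two coefficients with $h_a=0,1$; the same failure occurs verbatim for your treatment of $\langle\tau|$, whose $\mathcal{B}_-(\lambda|\beta-2)$ term throws half the components into the $\beta-4$ left basis.

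The missing idea is the paper's \emph{cross-pairing}: to characterize the \emph{right} eigenstate one computes $\langle\beta-2,\mathbf{h}|\mathcal{T}(\pm\zeta_n^{(h_n)})|\tau\rangle$, letting the operators of the \emph{left} decomposition \eqref{T-decomp-L} act leftward on the single covector $\langle\beta-2,\mathbf{h}|$. Then $\mathcal{B}_{-}(\lambda|\beta-2)$ meets its own left pseudo-eigenbasis, and since the evaluation point is a separated point of that specific $\mathbf{h}$, its pseudo-eigenvalue vanishes: the dynamical shift to $\langle\beta-4,\mathbf{h}|$ is multiplied by zero and is harmless, while the two $\mathcal{A}_-$ terms act through \eqref{L-SOV A-} and stay inside the $\beta-2$ basis, producing the two-term recursion for $\Psi_\tau(\mathbf{h})=\langle\beta-2,\mathbf{h}|\tau\rangle$ with coefficients $\mathbf{A}(-\zeta_n^{(0)})$, $\mathbf{A}(\zeta_n^{(1)})$. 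Dually, the left covector is characterized by letting the right decomposition act rightward on $|\beta,\mathbf{h}\rangle$, which yields the $\mathbf{D}$ coefficients. This assignment is also what the statement requires: \textsf{I)} is an $\mathbf{A}$-characterization of $Q_\tau$ and \textsf{II)} a $\mathbf{D}$-characterization of $\bar{Q}_\tau$, whereas your pairing would produce the opposite attribution (your text indeed mixes ``$\mathbf{D}(\zeta_a^{(1)})$ and $\mathbf{A}(-\zeta_a^{(0)})$'' inside one and the same problem). Two secondary points: within each $2\times 2$ homogeneous system the $t$-$Q$ relation follows from either single row (rank one), giving the \emph{ratio} $\Psi_\tau(h_a=1)/\Psi_\tau(h_a=0)=\tau(\zeta_a^{(0)})/\mathbf{A}(-\zeta_a^{(0)})$ — consistent with the paper's proof and indicating a typo in \eqref{t-Q-relation}, which you reproduce as a product; and the quantum determinant identity \eqref{Tot-q-det-tt} is not what collapses the recursion coefficients (their product is $\mathbf{A}(\zeta_a^{(1)})\mathbf{A}(-\zeta_a^{(0)})$ by construction) — it is proved separately, from $\det_{q}K_{+}(\lambda)=-\sinh(2\lambda+\eta)\,\mathsf{a}_{+}(\lambda+\eta/2|\beta-1)\,\mathsf{a}_{+}(-\lambda+\eta/2|\beta-1)$ under \eqref{Triangular-gauge-K+B}.
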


\begin{proof}
The transfer matrix $\mathcal{T}(\lambda )$\ is an even function of $\lambda 
$\ so the same is true for the $\tau (\lambda )\in \Sigma _{\mathcal{T}}$ .
Moreover, from the identities (\ref{U-identities}) and after some simple
computation the following identities are derived: 
\begin{align}
\mathcal{T}(\pm \eta /2)& =2\cosh \eta \det_{q}M(0), \\
\mathcal{T}(\pm (\eta /2-i\pi /2))& =-2\cosh \eta \coth \zeta _{-}\coth
\zeta _{+}\det_{q}M(i\pi /2).
\end{align}%
These identities together with the known functional form of $\mathcal{T}%
(\lambda )$ with respect to  $\lambda $ imply that $\tau (\lambda )\in \Sigma _{%
\mathcal{T}}$ satisfy the characterization (\ref{set-T-1}).
 In the $\mathcal{B}_{-}$-SOV representations the spectral problem for $%
\mathcal{T}(\lambda )$ is reduced to the following discrete system of $2^{%
\mathsf{N}}$ Baxter-like equations: 
\begin{equation}
\tau (\zeta _{n}^{(h_{n})})\Psi _{\tau }(\mathbf{h})\,=\mathbf{A%
}(\zeta _{n}^{(h_{n})})\Psi _{\tau }(\mathsf{T}_{n}^{-}(\mathbf{h}))+%
\mathbf{A}(-\zeta _{n}^{(h_{n})})\Psi _{\tau }(\mathsf{T}_{n}^{+}(%
\mathbf{h})),  \label{SOVBax1}
\end{equation}%
for any $n\in \{1,...,\mathsf{N}\}$ and $\mathbf{h}\,\in \{0,1\}^{\mathsf{N}%
} $, in the coefficients (\textit{wave-functions}) $\Psi _{\tau }(\mathbf{h}%
) $ of the $\mathcal{T}$-eigenstate $|\tau \rangle $ associated to $\tau
(\lambda )\in \Sigma _{\mathcal{T}}$. Here, we have used the notations:%
\begin{equation}
\mathsf{T}_{n}^{\pm }(\mathbf{h})= (h_{1},\dots ,h_{n}\pm
1,\dots ,h_{\mathsf{N}}).
\end{equation}%
This system trivially follows when we recall the identities:%
\begin{equation}
\mathbf{A}_{-}(\zeta _{n}^{(0)})=\mathbf{A}_{-}(-\zeta
_{n}^{(1)})=0,
\end{equation}%
and we compute the matrix elements:%
\begin{equation}
\langle \beta-2,h_{1},...,h_{n},...,h_{\mathsf{N}}|\mathcal{T}(\pm \zeta
_{n}^{(h_{n})})|\tau \rangle .
\end{equation}%
Indeed, from the decomposition (\ref{T-decomp-L}), we have:%
\begin{align}
\tau (\pm \zeta _{n}^{(0)})\Psi _{\tau }(h_{1},...,h_{n}& =0,...,h_{\mathsf{N%
}})=  \notag \\
& =\langle \beta-2,h_{1},...,h_{n}=0,...,h_{\mathsf{N}}|\mathcal{T}(-\zeta
_{n}^{(0)})|\tau \rangle  \notag \\
& =\mathsf{a}_{+}(-\zeta _{n}^{(0)})\langle \beta-2,h_{1},...,h_{n}=0,...,h_{%
\mathsf{N}}|\mathcal{A}_{-}(-\zeta _{n}^{(0)})|\tau \rangle  \notag \\
& =\mathbf{A}(-\zeta _{n}^{(0)})\Psi _{\tau
}(h_{1},...,h_{n}=1,...,h_{\mathsf{N}}),
\end{align}%
and%
\begin{align}
\tau (\pm \zeta _{n}^{(1)})\Psi _{\tau }(h_{1},...,h_{n}& =1,...,h_{\mathsf{N%
}})=  \notag \\
& =\langle \beta-2,h_{1},...,h_{n}=1,...,h_{\mathsf{N}}|\mathcal{T}(\zeta
_{n}^{(1)})|\tau \rangle  \notag \\
& =\mathsf{a}_{+}(\zeta _{n}^{(1)})\langle \beta-2,h_{1},...,h_{n}=1,...,h_{%
\mathsf{N}}|\mathcal{A}_{-}(\zeta _{n}^{(1)})|\tau \rangle  \notag \\
& =\mathbf{A}(\zeta _{n}^{(1)})\Psi _{\tau }(h_{1},...,h_{n}=0,...,h_{%
\mathsf{N}}) .
\end{align}

Clearly the previous system of equations (\ref{SOVBax1})  is
equivalent to the following system of homogeneous equations:%
\begin{equation}
\left( 
\begin{array}{cc}
\tau (\pm \zeta _{n}^{(0)}) & -\mathbf{A}(-\zeta _{n}^{(0)}) \\ 
-\mathbf{A}(\zeta _{n}^{(1)}) & \tau (\pm \zeta _{n}^{(1)})%
\end{array}%
\right) \left( 
\begin{array}{c}
\Psi _{\tau -}(h_{1},...,h_{n}=0,...,h_{1}) \\ 
\Psi _{\tau -}(h_{1},...,h_{n}=1,...,h_{1})%
\end{array}%
\right) =\left( 
\begin{array}{c}
0 \\ 
0%
\end{array}%
\right) ,  \label{homo-system}
\end{equation}%
for any $n\in \{1,...,\mathsf{N}\}$ with $h_{m\neq n}\in \{0,1\}$. The
condition $\tau (\lambda )\in \Sigma _{\mathcal{T}_{-}}$ implies that the
determinants of the $2\times 2$ matrices in (\ref{homo-system}) 
 must be zero for any $n\in \{1,...,\mathsf{N}\}$, which is equivalent to (%
\ref{I-Functional-eq}). Moreover, the rank of the matrices in (\ref%
{homo-system})  is 1 as%
\begin{equation}
\mathbf{A}(-\zeta _{n}^{(0)})\neq 0\text{\ \ and \ \ }\mathbf{A}(\zeta
_{n}^{(1)})\neq 0,  \label{Rank1}
\end{equation}%
and then (up to an overall normalization) the solution is unique:%
\begin{equation}
\frac{\Psi _{\tau }(h_{1},...,h_{n}=1,...,h_{\mathsf{N}})}{\Psi _{\tau
}(h_{1},...,h_{n}=0,...,h_{\mathsf{N}})}=\frac{\tau (\zeta _{a}^{(0)})}{%
\mathbf{A}(-\zeta _{a}^{(0)})},
\end{equation}%
for any $n\in \{1,...,\mathsf{N}\}$ with $h_{m\neq n}\in \{0,1\}$. So fixed $%
\tau (\lambda )\in \Sigma _{\mathcal{T}}$ there exists (up to normalization)
one and only one corresponding $\mathcal{T}$-eigenstate $|\tau \rangle $
with coefficients of the factorized form given in (\ref{eigenT-r-D})-(\ref{t-Q-relation}); i.e. the $\mathcal{T}$%
-spectrum is simple.

Vice versa, if $\tau (\lambda )$ is in the set of functions (\ref{set-T-1}) %
 and satisfies (\ref{I-Functional-eq}), then the state $%
|\tau \rangle $ defined by (\ref{eigenT-r-D}-\ref%
{t-Q-relation})  satisfies:%
\begin{eqnarray*}
\left\langle \beta-2,h_{1},...,h_{n},...,h_{\mathsf{N}}\right\vert \mathcal{T}%
(\zeta _{n}^{(h_{n})})|\tau \rangle &=&\left\{ \!\!
\begin{array}{l}
\mathbf{A}(-\zeta _{n}^{(0)})\Psi _{\tau }(h_{1},...,h_{n}=1,...,h_{%
\mathsf{N}})\text{ \ \ for }h_{n}=0 \\ 
\mathbf{A}(\zeta _{n}^{(1)})\Psi _{\tau }(h_{1},...,h_{n}=0,...,h_{%
\mathsf{N}})\text{ \ \ for }h_{n}=1%
\end{array}%
\right. \\
&=&\left\{ \!\!
\begin{array}{l}
\mathbf{A}(-\zeta _{n}^{(0)})\frac{\tau (\zeta _{a}^{(0)})}{%
\mathbf{A}(-\zeta _{a}^{(0)})}\Psi _{\tau
}(h_{1},...,h_{n}=0,...,h_{\mathsf{N}})\text{ \  for }h_{n}=0 \\ 
\mathbf{A}(\zeta _{n}^{(1)})\frac{\tau (\zeta _{a}^{(0)})}{%
\mathbf{A}(\zeta _{a}^{(1)})}\Psi _{\tau
}(h_{1},...,h_{n}=1,...,h_{\mathsf{N}})\text{ \  for }h_{n}=1%
\end{array}%
\right. \\
&=&\tau (\zeta _{n}^{(h_{n})})\Psi _{\tau }(h_{1},...,h_{n},...,h_{\mathsf{N}%
})\text{ \ }\forall n\in \{1,...,\mathsf{N}\},
\end{eqnarray*}%
this, and the following functional form with respect to  $\lambda $
of the transfer matrix:%
\begin{equation}
\mathcal{T}(\lambda )=\sum_{b=1}^{\mathsf{N}+2}\mathcal{T}_{b}(\cosh
2\lambda )^{b-1},  \label{set-t}
\end{equation}%
implies the identity:%
\begin{equation}
\left\langle \beta-2,h_{1},...,h_{\mathsf{N}}\right\vert \mathcal{T}(\lambda
)|\tau \rangle =\tau (\lambda )\Psi _{\tau }(h_{1},...,h_{n},...,h_{\mathsf{N%
}})\text{ \ \ }\forall \lambda \in \mathbb{C},
\end{equation}%
for any $\mathcal{B}_{-}(\la|\beta-2)$ pseudo-eigenstate $\left\langle \beta-2,h_{1},...,h_{%
\mathsf{N}}\right\vert $, i.e. $\tau (\lambda )\in \Sigma _{\mathcal{T}}$\,  
and $|\tau \rangle $ is the corresponding eigenstate of the transfer matrix $\mathcal{T}$. The proof  for  the left $\mathcal{T}$-eigenstates is very similar and we skip it here.

Finally, it is important to point out that the quantum determinant condition \rf{Tot-q-det-tt} is a simple consequence of the following identity
\begin{equation}
\det_{q}K_{+}(\lambda)=-\sinh (2\lambda+\eta )\mathsf{a}_{+}(\lambda+\eta/2|\beta-1 )\mathsf{a}_{+}(-\lambda +\eta/2|\be-1 )
\end{equation}
which can be proven by direct computations when the condition \rf{Triangular-gauge-K+B} is satisfied.
\end{proof}

This theorem implies that each eigenvalue and eigenstate of the transfer matrix can be characterized in terms of a set of parameters  $\{x_{1},...,x_{\mathsf{N}}\}$ satisfying a system of quadratic equations. This system replaces the Bethe equations in this case. More precisely:  
\begin{corollary}
The set $\Sigma _{\mathcal{T}}$ of the eigenvalue functions of the
transfer matrix $\mathcal{T}(\lambda )$ admits the following
characterization:%
\begin{equation}
\Sigma _{\mathcal{T}}= \left\{ \tau (\lambda ):\tau (\lambda
)=f(\lambda )+\sum_{a=1}^{\mathsf{N}}g_{a}(\lambda )x_{a},\text{ \ \ }%
\forall \{x_{1},...,x_{\mathsf{N}}\}\in \Sigma _{T}\right\} ,
\end{equation}%
where we have defined:%
\begin{equation}
g_{a}(\lambda )= \frac{\cosh^2 2\la-\cosh^2\eta}{\cosh^2 2\zeta
_{a}^{(0)}-\cosh^2\eta}\,\prod_{\substack{ b=1  \\ %
b\neq a}}^{\mathsf{N}}\frac{\cosh 2\lambda -\cosh 2\zeta _{b}^{(0)}}{\cosh
2\zeta _{a}^{(0)}-\cosh 2\zeta _{b}^{(0)}}\quad\text{ \ for }a\in \{1,...,\mathsf{%
N}\},
\end{equation}
\begin{align}
f(\lambda )=& \frac{(\cosh 2\la+\cosh\eta)}{2\cosh\eta}\prod_{b=1}%
^{\mathsf{N}}\frac{\cosh 2\lambda -\cosh 2\zeta _{b}^{(0)}}{\cosh \eta
-\cosh 2\zeta _{b}^{(0)}}\tau(\eta/2)  \notag\\
&-(-1)^{\mathsf{N}}\frac{(\cosh 2\la-\cosh\eta)}{2\cosh\eta}\prod_{b=1}%
^{\mathsf{N}}\frac{\cosh 2\lambda -\cosh 2\zeta _{b}^{(0)}}{\cosh \eta
+\cosh 2\zeta _{b}^{(0)}} \tau
(\eta/2+i\pi/2),
\end{align}%
and $\Sigma _{T}$ is the set of the solutions to the following inhomogeneous system of $%
\mathsf{N}$ quadratic equations:%
\begin{equation}
x_{n}\sum_{a=1}^{\mathsf{N}}g_{a}(\zeta _{n}^{(1)})x_{a}+x_{n}f(\zeta
_{n}^{(1)})=q_{n},\text{ \ \ \ }q_{n}= \frac{\det_{q}K_{+}(\xi_n)\det_{q}
\mathcal{U}_{-}(\xi_n)}{\sinh(\eta+2\xi_n)\sinh (\eta-2\xi_n)},\text{ \ \ }\forall n\in \{1,...,%
\mathsf{N}\},  \label{SOV-by-sys-2-eq}
\end{equation}%
in $\mathsf{N}$ parameters $\{x_{1},...,x_{\mathsf{N}}\}$.
\end{corollary}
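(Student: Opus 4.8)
The plan is to obtain the Corollary as a change of coordinates on the characterization already supplied by Theorem~\ref{C:T-eigenstates-}: the interpolation representation (\ref{set-T-1}) becomes a linear parametrization of $\Sigma_{\mathcal{T}}$ by $\mathsf{N}$ free values, and the functional equations (\ref{I-Functional-eq}) become the quadratic relations (\ref{SOV-by-sys-2-eq}).

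First I would note that, by (\ref{set-t}), each $\tau(\lambda)\in\Sigma_{\mathcal{T}}$ is a polynomial of degree $\mathsf{N}+1$ in $\cosh 2\lambda$, hence determined by its values at the $\mathsf{N}+2$ nodes $\cosh 2\zeta_1^{(0)},\dots,\cosh 2\zeta_{\mathsf{N}}^{(0)},\cosh\eta,-\cosh\eta$, which are pairwise distinct under (\ref{E-SOV}). The values at the last two nodes are common to every element of $\Sigma_{\mathcal{T}}$: the identities $\mathcal{T}(\pm\eta/2)=2\cosh\eta\det_q M(0)$ and $\mathcal{T}(\pm(\eta/2-i\pi/2))=-2\cosh\eta\coth\zeta_-\coth\zeta_+\det_q M(i\pi/2)$ fix $\tau(\eta/2)$ and $\tau(\eta/2+i\pi/2)$. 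Reading (\ref{set-T-1}) as a Lagrange interpolation in $\cosh 2\lambda$, the function $g_a(\lambda)$ is exactly the basis polynomial attached to the node $\cosh 2\zeta_a^{(0)}$ (it equals $1$ there and vanishes at the other nodes, including $\pm\cosh\eta$ through the factor $\cosh^2 2\lambda-\cosh^2\eta$), while $f(\lambda)$ collects the two fixed-node contributions weighted by $\tau(\eta/2)$ and $\tau(\eta/2+i\pi/2)$. This turns (\ref{set-T-1}) into $\tau(\lambda)=f(\lambda)+\sum_{a=1}^{\mathsf{N}}g_a(\lambda)x_a$ with $x_a=\tau(\zeta_a^{(0)})$, and establishes a bijection between functions of the form (\ref{set-T-1}) and tuples $(x_1,\dots,x_{\mathsf{N}})\in\mathbb{C}^{\mathsf{N}}$.

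Next I would substitute this parametrization into (\ref{I-Functional-eq}). Since $\tau$ is even and $\zeta_n^{(0)}=\xi_n-\eta/2$, $\zeta_n^{(1)}=\xi_n+\eta/2$, the $n$-th equation reads $\tau(\zeta_n^{(0)})\tau(\zeta_n^{(1)})=\mathbf{A}(\zeta_n^{(1)})\mathbf{A}(-\zeta_n^{(0)})$. Writing $\tau(\zeta_n^{(0)})=x_n$ and $\tau(\zeta_n^{(1)})=f(\zeta_n^{(1)})+\sum_{a=1}^{\mathsf{N}}g_a(\zeta_n^{(1)})x_a$ makes the left-hand side equal to $x_n\sum_{a}g_a(\zeta_n^{(1)})x_a+x_n f(\zeta_n^{(1)})$, which is precisely the quadratic form in (\ref{SOV-by-sys-2-eq}). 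To match the right-hand side I would evaluate the quantum determinant condition (\ref{Tot-q-det-tt}) at $\lambda=\xi_n$, so that $\mathbf{A}(\zeta_n^{(1)})\mathbf{A}(-\zeta_n^{(0)})=\det_q K_+(\xi_n)\det_q\mathcal{U}_-(\xi_n)/[\sinh(2\xi_n+\eta)\sinh(2\xi_n-\eta)]$, and then reconcile this denominator with the one in $q_n$ using $\sinh(\eta-2\xi_n)=-\sinh(2\xi_n-\eta)$. Because Theorem~\ref{C:T-eigenstates-} asserts that (\ref{set-T-1}) together with (\ref{I-Functional-eq}) characterizes $\Sigma_{\mathcal{T}}$ exactly, the equivalence runs in both directions, giving the claimed correspondence $\Sigma_{\mathcal{T}}\leftrightarrow\Sigma_T$.

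The bulk of the argument is bookkeeping. The step that needs genuine care is the bijectivity of the coordinate map: I would check that the two boundary nodes $\pm\cosh\eta$ are distinct from all $\cosh 2\zeta_a^{(0)}$ and that the values $\tau(\eta/2),\tau(\eta/2+i\pi/2)$ are fixed independently of the $x_a$, so that $f$ is a genuine invariant of $\Sigma_{\mathcal{T}}$ and the $x_a$ are unconstrained coordinates on (\ref{set-T-1}). The only other delicate point is the sign tracking in the final identification of $q_n$ through the parity of $\sinh$, which I would settle by direct substitution rather than by inspection.
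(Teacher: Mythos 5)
Your proposal is correct and is exactly the argument the paper intends: the corollary is stated there without a separate proof, as an immediate repackaging of Theorem~\ref{C:T-eigenstates-}, and your reading of (\ref{set-T-1}) as Lagrange interpolation in $\cosh 2\lambda$ with coordinates $x_a=\tau(\zeta_a^{(0)})$, followed by substitution into (\ref{I-Functional-eq}) and identification of the right-hand side via (\ref{Tot-q-det-tt}), is precisely that repackaging. The sign wrinkle you flag in the last step (reconciling $\sinh(\eta-2\xi_n)$ in $q_n$ with $\sinh(2\xi_n-\eta)$ coming from (\ref{Tot-q-det-tt})) is real, but it traces to a typo internal to the paper's own formulas rather than to a defect in your argument; the intended identity is $q_n=\mathbf{A}(\zeta_n^{(1)})\mathbf{A}(-\zeta_n^{(0)})$.
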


\subsection{SOV applicability and Nepomechie's constraint}

Combining together conditions for the existence of SOV basis (\ref{NON-nilp-B-L}-\ref{NON-nilp-C-R}) and the choice of the gauge parameters necessary to construct the eigenstates of the transfer matrix (\ref{Triangular-gauge-K+B}- \ref{Triangular-gauge-K+C}) we obtain the limits of applicability of the SOV method. It happens to be related to the constrain situation where algebraic Bethe ansatz works, more precisely the following theorem holds.

\begin{theorem}
The SOV constructions corresponding to the cases I$_{b}$ and I$_{c}$ fails
to exist if and only if the following condition on the parameters of the
boundary matrices are
satisfied%
\begin{equation}
\left( \mathsf{N}-1\right) \eta =\tau _{-}-\tau _{+}+(-1)^k(\alpha _{-}+\beta _{-})-(-1)^m(\alpha _{+}-\beta
_{+})+i\pi (k+m), \label{Fail-SOV-i}
\end{equation}%
 where $k$ and $m$ are arbitrary integers. Similarly, the SOV constructions corresponding to the cases II$_{b}$ and II$_{c}$ fails to exist if and only if the following condition on
the parameters of the boundary matrices are
satisfied%
\begin{equation}
\left( 1-\mathsf{N}\right) \eta =\tau _{-}-\tau _{+}+(-1)^k(\alpha _{-}+\beta _{-})-(-1)^m(\alpha _{+}-\beta
_{+})+i\pi (k+m).  \label{Fail-SOV-ii}
\end{equation}%
  Then our SOV schema to construct the spectrum (eigenvalues and
eigenstates) of the transfer matrix $\mathcal{T}(\lambda )$ cannot be used if
and only if the conditions (\ref{Fail-SOV-i}) and (\ref%
{Fail-SOV-ii})  are simultaneously satisfied.
\end{theorem}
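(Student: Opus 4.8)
The plan is to read off the existence of each SOV construction as the conjunction of two requirements already established: the gauge constraint that triangularizes $K_+$ (equation \rf{Triangular-gauge-K+B} for the $\mathcal{B}_-$ route and \rf{Triangular-gauge-K+C} for the $\mathcal{C}_-$ route, from Theorems \ref{bsovtau} and \ref{csovtau}), and the non-nilpotency of the relevant pseudo-spectrum that makes the SOV basis exist (conditions \rf{NON-nilp-B-L}--\rf{NON-nilp-C-R} from Theorem \ref{Th1}). The decisive observation is that the gauge constraint of the $\mathcal{B}_-$ route fixes precisely the combination $(\alpha-\beta)\eta$, which is also the only combination entering the $\mathcal{B}_-$ non-nilpotency conditions, while the $\mathcal{C}_-$ gauge constraint fixes $(\alpha+\beta)\eta$, the sole combination in the $\mathcal{C}_-$ conditions. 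Hence, once the gauge is imposed to triangularize $K_+$, whether the SOV basis degenerates is no longer at our disposal through the remaining free gauge parameter: it is determined by the boundary data alone.

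First I would treat the left constructions I$_b$ and I$_c$. For I$_b$ I substitute the gauge value of $(\alpha-\beta)\eta$ from \rf{Triangular-gauge-K+B} into the left non-nilpotency condition \rf{NON-nilp-B-L}, taken at the shifted level $\beta-2$ dictated by Theorem \ref{bsovtau}, and record that the construction ceases to exist exactly when that condition is saturated. Eliminating $\alpha$ and $\beta$ between the two relations, and using $e^{\zeta_\pm}=2\kappa_\pm\sinh(\alpha_\pm+\beta_\pm)$ together with $e^{-\zeta_\pm}=-2\kappa_\pm\sinh(\alpha_\pm-\beta_\pm)$ (both immediate from \rf{alfa-beta}) to rewrite the exponentials appearing in the pseudo-eigenvalues, the gauge parameters drop out and one is left with a relation among $\eta,\tau_\pm,\alpha_\pm,\beta_\pm$ only. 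Carrying out the identical elimination for I$_c$ from \rf{Triangular-gauge-K+C} and \rf{NON-nilp-C-L} at level $\beta+2$, I would check that the two loci coincide; this coincidence is what justifies grouping I$_b$ with I$_c$, and the common locus is \rf{Fail-SOV-i}. Both implications are then immediate: \rf{Fail-SOV-i} forces saturation for every admissible discrete branch of the gauge so that no non-degenerate choice survives, and conversely failure means every gauge-compatible value of $(\alpha\mp\beta)\eta$ sits on the degeneracy locus, i.e.\ \rf{Fail-SOV-i}.

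Repeating the same elimination for the right constructions II$_b$ and II$_c$, now pairing the gauge constraints with the right non-nilpotency conditions \rf{NON-nilp-B-R} and \rf{NON-nilp-C-R}, produces the companion locus \rf{Fail-SOV-ii}; the only structural difference is that the right pseudo-spectra carry $-(\mathsf{N}\pm1)\eta$ where the left ones carry $+(\mathsf{N}\mp1)\eta$, which flips $(\mathsf{N}-1)\eta$ into $(1-\mathsf{N})\eta$. The global statement then follows by a simple logical combination: the SOV solution of the $\mathcal{T}(\lambda)$-spectral problem may be built along any of the four routes, and these routes yield the same eigenvalues and eigenstates, so the method is unavailable only when all of them degenerate at once; since I$_b$,I$_c$ fail precisely on \rf{Fail-SOV-i} and II$_b$,II$_c$ precisely on \rf{Fail-SOV-ii}, the scheme fails if and only if both hold. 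I would also note, as interpretation, that imposing \rf{Fail-SOV-i} and \rf{Fail-SOV-ii} simultaneously forces both a relation between the left and right boundary parameters (Nepomechie's constraint) and a root-of-unity quantization of $\eta$, identifying exactly the regime where the constrained algebraic Bethe ansatz works.

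The main obstacle I anticipate is bookkeeping but genuinely delicate: every one of \rf{NON-nilp-B-L}--\rf{NON-nilp-C-R}, \rf{Triangular-gauge-K+B} and \rf{Triangular-gauge-K+C} arises from an equality of hyperbolic sines, hence is defined only modulo the discrete ambiguity $X=(-1)^k Y+i\pi k$ inherited from $\sinh X=\sinh Y$. One must track the correlation between the sign $(-1)^k$ and the shift $i\pi k$ across each substitution, verify that the free gauge parameter together with the integer branch labels $k$ and $m$ exhausts every way one could try to escape the degeneracy, and confirm that the $\mathcal{B}_-$ and $\mathcal{C}_-$ routes land on one and the same coset. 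Getting this branch arithmetic right — in particular the interplay of the $\pm(\alpha_+-\beta_+)$ sign with the $i\pi(k+m)$ term — is the step that pins down the precise form of \rf{Fail-SOV-i} and \rf{Fail-SOV-ii}, and it is where one must appeal to the explicit triangularized $K_+$ entries of Appendix~A rather than merely to their vanishing loci.
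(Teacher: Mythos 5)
Your proposal is correct and follows essentially the same route as the paper: the paper offers no separate proof of this theorem, presenting it as the direct consequence of combining the non-degeneracy conditions \rf{NON-nilp-B-L}--\rf{NON-nilp-C-R} of Theorem \ref{Th1} with the gauge constraints \rf{Triangular-gauge-K+B}--\rf{Triangular-gauge-K+C} of Theorems \ref{bsovtau} and \ref{csovtau}, which is exactly the elimination you carry out (including the correct $\beta\mp2$ shifts, the observation that each gauge condition fixes precisely the combination $(\alpha\mp\beta)\eta$ entering the corresponding non-degeneracy condition, the coincidence of the I$_b$/I$_c$ and II$_b$/II$_c$ loci, and the final conjunction of the two failure conditions). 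Your closing caveat about the $\sinh$-branch bookkeeping is well placed — that discrete arithmetic is the only genuinely delicate point, and the paper itself treats it no more carefully than you do.
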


\textbf{Remark }In our notations for the boundary
parameters the Nepomechie's constraint reads%
\begin{equation}
\mathsf{k}\eta =\tau _{-}-\tau _{+}+\epsilon _{-}(\alpha _{-}+\beta
_{-})+\epsilon _{+}(\alpha _{+}-\beta _{+}),\text{ }\text{mod}\,2\pi i\text{
and }\mathsf{k}=\mathsf{N}-1+2r\text{ \ with }r\in \mathbb{Z}
\end{equation}%
so that we recover the relations (\ref{Fail-SOV-i})  and %
(\ref{Fail-SOV-ii})  respectively for $r=0$ and $r=1-\mathsf{N}$%
. The previous theorem says that the SOV construction works also when the
boundary parameters satisfy one Nepomechie's condition: if $r\neq 0$ and $%
r\neq 1-\mathsf{N}$ we can use both the left and right SOV construction, if $%
r=0$ we can use the right SOV construction and if $r=1-\mathsf{N}$ we can
use the left SOV construction. The only problem in our SOV schema appears if
the two Nepomechie's conditions for $r=0$ and $r=1-\mathsf{N}$ are
simultaneously satisfied.

Finally,  the  special case when only one of these two conditions is satisfied maybe of particular interest as in this situation there are  two simultaneous descriptions and it is possible to compare  the construction of eigenvalues and eigenstates by the separation of variables and by the algebraic Bethe ansatz. 

\section{Scalar Products}
One of the main reasons of interest in the SOV method is that it seems to provide a possibility to go beyond the spectral analysis constructing dynamic observables of the physical system. The following theorem represents the first step in the solution of this  problem. 
\begin{proposition}
Let $\langle \omega |$ and $|\rho \rangle $ be an arbitrary covector and
vector of separate forms: 
\begin{align}
\langle \omega |& =\sum_{h_{1},...,h_{\mathsf{N}}=0}^{1}\prod_{a=1}^{\mathsf{%
N}}\omega _{a}(\zeta _{a}^{(h_{a})})\prod_{1\leq b<a\leq \mathsf{N}}(\eta
_{a}^{(h_{a})}-\eta _{b}^{(h_{b})})\langle \beta,h_{1},...,h_{\mathsf{N}}|,
\label{Fact-left-SOV} \\
|\rho \rangle & =\sum_{h_{1},...,h_{\mathsf{N}}=0}^{1}\prod_{a=1}^{\mathsf{N%
}}\rho _{a}(\zeta _{a}^{(h_{a})})\prod_{1\leq b<a\leq \mathsf{N}}(\eta
_{a}^{(h_{a})}-\eta _{b}^{(h_{b})})|\beta+2,h_{1},...,h_{\mathsf{N}}\rangle ,
\label{Fact-right-SOV}
\end{align}%
in the $\mathcal{B}$-pseudo-eigenbasis, then the action of $\langle \omega |$
on $|\rho \rangle $ reads:%
\begin{equation}
\langle \omega |\rho \rangle =Z(\beta-2)\det_{\mathsf{N}}||\mathcal{M}_{a,b}^{\left(
\omega ,\rho \right) }||\text{ \ \ with \ }\mathcal{M}_{a,b}^{\left( \omega
,\rho \right) }= \sum_{h=0}^{1}\omega _{a}(\zeta _{a}^{(h)})\rho
_{a}(\zeta _{a}^{(h)})(\eta _{a}^{(h)})^{(b-1)}.  \label{Scalar-p1}
\end{equation}%
The above formula holds, in particular, if the left and right states are
transfer matrix eigenstates.
\end{proposition}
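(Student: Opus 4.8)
The plan is to compute $\langle\omega|\rho\rangle$ directly from the separate forms \rf{Fact-left-SOV} and \rf{Fact-right-SOV}, reducing everything to the pairwise overlaps of left and right SOV basis vectors, which are already known from the change-of-basis proposition \rf{M_jj}. No insertion of the resolution of the identity \rf{Decomp-Id} is needed, since both states are already expressed in the SOV bases: $\langle\omega|\rho\rangle$ is literally a bilinear combination of the basis overlaps. First I would substitute the two expansions and interchange the summations to obtain
\[
\langle\omega|\rho\rangle = \sum_{\mathbf{h},\mathbf{k}} \prod_{a=1}^{\mathsf{N}}\omega_a(\zeta_a^{(h_a)})\rho_a(\zeta_a^{(k_a)}) \Big[\prod_{1\le b<a\le\mathsf{N}}(\eta_a^{(h_a)}-\eta_b^{(h_b)})\Big]\Big[\prod_{1\le b<a\le\mathsf{N}}(\eta_a^{(k_a)}-\eta_b^{(k_b)})\Big] \langle\beta,\mathbf{h}|\beta+2,\mathbf{k}\rangle .
\]
The overlap $\langle\beta,\mathbf{h}|\beta+2,\mathbf{k}\rangle$ is supplied by \rf{M_jj}: it is proportional to $\delta_{\varkappa(\mathbf{h}),\varkappa(\mathbf{k})}$, carries the overall normalization constant $Z(\beta-2)$, and contributes exactly the inverse Vandermonde factor $\prod_{1\le b<a\le\mathsf{N}}(\eta_a^{(h_a)}-\eta_b^{(h_b)})^{-1}$.

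Second, the Kronecker delta collapses the double sum to a single sum over $\mathbf{h}=\mathbf{k}$. The two Vandermonde products from the states combine into a single square, one factor of which is cancelled by the inverse Vandermonde coming from the overlap, leaving precisely one copy:
\[
\langle\omega|\rho\rangle = Z(\beta-2)\sum_{\mathbf{h}} \prod_{a=1}^{\mathsf{N}}\omega_a(\zeta_a^{(h_a)})\rho_a(\zeta_a^{(h_a)}) \prod_{1\le b<a\le\mathsf{N}}(\eta_a^{(h_a)}-\eta_b^{(h_b)}) .
\]

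Third, I would identify this $\mathbf{h}$-sum with a determinant. Writing the surviving product as the Vandermonde determinant $\prod_{1\le b<a\le\mathsf{N}}(\eta_a^{(h_a)}-\eta_b^{(h_b)}) = \det_{\mathsf{N}}\big((\eta_a^{(h_a)})^{\,b-1}\big)_{a,b}$, and reading the $a$-th row of $\mathcal{M}^{(\omega,\rho)}$ as $\sum_{h=0}^{1}\omega_a(\zeta_a^{(h)})\rho_a(\zeta_a^{(h)})\big(1,\eta_a^{(h)},\dots,(\eta_a^{(h)})^{\mathsf{N}-1}\big)$, multilinearity of the determinant in its rows expands $\det_{\mathsf{N}}\|\mathcal{M}^{(\omega,\rho)}\|$ into $\sum_{\mathbf{h}}\prod_{a}\omega_a(\zeta_a^{(h_a)})\rho_a(\zeta_a^{(h_a)})\det_{\mathsf{N}}\big((\eta_a^{(h_a)})^{b-1}\big)$, which is exactly the single sum above. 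This yields \rf{Scalar-p1}.

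Finally, the last assertion is immediate once the bulk of the argument is in place: the transfer matrix eigenvectors \rf{eigenT-r-D} and \rf{eigenT-l-D} are of exactly the separate forms \rf{Fact-right-SOV} and \rf{Fact-left-SOV} (after matching the gauge index $\beta$), with the single-variable factors given by $Q_\tau$ and $\bar Q_{\tau'}$, so the determinant formula applies verbatim. I do not expect a serious analytic obstacle here: the essential content has already been packaged into the orthogonality relation \rf{M_jj}, and this proposition is a clean combinatorial consequence of it. The one step requiring genuine care is the bookkeeping of the Vandermonde powers --- verifying that the squared product coming from the two separate states, divided by the single inverse product coming from the overlap \rf{M_jj}, leaves precisely one Vandermonde factor, so that multilinearity produces an honest $\mathsf{N}\times\mathsf{N}$ determinant and not a more complicated object.
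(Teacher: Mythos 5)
Your proposal is correct and follows essentially the same route as the paper's own proof: both plug the separate-form expansions into the pairing, invoke the orthogonality relation \rf{M_jj} to collapse the double sum and cancel one Vandermonde factor, and then use multilinearity of the determinant to recognize the surviving sum $\sum_{\mathbf{h}}\prod_a\omega_a(\zeta_a^{(h_a)})\rho_a(\zeta_a^{(h_a)})\,V(\eta_1^{(h_1)},\dots,\eta_{\mathsf{N}}^{(h_{\mathsf{N}})})$ as $\det_{\mathsf{N}}\|\mathcal{M}^{(\omega,\rho)}\|$. Your explicit bookkeeping of the squared Vandermonde divided by the inverse Vandermonde, and your remark on the eigenstates \rf{eigenT-r-D}, \rf{eigenT-l-D} being of separate form, merely spell out steps the paper leaves implicit.
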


\begin{proof}
The formula (\ref{M_jj})  and the SOV-decomposition of the
states $\langle \omega |$ and $|\rho \rangle $ implies:%
\begin{equation}
\langle \omega |\rho \rangle =Z(\be-2)\sum_{h_{1},...,h_{\mathsf{N}}=0}^{1}V(\eta
_{1}^{(h_{1})},...,\eta _{\mathsf{N}}^{(h_{\mathsf{N}})})\prod_{a=1}^{%
\mathsf{N}}\omega _{a}(\zeta _{a}^{(h_{a})})\rho _{a}(\zeta _{a}^{(h_{a})}),
\end{equation}%
where 
$$V(x_{1},...,x_{\mathsf{N}})\equiv \prod_{1\leq b<a\leq \mathsf{N}%
}(x_{a}-x_{b})$$ 
is the Vandermonde determinant which due to the multilinearity
of the determinant implies (\ref{Scalar-p1}).
\end{proof}
The normalization coefficient $Z(\beta-2)$ is an artifact of the gauge transformation, for any interesting quantity (form-factors, correlation functions) represented as a ratio of two scalar products this constant will disappear.

\section*{Conclusion and outlook}
We have shown in this paper that the separation of variables can be applied to construct the eigenstates of the quantum spin chains with the most general boundary terms. These states are characterized by the roots of a system of $\mathsf{N}$ quadratic equations which replaces the Bethe equations in this general case. We also compute scalar products (up to an unphysical normalization constant). This representation provides a possibility to compute explicitly form factors and correlation functions.

Furthermore a very similar SOV analysis can be developed for the spectral  problem of transfer matrices associated to representations of the 8-vertex reflection algebra \cite{FaldN13} corresponding to the most general open XYZ spin chains.

After this paper was completed we became aware of  the recent and interesting results reported in \cite{CaoYSW13-4}. The authors construct the $T$-$Q$ functional equations for the spin chains with non-diagonal 
boundaries\footnote{See also the recent series of papers \cite{CaoYSW13-1,CaoYSW13-2,CaoYSW13-3} for the application of the same method to different models.}  and thus they obtain the transfer matrix eigenvalues.
 An important achievement of  \cite{CaoYSW13-4} is that the equation of
type \rf{I-Functional-eq} are associated to a system of Bethe equations leading to a more traditional analysis of the eigenvalue problem. 
 It would be interesting to establish a connection between our SOV construction and this approach, in particular, with the new generalized $T$-$Q$ relation.

\section*{Acknowledgments}
The authors would like to thank J.M. Maillet, V. Terras and K. Kozlowski for discussions. G.N. is supported by National Science
Foundation grants PHY-0969739 and is grateful to the YITP Institute
of Stony Brook, where he had the opportunity to develop his research
programs and the privilege to have stimulating discussions with B. M. McCoy.
N.K, is supported by ANR grant  ANR-10-BLAN-0120-04-DIADEMS.
S.F. is supported by the Burgundy region.
G.N. would like to thank the Mathematical Physics Group at IMB of the Dijon
University for their hospitality.
G.N and N.K  would like to thank the Theoretical Physics Group of the Laboratory of
Physics at ENS-Lyon for hospitality. N.K. is grateful to the LPTHE laboratory (University Paris 6) for hospitality.

\appendix
\section{Gauge transformed boundary matrices}

We give here the explicit form of the gauge transformed boundary matrices $K_{+}^{(L)}(\lambda |\beta)$ and $K_{+}^{(R)}(\lambda |\beta)$
\begin{align}
K_{+}^{(L)}(\lambda |&\beta)_{11} = \frac {1}{\sinh \beta \eta \sinh
\zeta _{+}}\Big[ \sinh \zeta _{+}\cosh (\lambda +\eta /2)\sinh
(\lambda -\eta /2+\beta \eta)   \notag \\
&  -\left( \cosh \zeta _{+}\sinh (\lambda +\eta /2)\cosh (\lambda
-\eta /2+\beta \eta) +\kappa _{+}\sinh (2\lambda +\eta )\sinh (\tau
_{+}+(\alpha +2)\eta) \right) \Big] \\
K_{+}^{(L)}(\lambda |&\beta)_{12} = \frac{e^{(\beta +1)\eta }\sinh
(2\lambda +\eta )\left[ \kappa _{+}\sinh ((\beta -1-\alpha )\eta-\tau _{+})
-e^{-\zeta _{+}}/2\right] }{\sinh \beta \eta \sinh \zeta _{+}} \\
K_{+}^{(L)}(\lambda |&\beta)_{21} = \frac{e^{-(\beta-1 )\eta }\sinh
(2\lambda +\eta )\left[ \kappa _{+}\sinh ((\beta +\alpha +1)\eta+\tau _{+})
+e^{-\zeta _{+}}/2\right] }{\sinh \beta \eta \sinh \zeta _{+}} \\
K_{+}^{(L)}(\lambda |&\beta)_{22} = \frac{1}{ \sinh (\beta +1)\eta \sinh
\zeta _{+}}\Big[ \sinh \zeta _{+}\cosh (\lambda +\eta /2)\sinh
(-\lambda +\eta /2+\beta \eta)   \notag \\
& -\left( \cosh \zeta _{+}\sinh (\lambda +\eta /2)\cosh (-\lambda
+\eta /2+\beta \eta) +\kappa _{+}\sinh (2\lambda +\eta )\sinh ((\alpha +2)\eta+\tau
_{+}) \right) \Big]
\end{align}%
and%
\begin{align}
K_{+}^{(R)}(\lambda |\beta)_{11}& = \frac{ e^{\zeta _{+}}\sinh
(\beta -1)\eta -e^{-\zeta _{+}}\sinh (2\lambda +\beta\eta )-2\kappa
_{+}\sinh (2\lambda +\eta )\sinh (\tau _{+}+\alpha \eta )}{2\sinh
\beta \eta \sinh \zeta _{+}} \\
K_{+}^{(R)}(\lambda |\beta)_{12}& = e^{-2\eta }K_{+}^{(L)}(\lambda |\beta)_{12},%
\text{ \ }K_{+}^{(R)}(\lambda |\beta)_{21}= e^{-2\eta }K_{+}^{(L)}(\lambda
|\beta)_{21} \\
K_{+}^{(R)}(\lambda |\beta)_{22}& = \frac{ e^{-\zeta _{+}}\sinh
(2\lambda -\beta\eta )+e^{\zeta _{+}}\sinh (\beta +1)\eta +2\kappa
_{+}\sinh (2\lambda +\eta )\sinh (\tau _{+}+\alpha \eta)  }{2\sinh
\beta \eta \sinh \zeta _{+}}.
\end{align}

\end{document}